\documentclass[submission,copyright,creativecommons]{eptcs}

\title{Tensors, !-graphs, and non-commutative quantum structures}
\author{Aleks Kissinger
\institute{University of Oxford}
\email{aleks.kissinger@cs.ox.ac.uk}
\and
David Quick
\institute{University of Oxford}
\email{david.quick@cs.ox.ac.uk}}

\usepackage{bussproofs}
\usepackage{tikzfig}
\tikzstyle{every picture}=[baseline=-0.25em,scale=0.6]
\tikzstyle{dotpic}=[scale=0.6]
\tikzstyle{diredges}=[every to/.style={diredge}]
\tikzstyle{dot graph}=[shorten <=-0.1mm,shorten >=-0.1mm,scale=0.6]
\tikzstyle{digraph}=[-latex]
\tikzstyle{plot point}=[circle,fill=black,minimum width=2mm,inner sep=0]
\tikzstyle{string graph}=[scale=0.6]
\tikzstyle{sg diredge}=[-stealth]
\tikzstyle{rewrite edge}=[-open triangle 45]
\tikzstyle{sg bold diredge}=[-stealth,thick,shorten >=-1pt]
%\tikzstyle{sg bold diredge}=[postaction=decorate,very thick,decoration={markings,mark=at position 1 with {\arrow [scale=0.5]{stealth}}}]
\tikzstyle{sg vertex}=[circle,minimum width=2.2mm,fill=white,draw=black,inner sep=0mm]
\tikzstyle{labelled sg vertex}=[circle,minimum width=7mm,fill=white,draw=black,inner sep=0mm]
\tikzstyle{sg grey vertex}=[sg vertex,fill=gray!30!white]
\tikzstyle{sg black vertex}=[sg vertex,fill=black]
\tikzstyle{sg bold vertex}=[circle,minimum width=2.2mm,fill=white,draw=black,very thick,inner sep=0mm]
\tikzstyle{sg wire vertex}=[circle,minimum width=1mm,fill=black,inner sep=0mm]

\tikzstyle{tick vertex}=[rectangle,fill=black,minimum height=1mm,minimum width=2.5mm,inner sep=0mm]

% create a white box of the given tikz size

\tikzstyle{braceedge}=[decorate,decoration={brace,amplitude=2mm,raise=-1mm}]
\tikzstyle{small braceedge}=[decorate,decoration={brace,amplitude=1mm,raise=-1mm}]
\tikzstyle{left hook arrow}=[left hook-latex]
\tikzstyle{right hook arrow}=[right hook-latex]

% ================
% = GHZ/W STYLES =
% ================

\tikzstyle{dot}=[inner sep=0.7mm,minimum width=0pt,minimum height=0pt,fill=black,draw=black,shape=circle]
\tikzstyle{white dot}=[dot,fill=white]
\tikzstyle{alt white dot}=[white dot,label={[xshift=2.9mm,yshift=-0.1mm]left:$\cdot$}]
\tikzstyle{gray dot}=[dot,fill=gray!50]
\tikzstyle{box vertex}=[draw=black,rectangle]
\tikzstyle{whitebg}=[fill=white,inner sep=2pt]
\tikzstyle{graph state vertex}=[sg vertex,fill=black]

\tikzstyle{wide point}=[fill=white,draw=black,shape=isosceles triangle,shape border rotate=90,isosceles triangle stretches=true,inner sep=1pt,minimum width=1.5cm,minimum height=5mm]
\tikzstyle{wide copoint}=[fill=white,draw=black,shape=isosceles triangle,shape border rotate=-90,isosceles triangle stretches=true,inner sep=1pt,minimum width=1.5cm,minimum height=5mm]
\tikzstyle{symm}=[ultra thick,shorten <=-1mm,shorten >=-1mm]

% ====================================
% = GENERAL MONOIDAL CATEGORY STYLES =
% ====================================

\tikzstyle{square box}=[rectangle,fill=white,draw=black,minimum height=6mm,minimum width=6mm]
\tikzstyle{square gray box}=[rectangle,fill=gray!30,draw=black,minimum height=6mm,minimum width=6mm]
\tikzstyle{point}=[regular polygon,regular polygon sides=3,draw=black,scale=0.75,inner sep=-0.5pt,minimum width=7mm,fill=white]
\tikzstyle{copoint}=[point,regular polygon rotate=180,fill=white]
\tikzstyle{gray point}=[point,fill=gray!40!white]
\tikzstyle{gray copoint}=[copoint,fill=gray!40!white]

% =====================
% = OPEN-GRAPH STYLES =
% =====================

\tikzstyle{open graph}=[baseline=-0.25em]
\tikzstyle{greybg}=[background rectangle/.style={fill=black!5,draw=black!30,rounded corners=1ex}, show background rectangle]

\tikzstyle{edge point}=[circle,minimum width=1mm,fill=black,inner sep=0mm]
\tikzstyle{vertex point}=[circle,minimum width=2.2mm,fill=white,draw=black,inner sep=0mm]
\tikzstyle{gray vertex point}=[circle,minimum width=2.2mm,fill=gray!30!white,draw=black,inner sep=0mm]
\tikzstyle{edge label}=[inner sep=2pt, font=\small]
\tikzstyle{on edge label}=[fill=white, font=\footnotesize, inner sep=1 pt]

\newcommand{\edgearrow}{{\arrow[black]{>}}}
\newcommand{\edgetick}{{\arrow[black,scale=0.7,very thick]{|}}}
% \pgfkeys{/tikz/.cd, markat/.store in=\markat, markwith/.store
% in=\markwithsym}
\tikzstyle{diredge}=[postaction=decorate,decoration={markings, mark=at position 0.55 with \edgearrow}]
\tikzstyle{medium diredge}=[postaction=decorate,decoration={markings, mark=at position 0.75 with \edgearrow}]
\tikzstyle{short diredge}=[->]
\tikzstyle{halfedge}=[-)]
\tikzstyle{other halfedge}=[(-]
\tikzstyle{freeedge}=[(-)]
\tikzstyle{white edge}=[line width=5pt,white]
\tikzstyle{tick}=[postaction=decorate,decoration={markings, mark=at position 0.5 with \edgetick}]
\tikzstyle{small map edge}=[|-latex, gray!60!blue, shorten <=0.9mm, shorten >=0.5mm]
\tikzstyle{thick dashed edge}=[very thick,dashed,gray!40]
\tikzstyle{dashed edge}=[densely dotted,thick]
\tikzstyle{map edge}=[|-latex,very thick, gray!40, shorten <=1mm, shorten >=0.5mm]
\tikzstyle{tickedge}=[postaction=decorate,
  decoration={markings, mark=at position 0.5 with \edgetick}]
% \tikzstyle{dirtickedge}=[postaction=decorate,
%   decoration={markings, mark=at position 0.3 with \edgearrow},
%   decoration={markings, mark=at position 0.5 with \edgetick},
%   decoration={markings, mark=at position 0.8 with \edgearrow}]
\tikzstyle{dirtickedge}=[postaction=decorate,
  decoration={markings, mark=at position 0.5 with \edgetick},
  decoration={markings, mark=at position 0.85 with \edgearrow}]
\tikzstyle{dirdoubletickedge}=[postaction=decorate,
  decoration={markings, mark=at position 0.4 with \edgetick},
  decoration={markings, mark=at position 0.6 with \edgetick},
  decoration={markings, mark=at position 0.85 with \edgearrow}]

% ========================
% = COMMUTATIVE DIAGRAMS =
% ========================

\tikzstyle{arrs}=[-latex,font=\small,auto]
\tikzstyle{arrow plain}=[arrs]
\tikzstyle{arrow dashed}=[dashed,arrs]
\tikzstyle{arrow bold}=[very thick,arrs]
\tikzstyle{arrow hide}=[draw=white!0,-]
\tikzstyle{arrow reverse}=[latex-]
\tikzstyle{cdnode}=[]

% ==================
% = SMALL VERSIONS =
% ==================

% \tikzstyle{small dotpic}=[dotpic,scale=0.6]
% \tikzstyle{small}=[inner sep=0.4mm,minimum height=0mm,minimum width=0mm]
% \tikzstyle{small dot}=[small,dot]
% \tikzstyle{small white dot}=[small dot,fill=white]
% \tikzstyle{small gray dot}=[small dot,fill=gray!50]

% ========
% = MISC =
% ========

\tikzstyle{cnot}=[fill=white,shape=circle,inner sep=-1.4pt]
% \tikzstyle{tick}=[sloped,rotate=90,font=\small\bf,xshift=0.07mm]
\tikzstyle{bang box}=[draw=black,dashed,minimum height=12mm,minimum width=12mm,fill=gray!20]
\tikzstyle{wire label}=[font=\footnotesize, auto]

\usepackage{noncommgraph}

\newcommand{\showproofs}{1}

\usepackage{amsmath}
\usepackage{amssymb}
\usepackage{stmaryrd}
\usepackage{amsthm}
\usepackage{xspace}
\usepackage{hyperref}
\usepackage{graphicx}
\graphicspath{{figures/}}

\usepackage{algpseudocode}
\usepackage{algorithm}
\algblock[Switch]{Switch}{EndSwitch}
\algblock[Case]{Case}{EndCase}
\algtext*{EndCase}

\newtheorem{theorem}{Theorem}[section]
\newtheorem*{theorem*}{Theorem}

\newtheorem{lemma}[theorem]{Lemma}
\newtheorem{corollary}[theorem]{Corollary}
\theoremstyle{definition}
\theoremstyle{definition}
\theoremstyle{definition}\newtheorem{definition}[theorem]{Definition}
\theoremstyle{definition}
\theoremstyle{definition}\newtheorem{remark}[theorem]{Remark}
\theoremstyle{definition}
\theoremstyle{definition}
\theoremstyle{definition}
\theoremstyle{definition}
\theoremstyle{definition}

% ========================
% = COMMUTATIVE DIAGRAMS =
% ========================

\tikzstyle{cdiag}=[matrix of math nodes, row sep=3em, column sep=3em, text height=1.5ex, text depth=0.25ex,inner sep=0.5em]
\tikzstyle{arrow above}=[transform canvas={yshift=0.5ex}]
\tikzstyle{arrow below}=[transform canvas={yshift=-0.5ex}]

% commands for putting pushout/pullback brackets on commutative diags

% \newcommand{\crun}[5]{
% \ensuremath{#1 \overset{#2}{\longrightarrow} #3 \overset{#4}{\longrightarrow} #5}
% }

\def\bR{\begin{color}{red}} 
\def\bB{\begin{color}{blue}}
\def\bM{\begin{color}{magenta}}
\def\bC{\begin{color}{cyan}}
\def\bW{\begin{color}{white}}
\def\bBl{\begin{color}{black}} 
\def\bG{\begin{color}{green}}
\def\bY{\begin{color}{yellow}}
\def\bDG{\begin{color}{green!70!black}}
\def\e{\end{color}}

\usepackage{wrapfig}
  {\gdef\scalefactor{1.0}\begin{center}\proofSkipAmount \leavevmode}%
  {\scalebox{\scalefactor}{\DisplayProof}\proofSkipAmount \end{center} }

\renewcommand{\showproofs}{0}

\begin{document}

\maketitle

\begin{abstract}
	Categorical quantum mechanics (CQM) and the theory of quantum groups rely heavily on the use of structures that have both an algebraic and co-algebraic component, making them well-suited for manipulation using diagrammatic techniques. Diagrams allow us to easily form complex compositions of (co)algebraic structures, and prove their equality via graph rewriting. One of the biggest challenges in going beyond simple rewriting-based proofs is designing a graphical language that is expressive enough to prove interesting properties (e.g. normal form results) about not just single diagrams, but entire families of diagrams. One candidate is the language of \textit{!-graphs}, which consist of graphs with certain subgraphs marked with boxes (called !-boxes) that can be repeated any number of times. New !-graph equations can then be proved using a powerful technique called \textit{!-box induction}. However, previously this technique only applied to commutative (or cocommutative) algebraic structures, severely limiting its applications in some parts of CQM and (especially) quantum groups. In this paper, we fix this shortcoming by offering a new semantics for \textit{non-commutative} !-graphs using an enriched version of Penrose's abstract tensor notation.
\end{abstract}

\section{Introduction}\label{sec:intro}

\textit{Diagrammatic theories} give us a way to study a wide variety of algebraic and coalgebraic structures in monoidal categories. They consist of two parts: a \textit{signature} $\Sigma$ and a set of \textit{diagram equations} $E$. The signature consists of a set of objects $\{ A, B, \ldots \}$ along with a set of generating morphisms with input and output arities formed from combining objects with $\otimes$ and $I$. For example, the signature of a Frobenius algebra consists of four morphisms: $(\mu : A \otimes A \to A,\ \eta : I \to A,\ \delta : A \to A \otimes A,\ \epsilon : A \to I)$, or, written diagrammatically:
\[ \Sigma \ =\  %
\beginpgfgraphicnamed{frob_sig}
\begin{tikzpicture}
	\begin{pgfonlayer}{nodelayer}
		\node [style=none] (0) at (-5.75, 0) {$\bigg\{$};
		\node [style=white dot] (1) at (-4, 0) {};
		\node [style=white dot] (2) at (-1, -0.25) {};
		\node [style=wire] (3) at (-4.5, -0.75) {};
		\node [style=wire] (4) at (-3.5, -0.75) {};
		\node [style=wire] (5) at (-4, 0.75) {};
		\node [style=wire] (6) at (-1, 0.75) {};
		\node [style=none] (7) at (-2.5, -0.25) {$,$};
		\node [style=white dot] (8) at (5, 0.25) {};
		\node [style=wire] (9) at (1.5, 0.75) {};
		\node [style=wire] (10) at (2.5, 0.75) {};
		\node [style=none] (11) at (3.5, -0.25) {$,$};
		\node [style=white dot] (12) at (2, 0) {};
		\node [style=wire] (13) at (2, -0.75) {};
		\node [style=wire] (14) at (5, -0.75) {};
		\node [style=none] (15) at (0.5, -0.25) {$,$};
		\node [style=none] (16) at (6.5, 0) {$\bigg\}$};
	\end{pgfonlayer}
	\begin{pgfonlayer}{edgelayer}
		\draw [style=directed, bend right=15, looseness=1.00] (4) to (1);
		\draw [style=directed] (1) to (5);
		\draw [style=directed, bend left=15, looseness=1.00] (3) to (1);
		\draw [style=directed] (2) to (6);
		\draw [style=directed, bend right=15, looseness=1.00] (12) to (10);
		\draw [style=directed] (13) to (12);
		\draw [style=directed, bend left=15, looseness=1.00] (12) to (9);
		\draw [style=directed] (14) to (8);
	\end{pgfonlayer}
\end{tikzpicture}}
\endpgfgraphicnamed \]
Then, $E$ is a set of equations between morphisms built from these generators, which we can picture as equations between string diagrams. For example, the theory of commutative Frobenius algebras contains the (co)associativity, (co)unit, (co)commutativity and Frobenius equations: \\
\begin{equation}\label{eqn:frob-laws}
\beginpgfgraphicnamed{comm_frob_eqns}
\begin{tikzpicture}
	\begin{pgfonlayer}{nodelayer}
		\node [style=white dot] (0) at (-10, 2) {};
		\node [style=wire] (1) at (-11, 0.5) {};
		\node [style=wire] (2) at (-10, 2.75) {};
		\node [style=wire] (3) at (-10, 0.5) {};
		\node [style=wire] (4) at (-9, 0.5) {};
		\node [style=white dot] (5) at (-9.5, 1.25) {};
		\node [style=white dot] (6) at (-7.5, 1.25) {};
		\node [style=wire] (7) at (-8, 0.5) {};
		\node [style=wire] (8) at (-6, 0.5) {};
		\node [style=white dot] (9) at (-7, 2) {};
		\node [style=wire] (10) at (-7, 2.75) {};
		\node [style=wire] (11) at (-7, 0.5) {};
		\node [style=none] (12) at (-8.5, 1.5) {$=$};
		\node [style=wire] (13) at (-3.5, 0.75) {};
		\node [style=white dot] (14) at (-4, 1.5) {};
		\node [style=wire] (15) at (-4, 2.25) {};
		\node [style=white dot] (16) at (-4.5, 0.75) {};
		\node [style=wire] (17) at (-2.25, 0.75) {};
		\node [style=wire] (18) at (-2.25, 2.25) {};
		\node [style=none] (19) at (-2.75, 1.5) {$=$};
		\node [style=white dot] (20) at (-7, -2) {};
		\node [style=wire] (21) at (-11, -0.5) {};
		\node [style=wire] (22) at (-7, -2.75) {};
		\node [style=white dot] (23) at (-4.5, -0.75) {};
		\node [style=none] (24) at (-2.75, -1.5) {$=$};
		\node [style=white dot] (25) at (-9.5, -1.25) {};
		\node [style=wire] (26) at (-7, -0.5) {};
		\node [style=white dot] (27) at (-7.5, -1.25) {};
		\node [style=wire] (28) at (-8, -0.5) {};
		\node [style=wire] (29) at (-4, -2.25) {};
		\node [style=none] (30) at (-8.5, -1.5) {$=$};
		\node [style=wire] (31) at (-9, -0.5) {};
		\node [style=wire] (32) at (-10, -0.5) {};
		\node [style=wire] (33) at (-6, -0.5) {};
		\node [style=wire] (34) at (-2.25, -2.25) {};
		\node [style=white dot] (35) at (-4, -1.5) {};
		\node [style=white dot] (36) at (-10, -2) {};
		\node [style=wire] (37) at (-3.5, -0.75) {};
		\node [style=wire] (38) at (-2.25, -0.75) {};
		\node [style=wire] (39) at (-10, -2.75) {};
		\node [style=white dot] (40) at (7.75, 0.5) {};
		\node [style=white dot] (41) at (7.75, -0.5) {};
		\node [style=wire] (42) at (7.25, 1.25) {};
		\node [style=wire] (43) at (8.25, -1.25) {};
		\node [style=wire] (44) at (7.25, -1.25) {};
		\node [style=wire] (45) at (8.25, 1.25) {};
		\node [style=wire] (46) at (5.25, 1.25) {};
		\node [style=wire] (47) at (4.25, -1.25) {};
		\node [style=white dot] (48) at (5.25, 0.5) {};
		\node [style=wire] (49) at (5.75, -1.25) {};
		\node [style=wire] (50) at (3.75, 1.25) {};
		\node [style=white dot] (51) at (4.25, -0.5) {};
		\node [style=none] (52) at (6.5, 0) {$=$};
		\node [style=white dot] (53) at (-0.25, -1.5) {};
		\node [style=wire] (54) at (0.25, -0.75) {};
		\node [style=wire] (55) at (-0.75, -0.75) {};
		\node [style=wire] (56) at (-0.25, -2.25) {};
		\node [style=wire] (57) at (1.5, -0.5) {};
		\node [style=white dot] (58) at (2, -1.5) {};
		\node [style=wire] (59) at (2.5, -0.5) {};
		\node [style=wire] (60) at (2, -2.25) {};
		\node [style=none] (61) at (1, -1.5) {$=$};
		\node [style=wire] (62) at (1.5, 0.5) {};
		\node [style=white dot] (63) at (-0.25, 1.5) {};
		\node [style=white dot] (64) at (2, 1.5) {};
		\node [style=wire] (65) at (2.5, 0.5) {};
		\node [style=wire] (66) at (2, 2.25) {};
		\node [style=wire] (67) at (0.25, 0.75) {};
		\node [style=none] (68) at (1, 1.5) {$=$};
		\node [style=wire] (69) at (-0.75, 0.75) {};
		\node [style=wire] (70) at (-0.25, 2.25) {};
	\end{pgfonlayer}
	\begin{pgfonlayer}{edgelayer}
		\draw [style=directed] (0) to (2);
		\draw [style=directed, bend left=15, looseness=1.00] (1) to (0);
		\draw [style=directed, bend right=15, looseness=1.00] (4) to (5);
		\draw [style=directed, bend left=15, looseness=1.00] (3) to (5);
		\draw [style=directed, bend right=15, looseness=1.00] (5) to (0);
		\draw [style=directed] (9) to (10);
		\draw [style=directed, bend right=15, looseness=1.00] (8) to (9);
		\draw [style=directed, bend left=15, looseness=1.00] (7) to (6);
		\draw [style=directed, bend right=15, looseness=1.00] (11) to (6);
		\draw [style=directed, bend left=15, looseness=1.00] (6) to (9);
		\draw [style=directed] (14) to (15);
		\draw [style=directed, bend right=15, looseness=1.00] (13) to (14);
		\draw [style=directed, bend left=15, looseness=1.00] (16) to (14);
		\draw [style=directed] (17) to (18);
		\draw [style=directed] (39) to (36);
		\draw [style=directed, bend left=15, looseness=1.00] (36) to (21);
		\draw [style=directed, bend right=15, looseness=1.00] (25) to (31);
		\draw [style=directed, bend left=15, looseness=1.00] (25) to (32);
		\draw [style=directed, bend right=15, looseness=1.00] (36) to (25);
		\draw [style=directed] (22) to (20);
		\draw [style=directed, bend right=15, looseness=1.00] (20) to (33);
		\draw [style=directed, bend left=15, looseness=1.00] (27) to (28);
		\draw [style=directed, bend right=15, looseness=1.00] (27) to (26);
		\draw [style=directed, bend left=15, looseness=1.00] (20) to (27);
		\draw [style=directed] (29) to (35);
		\draw [style=directed, bend right=15, looseness=1.00] (35) to (37);
		\draw [style=directed, bend left=15, looseness=1.00] (35) to (23);
		\draw [style=directed] (34) to (38);
		\draw [style=directed] (43) to (41);
		\draw [style=directed] (41) to (40);
		\draw [style=directed] (40) to (45);
		\draw [style=directed] (44) to (41);
		\draw [style=directed] (40) to (42);
		\draw [style=directed] (47) to (51);
		\draw [style=directed] (51) to (48);
		\draw [style=directed] (48) to (46);
		\draw [style=directed, bend left=15, looseness=1.00] (51) to (50);
		\draw [style=directed, bend right=15, looseness=1.00] (49) to (48);
		\draw [style=directed, bend left=15, looseness=1.00] (53) to (55);
		\draw [style=directed, bend right=15, looseness=1.00] (53) to (54);
		\draw [style=directed] (56) to (53);
		\draw [style=directed, in=-90, out=135, looseness=1.50] (58) to (59);
		\draw [style=directed, in=-90, out=45, looseness=1.50] (58) to (57);
		\draw [style=directed] (60) to (58);
		\draw [style=directed, bend left=15, looseness=1.00] (69) to (63);
		\draw [style=directed, bend right=15, looseness=1.00] (67) to (63);
		\draw [style=directed] (63) to (70);
		\draw [style=directed, in=-135, out=90, looseness=1.50] (65) to (64);
		\draw [style=directed, in=-45, out=90, looseness=1.50] (62) to (64);
		\draw [style=directed] (64) to (66);
	\end{pgfonlayer}
\end{tikzpicture}}
\endpgfgraphicnamed
\end{equation}
A \textit{model} of $(\Sigma, E)$ in a (symmetric, traced, or compact closed) monoidal category $\mathcal C$ assigns a morphism to each generator in $\Sigma$ such that all equations in $E$ hold.

\begin{remark}
	Many familiar algebraic constructions arise as special cases of this setup. For instance, any linear `term-like' algebraic theory (i.e. where free variables occur precisely once on the LHS and RHS of every equation) can be presented this way. Also, if we restrict to equations in $E$ that are directed acyclic, we obtain presentations of PROPs (or coloured PROPs in the multi-sorted case). In that case, models of $(\Sigma, E)$ in $\mathcal C$ are in 1-to-1 correspondence with strong monoidal functors from the presented PROP into $\mathcal C$.
\end{remark}

This style of algebraic theory works well when generators have fixed, finite arity. However, it is often possible to find a much more elegant presentation of a theory if we allow the arity of our generators to vary. For instance, commutative Frobenius algebras can be alternatively presented using a single variable-arity generator sometimes called a `spider', along with just two equations.
\[
\Sigma \ =\ \left\{ \ %
\beginpgfgraphicnamed{spider-gen}
\begin{tikzpicture}
	\begin{pgfonlayer}{nodelayer}
		\node [style=white dot] (0) at (0, 0) {};
		\node [style=wire] (1) at (-0.75, -1) {};
		\node [style=wire] (2) at (0.75, -1) {};
		\node [style=wire] (3) at (-0.75, 1) {};
		\node [style=wire] (4) at (0.75, 1) {};
		\node [style=none] (5) at (0, -0.75) {...};
		\node [style=none] (6) at (0, 0.75) {...};
	\end{pgfonlayer}
	\begin{pgfonlayer}{edgelayer}
		\draw [style=directed, bend right=15, looseness=1.00] (2) to (0);
		\draw [style=directed, bend right=15, looseness=1.00] (0) to (4);
		\draw [style=directed, bend left=15, looseness=1.00] (1) to (0);
		\draw [style=directed, bend left=15, looseness=1.25] (0) to (3);
	\end{pgfonlayer}
\end{tikzpicture}}
\endpgfgraphicnamed \ \right\} \qquad
E \ =\ \left\{ \ %
\beginpgfgraphicnamed{spider_merge}
\begin{tikzpicture}
	\begin{pgfonlayer}{nodelayer}
		\node [style=white dot] (0) at (-3.75, -0.25) {};
		\node [style=wire] (1) at (-4.5, -1.25) {};
		\node [style=wire] (2) at (-3, -1.25) {};
		\node [style=wire] (3) at (-4.25, 0.75) {};
		\node [style=wire] (4) at (-3.25, 0.75) {};
		\node [style=none] (5) at (-3.75, -1) {...};
		\node [style=none] (6) at (-3.75, 0.5) {...};
		\node [style=wire] (7) at (-2.5, 1.25) {};
		\node [style=none] (8) at (-1.75, 1) {...};
		\node [style=wire] (9) at (-1.25, -0.75) {};
		\node [style=wire] (10) at (-2.25, -0.75) {};
		\node [style=wire] (11) at (-1, 1.25) {};
		\node [style=white dot] (12) at (-1.75, 0.25) {};
		\node [style=none] (13) at (-1.75, -0.5) {...};
		\node [style=none] (14) at (0, 0) {$=$};
		\node [style=wire] (15) at (0.75, 1.25) {};
		\node [style=wire] (16) at (4.25, -1.25) {};
		\node [style=none] (17) at (3.25, 0.75) {...};
		\node [style=wire] (18) at (2, 1.25) {};
		\node [style=none] (19) at (1.75, 0.75) {...};
		\node [style=none] (20) at (3.25, -0.75) {...};
		\node [style=wire] (21) at (4.25, 1.25) {};
		\node [style=white dot] (22) at (2.5, 0) {};
		\node [style=white dot] (23) at (2.5, 0) {};
		\node [style=wire] (24) at (0.75, -1.25) {};
		\node [style=wire] (25) at (2.25, -1.25) {};
		\node [style=wire] (26) at (2.75, 1.25) {};
		\node [style=wire] (27) at (3, -1.25) {};
		\node [style=none] (28) at (2, -0.75) {...};
	\end{pgfonlayer}
	\begin{pgfonlayer}{edgelayer}
		\draw [style=directed, bend right=15, looseness=1.00] (2) to (0);
		\draw [style=directed, bend right=15, looseness=1.00] (0) to (4);
		\draw [style=directed, bend left=15, looseness=1.00] (1) to (0);
		\draw [style=directed, bend left=15, looseness=1.00] (0) to (3);
		\draw [style=directed, bend right=15, looseness=1.00] (9) to (12);
		\draw [style=directed, bend right=15, looseness=1.00] (12) to (11);
		\draw [style=directed, bend left=15, looseness=1.00] (10) to (12);
		\draw [style=directed, bend left=15, looseness=1.00] (12) to (7);
		\draw [style=directed] (0) to (12);
		\draw [style=directed, bend right=15, looseness=1.00] (25) to (23);
		\draw [style=directed, bend right=15, looseness=1.00] (23) to (18);
		\draw [style=directed, bend left=15, looseness=1.00] (24) to (23);
		\draw [style=directed, bend left=15, looseness=1.00] (23) to (15);
		\draw [style=directed, bend right=15, looseness=1.00] (16) to (22);
		\draw [style=directed, bend right=15, looseness=1.00] (22) to (21);
		\draw [style=directed, bend left=15, looseness=1.00] (27) to (22);
		\draw [style=directed, bend left=15, looseness=1.00] (22) to (26);
	\end{pgfonlayer}
\end{tikzpicture}}
\endpgfgraphicnamed, \ \ \ \
\beginpgfgraphicnamed{spider_elim}
\begin{tikzpicture}
	\begin{pgfonlayer}{nodelayer}
		\node [style=wire] (0) at (-1, -1) {};
		\node [style=white dot] (1) at (-1, 0) {};
		\node [style=wire] (2) at (-1, 1) {};
		\node [style=wire] (3) at (1, -1) {};
		\node [style=wire] (4) at (1, 1) {};
		\node [style=none] (5) at (0, 0) {$=$};
	\end{pgfonlayer}
	\begin{pgfonlayer}{edgelayer}
		\draw [style=directed] (0) to (1);
		\draw [style=directed] (1) to (2);
		\draw [style=directed] (3) to (4);
	\end{pgfonlayer}
\end{tikzpicture}}
\endpgfgraphicnamed \ \right\}
\]
A model of such a theory is no longer just a finite set of morphisms, but rather, a set of \textit{families} of morphisms $f_{j,k} : A^{\otimes j} \to A^{\otimes k}$, indexed by input/output arities, such that the equations in $E$ hold for all possible arities.

Comparing this to the equations at the beginning of this section, we seem to have lost some formality. That is, the `concrete' diagrammatic identities above can be formalised in such a way that proofs can be performed (and even machine-checked) via a suitable notion of diagram rewriting, as formalised in~\cite{DixonKissinger2010}. One might be tempted to think that this level of rigour is lost when we describe equations in a mathematical meta-language, making use of ellipses, for example, to represent repetition. However, in~\cite{DixonDuncan2009}, the authors introduced \textit{!-boxes} (pronounced `bang-boxes') as a method for reasoning about graphs with repeated structure. As !-box rules, the previously informal rules can be formalised as:
\[
\Sigma \ =\ \left\{ \ %
\beginpgfgraphicnamed{spider_bb}
\begin{tikzpicture}
	\begin{pgfonlayer}{nodelayer}
		\node [style=wire] (0) at (0, 1.25) {};
		\node [style=white dot] (1) at (0, 0) {};
		\node [style=wire] (2) at (0, -1.25) {};
		\node [style=none] (3) at (-0.5, -1.75) {};
		\node [style=none] (4) at (-0.5, 0.75) {};
		\node [style=bbox, label=B] (5) at (-0.5, -0.75) {};
		\node [style=none] (6) at (0.5, -0.75) {};
		\node [style=bbox, label=A] (7) at (-0.5, 1.75) {};
		\node [style=none] (8) at (0.5, -1.75) {};
		\node [style=none] (9) at (0.5, 0.75) {};
		\node [style=none] (10) at (0.5, 1.75) {};
	\end{pgfonlayer}
	\begin{pgfonlayer}{edgelayer}
		\draw [style=boxedge] (7) to (4.center);
		\draw [style=boxedge] (4.center) to (9.center);
		\draw [style=boxedge] (9.center) to (10.center);
		\draw [style=boxedge] (10.center) to (7);
		\draw [style=boxedge] (5) to (3.center);
		\draw [style=boxedge] (3.center) to (8.center);
		\draw [style=boxedge] (8.center) to (6.center);
		\draw [style=boxedge] (6.center) to (5);
		\draw [style=directed] (2) to (1);
		\draw [style=directed] (1) to (0);
	\end{pgfonlayer}
\end{tikzpicture}}
\endpgfgraphicnamed \ \right\} \qquad
E \ =\ \left\{ \ %
\beginpgfgraphicnamed{spider_merge_bb}
\begin{tikzpicture}
	\begin{pgfonlayer}{nodelayer}
		\node [style=white dot] (0) at (-3.25, -0.25) {};
		\node [style=white dot] (1) at (-1.5, 0.25) {};
		\node [style=none] (2) at (0, 0) {$=$};
		\node [style=white dot] (3) at (2.25, 0) {};
		\node [style=none] (4) at (-2.75, 0.75) {};
		\node [style=none] (5) at (-2.75, 1.75) {};
		\node [style=none] (6) at (-3.75, 0.75) {};
		\node [style=bbox, label=A] (7) at (-3.75, 1.75) {};
		\node [style=bbox, label=C] (8) at (-2, 1.75) {};
		\node [style=none] (9) at (-2, 0.75) {};
		\node [style=none] (10) at (-1, 0.75) {};
		\node [style=none] (11) at (-1, 1.75) {};
		\node [style=bbox, label=B] (12) at (-3.75, -0.75) {};
		\node [style=none] (13) at (-3.75, -1.75) {};
		\node [style=none] (14) at (-2, -1.75) {};
		\node [style=bbox, label=D] (15) at (-2, -0.75) {};
		\node [style=none] (16) at (-2.75, -1.75) {};
		\node [style=none] (17) at (-1, -0.75) {};
		\node [style=none] (18) at (-2.75, -0.75) {};
		\node [style=none] (19) at (-1, -1.75) {};
		\node [style=bbox, label=B] (20) at (0.75, -0.75) {};
		\node [style=none] (21) at (3.75, -1.75) {};
		\node [style=none] (22) at (0.75, 0.75) {};
		\node [style=none] (23) at (2.75, 0.75) {};
		\node [style=bbox, label=C] (24) at (2.75, 1.75) {};
		\node [style=none] (25) at (3.75, 1.75) {};
		\node [style=none] (26) at (1.75, 1.75) {};
		\node [style=bbox, label=D] (27) at (2.75, -0.75) {};
		\node [style=none] (28) at (3.75, 0.75) {};
		\node [style=bbox, label=A] (29) at (0.75, 1.75) {};
		\node [style=none] (30) at (2.75, -1.75) {};
		\node [style=none] (31) at (3.75, -0.75) {};
		\node [style=none] (32) at (1.75, 0.75) {};
		\node [style=none] (33) at (1.75, -1.75) {};
		\node [style=none] (34) at (0.75, -1.75) {};
		\node [style=none] (35) at (1.75, -0.75) {};
		\node [style=wire] (36) at (-3.25, -1.25) {};
		\node [style=wire] (37) at (-1.5, -1.25) {};
		\node [style=wire] (38) at (-1.5, 1.25) {};
		\node [style=wire] (39) at (-3.25, 1.25) {};
		\node [style=wire] (40) at (1.25, -1.25) {};
		\node [style=wire] (41) at (3.25, -1.25) {};
		\node [style=wire] (42) at (3.25, 1.25) {};
		\node [style=wire] (43) at (1.25, 1.25) {};
	\end{pgfonlayer}
	\begin{pgfonlayer}{edgelayer}
		\draw [style=directed] (0) to (1);
		\draw [style=boxedge] (7) to (6.center);
		\draw [style=boxedge] (6.center) to (4.center);
		\draw [style=boxedge] (4.center) to (5.center);
		\draw [style=boxedge] (5.center) to (7);
		\draw [style=boxedge] (8) to (9.center);
		\draw [style=boxedge] (9.center) to (10.center);
		\draw [style=boxedge] (10.center) to (11.center);
		\draw [style=boxedge] (11.center) to (8);
		\draw [style=boxedge] (12) to (13.center);
		\draw [style=boxedge] (13.center) to (16.center);
		\draw [style=boxedge] (16.center) to (18.center);
		\draw [style=boxedge] (18.center) to (12);
		\draw [style=boxedge] (15) to (14.center);
		\draw [style=boxedge] (14.center) to (19.center);
		\draw [style=boxedge] (19.center) to (17.center);
		\draw [style=boxedge] (17.center) to (15);
		\draw [style=boxedge] (29) to (22.center);
		\draw [style=boxedge] (22.center) to (32.center);
		\draw [style=boxedge] (32.center) to (26.center);
		\draw [style=boxedge] (26.center) to (29);
		\draw [style=boxedge] (24) to (23.center);
		\draw [style=boxedge] (23.center) to (28.center);
		\draw [style=boxedge] (28.center) to (25.center);
		\draw [style=boxedge] (25.center) to (24);
		\draw [style=boxedge] (20) to (34.center);
		\draw [style=boxedge] (34.center) to (33.center);
		\draw [style=boxedge] (33.center) to (35.center);
		\draw [style=boxedge] (35.center) to (20);
		\draw [style=boxedge] (27) to (30.center);
		\draw [style=boxedge] (30.center) to (21.center);
		\draw [style=boxedge] (21.center) to (31.center);
		\draw [style=boxedge] (31.center) to (27);
		\draw [style=directed] (36) to (0);
		\draw [style=directed] (0) to (39);
		\draw [style=directed] (37) to (1);
		\draw [style=directed] (1) to (38);
		\draw [style=directed] (40) to (3);
		\draw [style=directed] (3) to (43);
		\draw [style=directed] (41) to (3);
		\draw [style=directed] (3) to (42);
	\end{pgfonlayer}
\end{tikzpicture}}
\endpgfgraphicnamed, \ \
\beginpgfgraphicnamed{spider_elim}
\begin{tikzpicture}
	\begin{pgfonlayer}{nodelayer}
		\node [style=wire] (0) at (-1, -1) {};
		\node [style=white dot] (1) at (-1, 0) {};
		\node [style=wire] (2) at (-1, 1) {};
		\node [style=wire] (3) at (1, -1) {};
		\node [style=wire] (4) at (1, 1) {};
		\node [style=none] (5) at (0, 0) {$=$};
	\end{pgfonlayer}
	\begin{pgfonlayer}{edgelayer}
		\draw [style=directed] (0) to (1);
		\draw [style=directed] (1) to (2);
		\draw [style=directed] (3) to (4);
	\end{pgfonlayer}
\end{tikzpicture}}
\endpgfgraphicnamed \ \right\}
\]
Intuitively, marking a subgraph with a !-box means that subgraph (along with edges in/out of it) can be repeated any number of times to obtain an \textit{instance} of the graph. Thus we interpret a graph with !-boxes as a set of all its instances.
\[ \left\llbracket %
\beginpgfgraphicnamed{spider_bb}
\begin{tikzpicture}
	\begin{pgfonlayer}{nodelayer}
		\node [style=wire] (0) at (0, 1.25) {};
		\node [style=white dot] (1) at (0, 0) {};
		\node [style=wire] (2) at (0, -1.25) {};
		\node [style=none] (3) at (-0.5, -1.75) {};
		\node [style=none] (4) at (-0.5, 0.75) {};
		\node [style=bbox, label=B] (5) at (-0.5, -0.75) {};
		\node [style=none] (6) at (0.5, -0.75) {};
		\node [style=bbox, label=A] (7) at (-0.5, 1.75) {};
		\node [style=none] (8) at (0.5, -1.75) {};
		\node [style=none] (9) at (0.5, 0.75) {};
		\node [style=none] (10) at (0.5, 1.75) {};
	\end{pgfonlayer}
	\begin{pgfonlayer}{edgelayer}
		\draw [style=boxedge] (7) to (4.center);
		\draw [style=boxedge] (4.center) to (9.center);
		\draw [style=boxedge] (9.center) to (10.center);
		\draw [style=boxedge] (10.center) to (7);
		\draw [style=boxedge] (5) to (3.center);
		\draw [style=boxedge] (3.center) to (8.center);
		\draw [style=boxedge] (8.center) to (6.center);
		\draw [style=boxedge] (6.center) to (5);
		\draw [style=directed] (2) to (1);
		\draw [style=directed] (1) to (0);
	\end{pgfonlayer}
\end{tikzpicture}}
\endpgfgraphicnamed \quad \right\rrbracket
   \ :=\  \left\{ \ \ %
\beginpgfgraphicnamed{spider_inst1}
\begin{tikzpicture}
	\begin{pgfonlayer}{nodelayer}
		\node [style=white dot] (0) at (-8.5, 0) {};
		\node [style=wire] (1) at (-6.75, 1) {};
		\node [style=white dot] (2) at (-6.75, 0) {};
		\node [style=wire] (3) at (-5.5, 1) {};
		\node [style=wire] (4) at (-4.5, 1) {};
		\node [style=white dot] (5) at (-5, 0) {};
		\node [style=none] (6) at (-7.75, -0.25) {,};
		\node [style=none] (7) at (-6, -0.25) {,};
		\node [style=none] (8) at (-3, 0) {$\cdots$};
		\node [style=none] (9) at (-4, -0.25) {,};
		\node [style=white dot] (10) at (0.5, 0) {};
		\node [style=none] (11) at (3.25, -0.25) {,};
		\node [style=none] (12) at (4.25, 0) {$\cdots$};
		\node [style=white dot] (13) at (2.25, 0) {};
		\node [style=wire] (14) at (0.5, 1) {};
		\node [style=none] (15) at (-0.5, -0.25) {,};
		\node [style=wire] (16) at (1.75, 1) {};
		\node [style=wire] (17) at (2.75, 1) {};
		\node [style=none] (18) at (1.25, -0.25) {,};
		\node [style=white dot] (19) at (-1.25, 0) {};
		\node [style=none] (20) at (-2, -0.25) {,};
		\node [style=wire] (21) at (0.5, -1) {};
		\node [style=wire] (22) at (-1.25, -1) {};
		\node [style=wire] (23) at (2.25, -1) {};
	\end{pgfonlayer}
	\begin{pgfonlayer}{edgelayer}
		\draw [style=directed] (2) to (1);
		\draw [style=directed] (5) to (3);
		\draw [style=directed] (5) to (4);
		\draw [style=directed] (10) to (14);
		\draw [style=directed] (13) to (16);
		\draw [style=directed] (13) to (17);
		\draw [style=directed] (22) to (19);
		\draw [style=directed] (21) to (10);
		\draw [style=directed] (23) to (13);
	\end{pgfonlayer}
\end{tikzpicture}}
\endpgfgraphicnamed \ \ \right\} \]
Similarly, for rules with !-boxes, matched pairs of !-boxes can be repeated in the LHS and RHS to obtain instances of that rule. Thus, for our example of the commutative Frobenius algebra, we have reduced our theory of 7 equations to just 2.

!-boxes were given a formal semantics in~\cite{PatternGraph}, making use of adhesive categories~\cite{LackAdh2005}. They also come with a simple and powerful induction principle introduced by one of the authors in~\cite{KissingerThesis} and proven correct in~\cite{MerryThesis}. But there's a catch: note how we were careful to say that \textit{commutative} Frobenius algebras have an elegant presentation as above. A major drawback of the existing !-box notation is that it is only unambiguous if all of the nodes in the diagram are invariant under permuting inputs/outputs. This is severely limiting in two ways. The first and most obvious limitation is that we are forced to consider only commutative algebraic structures. The second, more subtle limitation is that we have no freedom to \textit{definitionally} extend our theory, i.e. introduce new nodes defined as diagrams of other nodes, without making implicit assumptions about those diagrams (namely, that they are symmetric on inputs/outputs).

In order to overcome these shortcomings, we extend the !-graph notation with some extra information about how newly-created edges should be ordered when a !-box is expanded. This turns out to be fairly straightforward as soon as one shifts from a graph-based semantics for diagrams, as employed in~\cite{DixonKissinger2010}, to a \textit{tensor-based} semantics, where morphisms in the free compact closed category are represented using a version of Penrose's abstract tensor notation~\cite{Penrose1971}. This approach, recently formalised in~\cite{KissingerATS}, has the property that non-commutativity comes `for free', where the edges connected to a single element are represented as a list of edge names. Contrast this with the graph-based semantics for string diagrams or Joyal and Street's geometric construction~\cite{JS}, where one needs to add some extra structure (e.g. a total ordering or typing on adjacent edges) to break symmetries.

% , for example, make use of edge types or impose a total ordering on the neighbourhoods of nodes to break symmetries.

% two properties that we will find useful. Firstly, tensor expressions \textit{are} diagrams, in the sense that it is trivial to (unambiguously) translate between the diagrammatic and the syntactic notation for tensors. Secondly,

So, without further ado, we introduce tensor expressions for compact closed categories and extend them to accommodate !-boxes.

% TODO: outline of the paper here?

\section{Tensors}\label{sec:tensors}

Assume we are working in a compact closed category $\mathcal C$ freely generated by a set of objects $X, Y, Z, \ldots$ and morphisms of the form $\phi : I \to X_1 \otimes \ldots \otimes X_n$, i.e. morphisms with only non-trivial outputs. Since $\mathcal C$ is compact closed, this yields no loss of generality, since we represent an input of type $A$ as an output of type $A^*$. For simplicity, we'll assume every `input' is of fixed type $X^*$ and every `output' is of type $X$.

Since we want to distinguish inputs/outputs we label them using lower case letters. They will have a hat to illustrate being an `output': $\{\eout{a},\eout{b},\ldots\}$, or a check to illustrate being an `input': $\{\ein{a},\ein{b},\ldots\}$. Translating a morphism $\phi$ into tensor notation yields:
\[
\phi : I \to X \otimes X \otimes X^* \otimes X^* \otimes X^* \qquad \Longrightarrow \qquad
\term{\phi.+a+b-c-d-e.}
\]

\begin{wrapfigure}[4]{r}{0.4\textwidth}
\vspace{-16pt}
$%
\beginpgfgraphicnamed{phi-point}
\begin{tikzpicture}
	\begin{pgfonlayer}{nodelayer}
		\node [style=none] (0) at (0, -0.75) {$\phi$};
		\node [style=none] (1) at (-2, -0.25) {};
		\node [style=none] (2) at (2, -0.25) {};
		\node [style=none] (3) at (0, -1.5) {};
		\node [style=wire, label={above:a}] (4) at (-1.5, 0.5) {};
		\node [style=wire, label={above:b}] (5) at (-0.75, 0.5) {};
		\node [style=wire, label={above:c}] (6) at (0, 0.5) {};
		\node [style=wire, label={above:d}] (7) at (0.75, 0.5) {};
		\node [style=wire, label={above:e}] (8) at (1.5, 0.5) {};
		\node [style=none] (9) at (-0.75, -0.25) {};
		\node [style=none] (10) at (1.5, -0.25) {};
		\node [style=none] (11) at (0, -0.25) {};
		\node [style=none] (12) at (-1.5, -0.25) {};
		\node [style=none] (13) at (0.75, -0.25) {};
	\end{pgfonlayer}
	\begin{pgfonlayer}{edgelayer}
		\draw (1.center) to (3.center);
		\draw (3.center) to (2.center);
		\draw (2.center) to (1.center);
		\draw [style=directed] (12.center) to (4);
		\draw [style=directed] (9.center) to (5);
		\draw [style=directed] (6.center) to (11);
		\draw [style=directed] (7) to (13.center);
		\draw [style=directed] (8) to (10.center);
	\end{pgfonlayer}
\end{tikzpicture}}
\endpgfgraphicnamed \quad \Longrightarrow \quad
\beginpgfgraphicnamed{compl-fixed-arity}
\begin{tikzpicture}
	\begin{pgfonlayer}{nodelayer}
		\node [style=arbi] (0) at (0, 0) {$\phi$};
		\node [style=wire, label={left:a}] (1) at (-0.5, 1.25) {};
		\node [style=wire, label={right:b}] (2) at (0.5, 1.25) {};
		\node [style=wire, label={right:c}] (3) at (1, -1) {};
		\node [style=wire, label={right:d}] (4) at (0, -1.5) {};
		\node [style=wire, label={left:e}] (5) at (-1, -1) {};
	\end{pgfonlayer}
	\begin{pgfonlayer}{edgelayer}
		\draw [style=directed] (0) to (1);
		\draw [style=directed] (0) to (2);
		\draw [style=directed] (3) to (0);
		\draw [style=directed] (4) to (0);
		\draw [style=directed] (5) to (0);
	\end{pgfonlayer}
\end{tikzpicture}}
\endpgfgraphicnamed$
\end{wrapfigure}
We introduce a special graphical notation for morphisms with only outputs. We write them as circles with a tick, taking the convention that inputs/outputs are ordered clockwise from the tick.

% : \begin{tikzpicture}\node [style=arbi] at (0, 0) {$\phi$};\end{tikzpicture},

Writing two tensors side-by-side yields a new tensor formed by taking the monoidal product and `contracting' any repeated names using the compact structure on $X$.
\vspace{-10pt}
\begin{equation}\label{eq:contraction}
  \term{\psi.+f\bR-a-b\e.\phi.\bR+a+b\e-c-d-e.}
\quad := \quad
\beginpgfgraphicnamed{psi-phi-contract}
\begin{tikzpicture}
	\begin{pgfonlayer}{nodelayer}
		\node [style=none] (0) at (-1, -0.75) {$\psi$};
		\node [style=none] (1) at (-2.25, -0.25) {};
		\node [style=none] (2) at (0.25, -0.25) {};
		\node [style=none] (3) at (-1, -1.5) {};
		\node [style=wire, label={above:f}] (4) at (-1.75, 0.5) {};
		\node [style=none] (5) at (-1, 0) {};
		\node [style=none] (6) at (-0.25, 0) {};
		\node [style=wire, label={above:d}] (7) at (3.75, 0.5) {};
		\node [style=wire, label={above:e}] (8) at (4.5, 0.5) {};
		\node [style=none] (9) at (-1, -0.25) {};
		\node [style=none] (10) at (4.5, -0.25) {};
		\node [style=none] (11) at (-0.25, -0.25) {};
		\node [style=none] (12) at (-1.75, -0.25) {};
		\node [style=none] (13) at (3.75, -0.25) {};
		\node [style=none] (14) at (1, -0.25) {};
		\node [style=none] (15) at (3, -1.5) {};
		\node [style=none] (16) at (3, -0.75) {$\phi$};
		\node [style=none] (17) at (5, -0.25) {};
		\node [style=wire, label={above:c}] (18) at (3, 0.5) {};
		\node [style=none] (19) at (3, -0.25) {};
		\node [style=none] (20) at (1.5, 0) {};
		\node [style=none] (21) at (1.5, -0.25) {};
		\node [style=none] (22) at (2.25, 0) {};
		\node [style=none] (23) at (2.25, -0.25) {};
		\node [style=none, red, yshift=-1 pt] (24) at (0.25, 1) {a};
		\node [style=none, red] (25) at (1, 1) {b};
	\end{pgfonlayer}
	\begin{pgfonlayer}{edgelayer}
		\draw (1.center) to (3.center);
		\draw (3.center) to (2.center);
		\draw (2.center) to (1.center);
		\draw [style=directed] (12.center) to (4);
		\draw [style=directed, red] (5.center) to (9.center);
		\draw [style=directed, red] (6.center) to (11);
		\draw [style=directed] (7) to (13.center);
		\draw [style=directed] (8) to (10.center);
		\draw (14.center) to (15.center);
		\draw (15.center) to (17.center);
		\draw (17.center) to (14.center);
		\draw [style=directed] (18) to (19.center);
		\draw [style=undirected, red] (21.center) to (20.center);
		\draw [style=undirected, red] (23.center) to (22.center);
		\draw [red, style=undirected, in=90, out=90, looseness=1.00] (5.center) to (20.center);
		\draw [red, style=undirected, in=90, out=90, looseness=1.00] (22.center) to (6.center);
	\end{pgfonlayer}
\end{tikzpicture}}
\endpgfgraphicnamed
\quad \Longrightarrow \quad
\beginpgfgraphicnamed{compose-graphs}
\begin{tikzpicture}
	\begin{pgfonlayer}{nodelayer}
		\node [style=arbi] (0) at (0, -0.75) {$\phi$};
		\node [style=wire, label={right:c}] (1) at (0.75, -1.5) {};
		\node [style=wire, label={[yshift=5pt]right:d}] (2) at (0, -2) {};
		\node [style=wire, label={left:e}] (3) at (-0.75, -1.5) {};
		\node [style=arbi] (4) at (0, 1) {$\psi$};
		\node [style=wire, label={[yshift=-5pt]right:f}] (5) at (0, 2) {};
	\end{pgfonlayer}
	\begin{pgfonlayer}{edgelayer}
		\draw [style=directed] (1) to (0);
		\draw [style=directed] (2) to (0);
		\draw [style=directed] (3) to (0);
		\draw [style=directed, label=a, red, in=-135, out=45, looseness=1.00] (0) to node[left, pos=0.7]{b} (4);
		\draw [style=directed, label=b, red, in=-45, out=135, looseness=1.00] (0) to node[right, pos=0.7]{a} (4);
		\draw [style=directed] (4) to (5);
	\end{pgfonlayer}
\end{tikzpicture}}
\endpgfgraphicnamed
\end{equation}

We say repeated edge names (e.g. $a$ and $b$ above) are \textit{bound} in a tensor expression, and all other edge names are \textit{free}. In the graph we have labelled the bound edges, though this is purely for demonstrating which edges are bound. The names of bound edges can be changed at will, provided they are replaced with new, fresh names. Hence $\term{\psi.+f\bR-a-b\e.\phi.\bR+a+b\e-c-d-e.}$ and $\term{\psi.+f\bR-x-y\e.\phi.\bR+x+y\e-c-d-e.}$ represent the same graph. As a result, we typically will not write down bound names in the graphical notation.

\begin{definition}
	The set of \textit{tensor expressions} for a signature $\mathcal S$ consists of (i) the trivial tensor $1$, (ii) the identity tensor $\term{1.+a-b.}$, (iii) atomic tensors $\term{\psi.+a-b\ldots.}$ with the appropriate names for each $\psi \in \mathcal S$, (iv) $GH$ for $G, H$ tensor expressions, and (v) $G'$ obtained by changing some of the names of a tensor expression $G$---subject to the condition that $\eout{a}$ and $\ein{a}$ occur at most once for each name $a$.
\end{definition}

\begin{definition}\label{def:tensor-equiv}
	Two tensor expressions $G, G'$ are equivalent, written $G \equiv G'$ if $G$ can be made into $G'$ by replacing bound names or by applying one or more of the following identities:
	\[
	    (GH)K \equiv G(HK) \qquad
	    GH \equiv HG \qquad
	    G1 \equiv G
  \]
  \[
      G\term{1.+b-a.} \equiv G[\ein b \mapsto \ein a] \qquad
      H\term{1.+a-b.} \equiv H[\eout b \mapsto \eout a]
	\]
  Assume for the last two identities that $\ein b$ and $\eout b$ are free in $G$ and $H$, respectively. An $\equiv$-equivalence class of tensor expressions is called a \textit{tensor}.
\end{definition}

Note that we use $\equiv$ for syntactic equivalence of tensor expressions (and later !-tensor expressions). We reserve the normal equals sign for equality by the rules of a given theory. As such, we always assume $(G \equiv H) \implies (G = H)$, but not the converse.

Tensors are related to morphisms in the free compact closed category as follows. Suppose we fix a set of \textit{canonical names} $\{ \ein{x}_1, \ein{x}_2, \ldots \}$ and $\{ \eout{x}_1, \eout{x}_2, \ldots \}$. A tensor $G$ is said to be \textit{canonically named} if for some $N$ it has as a free name precisely one of $\ein{x}_i$ or $\eout{x}_i$ for $1 \leq i \leq N$.

\begin{theorem}
	Canonically-named tensors are in 1-to-1 correspondence to morphisms in the free compact closed category generated by a signature $\mathcal S$.
\end{theorem}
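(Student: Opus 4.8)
The plan is to set up the correspondence in both directions and show the two assignments are mutually inverse, working up to the equivalences of Definition~\ref{def:tensor-equiv} on one side and up to the coherence isomorphisms of the free compact closed category on the other.

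First I would define the map from tensors to morphisms. Given a canonically-named tensor expression $G$, I would induct on its structure as given in the definition of tensor expressions: $1 \mapsto \mathrm{id}_I$, $\term{1.+a-b.} \mapsto \mathrm{id}_X$ (read as a morphism $I \to X \otimes X^*$ via the unit $\eta_X$, after reordering outputs by their canonical names), each atomic $\psi \mapsto$ the corresponding generator of the freely generated category (again with a fixed convention sending the name list to a tensor-ordered list of objects $X$ and $X^*$), and $GH \mapsto$ the composite obtained by taking $\llbracket G \rrbracket \otimes \llbracket H \rrbracket$ and contracting each pair of matching names $\eout{a},\ein{a}$ using the counit $\epsilon_X : X^* \otimes X \to I$ (with appropriate symmetries to bring the two legs adjacent). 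Renaming of free names acts by permuting the outputs. The key lemma here is that this assignment is well-defined on $\equiv$-classes: I must check that each generating identity of Definition~\ref{def:tensor-equiv} --- associativity and commutativity of juxtaposition, the unit law $G1 \equiv G$, and the two identity-absorption laws --- is sent to an actual equality of morphisms in the free compact closed category. These are exactly the statements that composition is associative, that the symmetry is natural and coherent, and that $\eta$ and $\epsilon$ satisfy the snake (zig-zag) equations; each follows from the axioms of a (symmetric) compact closed category together with Mac~Lane-style coherence. Restricting to canonically-named tensors then pins down a genuine morphism $X^{\ast n_{\mathrm{in}}} \otimes \cdots \to \cdots$ with no residual ambiguity.

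Conversely I would define a map from morphisms to canonically-named tensors. Using the fact that $\mathcal C$ is freely generated, every morphism has a representation as a composite of juxtapositions of generators, identities, symmetries, units and counits; I would send such a representation term-by-term to a tensor expression, introducing a fresh bound name for each wire that is created by an $\eta$ and consumed by an $\epsilon$ or by plugging into a generator, and canonical free names for the remaining open wires. Well-definedness on morphisms (i.e. independence of the chosen string-diagram representative) again reduces to checking the generating relations of the free compact closed category, which by construction map to $\equiv$-equivalences or to bound-name renamings; this is essentially the mirror image of the previous paragraph and can be cited from the abstract-tensor formalisation in~\cite{KissingerATS}.

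Finally I would show the two maps are mutually inverse. Going morphism $\to$ tensor $\to$ morphism, each generator, identity, symmetry and cap/cup is translated to its obvious tensor counterpart and back to itself, and composites are preserved, so the round trip is the identity up to the coherence isomorphisms we have already quotiented by. Going tensor $\to$ morphism $\to$ tensor, structural induction shows the round trip returns a tensor expression $\equiv$-equivalent to the original: the only non-trivial point is matching up the bound names introduced during the second translation with those already present, which is handled by the freedom to rename bound names. I expect the main obstacle to be purely bookkeeping rather than conceptual: keeping the name-to-object-ordering conventions (the ``clockwise from the tick'' convention, and the order in which juxtaposed tensors contribute to the output list) consistent across both directions, so that the symmetries inserted on the morphism side exactly account for the reorderings that are invisible (absorbed into $\equiv$) on the tensor side. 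Once those conventions are fixed carefully, the proof is a routine induction.
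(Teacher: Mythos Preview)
Your proposal is correct in outline, but it takes a genuinely different route from the paper. The paper does not build the bijection by hand. Instead it observes that the whole theorem is a corollary of two facts already in the literature: (1) the tensor calculus without the hat/check decoration, as developed in~\cite{KissingerATS}, presents the free \emph{traced} symmetric monoidal category on $\mathcal S$; and (2) decorating names with hats and checks is precisely the $\mathrm{Int}$ construction of Joyal--Street--Verity~\cite{JSV}, and $\mathrm{Int}$ applied to the free traced category yields the free compact closed category by its universal property. So the paper's argument is two sentences long and entirely structural.

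Your approach is the concrete, ``unpack the free category and check the axioms'' version of the same statement. It buys self-containment: you never need to know what $\mathrm{Int}$ is or to invoke the characterisation of the free traced category from~\cite{KissingerATS}. The cost is exactly the bookkeeping you anticipate in your last paragraph---tracking the edge-ordering conventions through symmetries and coherence isomorphisms---together with the need for an explicit presentation (generators and \emph{all} relations) of the free compact closed category in order to establish well-definedness in the morphism-to-tensor direction. The paper's route avoids both of these by delegating them to~\cite{KissingerATS} and to the universal property of $\mathrm{Int}$, which is why it is so much shorter.
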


\begin{proof}
	First note that adding `hats' and `checks' to edge names is essentially applying the Int construction (c.f.~\cite{JSV}) to free traced symmetric monoidal category, in the tensorial presentation given in~\cite{KissingerATS}. The free compact closure of the free traced monoidal category then satisfies the appropriate universal property to make it the free compact closed category.
\end{proof}

To summarise, we can interpret a tensor in a compact closed category as follows. First, we swap its free names for `canonical names' (or otherwise order the outputs somehow), then interpret each atomic expression as a morphism (or one of a family of morphisms, parametrised by its arity). Finally, we construct the composed morphism by composing each of the components and contracting repeated edge names, as in~\eqref{eq:contraction}.

Alternatively, one can study models in an existing abstract tensor system (in the sense of Penrose), in which case interpretation is trivial. These two points of view (categorical vs. ATS) are roughly equivalent, as was shown in~\cite{KissingerATS}.

\section{Adding !-boxes to tensor expressions}

We now extend the existing tensor notation with !-boxes. Graphically !-boxes are blue boxes surrounding a subgraph, labelled with a name ($A,B,\ldots$). We can denote this with square brackets around a subterm in a tensor expression, labelled with a superscript. Intuitively a !-box represents a portion of the graph that can be copied multiple times. For this to be well-defined in the non-commutative case we need to clarify where each new copy of the subgraph gets attached to surrounding nodes.

This is done by assigning an expansion direction (clockwise vs anticlockwise) to any group of edges from a node to a !-box. We draw these as arrows over edge groups in our !-graphs and for our tensors we denote clockwise edge groups as $\eclock{\dots}{A}$ and anticlockwise edge groups as $\eanti{\ldots}{A}$. For example:
\[
\term{\phi.<+a]B.[\psi.-a.]B} \ :=\  %
\beginpgfgraphicnamed{bb-example}
\begin{tikzpicture}
	\begin{pgfonlayer}{nodelayer}
		\node [style=arbi] (0) at (0, 1) {$\psi$};
		\node [style=arbi] (1) at (0, -1) {$\phi$};
		\node [style=bbox, label=B] (2) at (-0.75, 1.75) {};
		\node [style=none] (3) at (0.75, 1.75) {};
		\node [style=none] (4) at (0.75, 0.25) {};
		\node [style=none] (5) at (-0.75, 0.25) {};
	\end{pgfonlayer}
	\begin{pgfonlayer}{edgelayer}
		\draw [style=directed, arcout={{}{0}{-60}{2mm}{1}}] (1) to (0);
		\draw [style=boxedge] (2) to (5.center);
		\draw [style=boxedge] (5.center) to (4.center);
		\draw [style=boxedge] (4.center) to (3.center);
		\draw [style=boxedge] (3.center) to (2);
	\end{pgfonlayer}
\end{tikzpicture}}
\endpgfgraphicnamed \qquad\textrm{vs.}\qquad
\term{\phi.[+a>B.[\psi.-a.]B} \ :=\  %
\beginpgfgraphicnamed{bb-example-cw}
\begin{tikzpicture}
	\begin{pgfonlayer}{nodelayer}
		\node [style=arbi] (0) at (0, 1) {$\psi$};
		\node [style=arbi] (1) at (0, -1) {$\phi$};
		\node [style=bbox, label=B] (2) at (-0.75, 1.75) {};
		\node [style=none] (3) at (0.75, 1.75) {};
		\node [style=none] (4) at (0.75, 0.25) {};
		\node [style=none] (5) at (-0.75, 0.25) {};
	\end{pgfonlayer}
	\begin{pgfonlayer}{edgelayer}
		\draw [style=directed, arcout={{}{0}{60}{2mm}{1}}] (1) to (0);
		\draw [style=boxedge] (2) to (5.center);
		\draw [style=boxedge] (5.center) to (4.center);
		\draw [style=boxedge] (4.center) to (3.center);
		\draw [style=boxedge] (3.center) to (2);
	\end{pgfonlayer}
\end{tikzpicture}}
\endpgfgraphicnamed
\]
In the next section, we will see how the arrows clarify not only which direction edges should expand, but also whether they should expand in groups or individually. For example, the following notation gives anti-clockwise expansion of $\term{+a-b}$ as a group, clockwise expansion of $\term{+a-b}$ as a group, and clockwise expansion of $\term{+a}$ and $\term{-b}$ as individual edges, respectively:
\[
\begin{matrix}
  \term{\psi.+a'-b'<+a-b]A.[]A} & &
  \term{\psi.[+a-b>A+a'-b'.[]A} & &
  \term{\psi.[+a>A +a'[-b>A -b'.[]A} \\
  & & & & & \\
\beginpgfgraphicnamed{expand-order1}
\begin{tikzpicture}
	\begin{pgfonlayer}{nodelayer}
		\node [style=arbi] (0) at (0, -1) {$\psi$};
		\node [style=wire, label={right:a}] (1) at (0, 1) {};
		\node [style=wire, label={right:b}] (2) at (1, 1) {};
		\node [style=bbox, label=A] (3) at (-0.5, 1.5) {};
		\node [style=none] (4) at (1.75, 0.5) {};
		\node [style=none] (5) at (-0.5, 0.5) {};
		\node [style=none] (6) at (1.75, 1.5) {};
		\node [style=wire, label={left:a'}] (7) at (-2.25, -0.5) {};
		\node [style=wire, label={left:b'}] (8) at (-2, 0.25) {};
	\end{pgfonlayer}
	\begin{pgfonlayer}{edgelayer}
		\draw [style=directed] (0) to (1);
		\draw [style=directed, arcin={{{}{-10}{-75}{2mm}{1}}}] (2) to (0);
		\draw [style=boxedge] (3) to (5.center);
		\draw [style=boxedge] (5.center) to (4.center);
		\draw [style=boxedge] (4.center) to (6.center);
		\draw [style=boxedge] (6.center) to (3);
		\draw [style=directed] (0) to (7);
		\draw [style=directed] (8) to (0);
	\end{pgfonlayer}
\end{tikzpicture}}
\endpgfgraphicnamed & 
  \quad\textrm{vs.}\quad &
\beginpgfgraphicnamed{expand-order2}
\begin{tikzpicture}
	\begin{pgfonlayer}{nodelayer}
		\node [style=wire, label={right:a}] (0) at (0, 1) {};
		\node [style=bbox, label=A] (1) at (-0.5, 1.5) {};
		\node [style=wire, label={right:b'}] (2) at (2.75, -0.75) {};
		\node [style=arbi] (3) at (1, -1) {$\psi$};
		\node [style=none] (4) at (-0.5, 0.5) {};
		\node [style=none] (5) at (1.75, 0.5) {};
		\node [style=wire, label={right:a'}] (6) at (2.5, 0) {};
		\node [style=wire, label={right:b}] (7) at (1, 1) {};
		\node [style=none] (8) at (1.75, 1.5) {};
	\end{pgfonlayer}
	\begin{pgfonlayer}{edgelayer}
		\draw [style=directed] (3) to (0);
		\draw [style=directed, arcin={{{}{-20}{75}{2mm}{1}}}] (7) to (3);
		\draw [style=boxedge] (1) to (4.center);
		\draw [style=boxedge] (4.center) to (5.center);
		\draw [style=boxedge] (5.center) to (8.center);
		\draw [style=boxedge] (8.center) to (1);
		\draw [style=directed] (3) to (6);
		\draw [style=directed] (2) to (3);
	\end{pgfonlayer}
\end{tikzpicture}}
\endpgfgraphicnamed &
  \quad\textrm{vs.}\quad &
\beginpgfgraphicnamed{expand-order3}
\begin{tikzpicture}
	\begin{pgfonlayer}{nodelayer}
		\node [style=bbox, label=A] (0) at (-1.5, 1.5) {};
		\node [style=none] (1) at (2.75, 0.5) {};
		\node [style=wire, label={right:b}] (2) at (2, 1) {};
		\node [style=arbi] (3) at (0, -1) {$\psi$};
		\node [style=wire, label={right:a'}] (4) at (0, 0) {};
		\node [style=wire, label={right:b'}] (5) at (1.25, -1) {};
		\node [style=none] (6) at (-1.5, 0.5) {};
		\node [style=wire, label={right:a}] (7) at (-1, 1) {};
		\node [style=none] (8) at (2.75, 1.5) {};
	\end{pgfonlayer}
	\begin{pgfonlayer}{edgelayer}
		\draw [style=directed, arcout={{{}{0}{30}{5mm}{1}}}] (3) to (7);
		\draw [style=directed, arcin={{{}{0}{30}{6mm}{1}}}] (2) to (3);
		\draw [style=boxedge] (0) to (6.center);
		\draw [style=boxedge] (6.center) to (1.center);
		\draw [style=boxedge] (1.center) to (8.center);
		\draw [style=boxedge] (8.center) to (0);
		\draw [style=directed] (3) to (4);
		\draw [style=directed] (5) to (3);
	\end{pgfonlayer}
\end{tikzpicture}}
\endpgfgraphicnamed
\end{matrix}
\]
% \[
% \stackrel{\tikzfig{expand-order1}}{\term{\psi.+a'-b'<+a-b]A[]A.}}
% \quad\textrm{vs.}\quad
% \stackrel{\tikzfig{expand-order2}}{\term{\psi.+a'-b'<+a-b]A[]A.}}
% \quad\textrm{vs.}\quad
% \stackrel{\tikzfig{expand-order3}}{\term{\psi.+a'-b'<+a-b]A[]A.}}
% \]

It is also possible for !-boxes to be nested inside other !-boxes. This means expansion of the parent box makes a new copy of the child with a new !-box name. Edge groups can correspondingly
\begin{wrapfigure}[7]{l}{0.15\textwidth}
	\vspace{-10pt}
	\ \ %
\beginpgfgraphicnamed{nested-ex}
\begin{tikzpicture}
	\begin{pgfonlayer}{nodelayer}
		\node [style=arbi] (0) at (0, 2) {$\phi$};
		\node [style=arbi] (1) at (0, -0.5) {$\phi$};
		\node [style=bbox, label=B] (2) at (-0.75, 0.25) {};
		\node [style=bbox, label=A] (3) at (-1.25, 0.75) {};
		\node [style=none] (4) at (0.75, 0.25) {};
		\node [style=none] (5) at (0.75, -1.75) {};
		\node [style=none] (6) at (-0.75, -1.75) {};
		\node [style=none] (7) at (-1.25, -2) {};
		\node [style=none] (8) at (1, -2) {};
		\node [style=none] (9) at (1, 0.75) {};
		\node [style=wire, label={[yshift=-4pt]right:a}] (10) at (0, 3) {};
		\node [style=wire, label={[yshift=4pt]right:c}] (11) at (0, -1.5) {};
	\end{pgfonlayer}
	\begin{pgfonlayer}{edgelayer}
		\draw [style=boxedge] (3) to (2);
		\draw [style=boxedge] (2) to (4.center);
		\draw [style=boxedge] (4.center) to (5.center);
		\draw [style=boxedge] (5.center) to (6.center);
		\draw [style=boxedge] (6.center) to (2);
		\draw [style=boxedge] (3) to (9.center);
		\draw [style=boxedge] (9.center) to (8.center);
		\draw [style=boxedge] (8.center) to (7.center);
		\draw [style=boxedge] (7.center) to (3);
		\draw [style=directed, arcin={{B}{0}{-65}{4mm}{1}}] (1) to (0);
		\draw [draw=none, arcin={{A}{0}{75}{2mm}{1}}] (1) to (0);
		\draw [style=directed] (0) to (10);
		\draw [style=directed] (11) to (1);
	\end{pgfonlayer}
\end{tikzpicture}}
\endpgfgraphicnamed
\end{wrapfigure}
\noindent be nested if the edges enter more than one box. In the diagram to the left we have the !-graph with !-tensor expression: $\term{\phi.+a[<-b]B>A.[[\phi.+b-c.]B]A}$. We have labelled which arrow corresponds to which !-box. This is not necessary if we adopt the convention that a parent box's arrow must be drawn closer to the node than it's child box's arrow. Note that the labels inside nodes are to assign a type to the node as apposed to naming the node. This means since we often have multiple nodes with the same type, we will have nodes with the same label.

We can now imagine more general generators allowing arbitrary arrangements of input and output edges. Any such node, say of type $\phi$, then needs to be assigned a morphism in our category for each possible arrangement of edges. We represent an arrangment as a word over $\{\wedge,\vee\}$ where $\wedge$ represents outputs and $\vee$ represents inputs. For example the node $%
\beginpgfgraphicnamed{outinoutout}
\begin{tikzpicture}
	\begin{pgfonlayer}{nodelayer}
		\node [style=wire] (0) at (0, 0.75) {};
		\node [style=arbi,scale=0.75] (1) at (0, 0) {$\phi$};
		\node [style=wire] (2) at (0.75, 0.5) {};
		\node [style=wire] (3) at (0.75, -0.5) {};
		\node [style=wire] (4) at (0, -0.75) {};
	\end{pgfonlayer}
	\begin{pgfonlayer}{edgelayer}
		\draw [style=directed] (2) to (1);
		\draw [style=directed] (1) to (0);
		\draw [style=directed] (1) to (3);
		\draw [style=directed] (1) to (4);
	\end{pgfonlayer}
\end{tikzpicture}}
\endpgfgraphicnamed$ has edge arrangment $\wedge\vee\wedge\:\wedge$ and needs to be assigned a morphism $f:I\to X \otimes X^* \otimes X \otimes X $. Hence we need $\phi:\{\wedge,\vee\}^*\to \text{Mor}(\mathcal{C})$ to model the node type $\phi$.

!-tensors replace lists of edges on individual morphisms with \textit{edgeterms} of which we now give a recursive definition.

\begin{definition}\label{def:edgeterm-equiv}
Fix a disjoint, infinite sets $\edgenames$ and $\boxnames$ of edge names and !-box names, respectively. We denote the set of \textit{directed edges} as $\diredgenames := \{\ein{a},\eout{a} : a\in\edgenames\}$. The set of \textit{edgeterms} $\edgeterms$ is defined recursively as follows:
	\begin{align*}
		\bullet\; &\epsilon \in \edgeterms && \text{(i.e empty)} \\
		\bullet\; & \ein a, \eout a \in \edgeterms && a \in \edgenames \\
		\bullet\; &\term{[e>A},\term{<e]A} \in \edgeterms && e\in\edgeterms,\; A\in\boxnames \\
		\bullet\; & e f \in \edgeterms && e,f\in\edgeterms
	\end{align*}
	Two edgeterms are equivalent if one can be transformed into the other by:
	\[
	\epsilon e \equiv e \equiv e \epsilon \qquad
	e(fg) \equiv (ef)g \qquad
	\term{[\epsilon>A} \equiv \epsilon \equiv \term{<\epsilon]A}
	\]
\end{definition}

Since the well-formedness conditions for !-tensor expressions are a bit more complicated than for tensor expressions, we first define the set of all !-pretensor expressions, including those that may be ill-formed.

\begin{definition}
The set of all !-pretensor expressions $\graphterms'$ for a signature $\Sigma$ is defined recursively as:
  \begin{align*}
    \bullet\; &1, \term{1.+a-b.} \in \graphterms' && a,b\in\edgenames\\
    \bullet\; &\term{\phi.e.} \in \graphterms' && e\in\edgeterms, \phi\in\Sigma \\
    \bullet\; &\term{[G]A} \in \graphterms' && G\in\graphterms', \; A\in\boxnames \\
    \bullet\; & G H \in \graphterms' && G,H\in\graphterms'
  \end{align*}
\end{definition}

We introduce the notion of a \textit{context}, which lists the !-boxes in which a certain edge name occurs, from the inside-out. These come in two flavours, \textit{edge contexts} and \textit{node contexts}.

\begin{definition}
  Given a directed edge $a\in\diredgenames$ in a !-tensor $G$ nested as $\term{[[\phi.\ldots<<a>{E_1}\ldots>{E_n}\ldots.]{N_1}\ldots]{N_m}}$.

  We define the \textit{edge context}, \textit{node context}, and \textit{context} of $a$ respectively as:
    \begin{align*}
      \ectx_G(a)&:=[E_1,\ldots,E_n] && \text{(edge context)} \\
      \nctx_G(a)&:=[N_1,\ldots,N_m] && \text{(node context)} \\
      \ctx_G(a)&:=\ectx_G(a).\nctx_G(a) && \text{(context)}
    \end{align*}
    That is, $\ectx_G(a)$ lists the !-boxes containing $a$ that occur as part of $a$'s edgeterm, and $\nctx_G(a)$ lists the rest.
\end{definition}

Finally, a !-tensor expression is a !-pretensor expression where !-box/edge names must be suitably unique and occur in compatible contexts.

\begin{definition}
  A !-tensor expression is a !-pretensor expression satisfying the following conditions:
  \begin{itemize}
    \item[F1.] $\ein a$ and $\eout a$ occur at most once for each edge name $a$
    \item[F2.] $\term{[\ldots]A}$ must occur at most once for each !-box name $A$
    \item[C1.] $\ectx_G(a)\cap \nctx_G(a)=\varnothing$ \: for all edges $a \in \edgenames$ in $G$
    \item[C2.] If $\ectx_G(a)=[B_1,\ldots,B_n]$ \; then all $B_i\in\Boxes(G)\:$ and $\; B_1\prec_G B_2\prec_G \ldots \prec_G B_n$
    \item[C3.] For all bound pairs $\ein a, \eout a$ of edge names in $G$, there exist lists $es, bs$ of !-box names such that:
    \[ es.\nctx_G(\ein{a}) = \ectx_G(\eout{a}).bs
       \quad \textrm{and} \quad
       es.\nctx_G(\eout{a}) = \ectx_G(\ein{a}).bs \]
  \end{itemize}
  where $A \prec_G B$ means that the !-box $A$ is nested inside $B$ in $G$ (without other boxes nested between). We write $\graphterms$ for the set of all !-tensor expressions for a signature $\Sigma$.
\end{definition}

The freshness conditions F1 and F2 ensure that we have not used the same name for more than one edge/box. If a node is in !-box $B$ then any edges attached to it are already in $B$ so it wouldn't make sense to have $B$ in both the $\ectx(a)$ and $\nctx(a)$ for $a\in\diredgenames$, this is enforced by C1. C2 ensures that edge contexts are compatible with the !-boxes in the rest of the !-tensor. For example $\term{\phi.[[-a>A>B.}$ requires $A$ to be nested in $B$ so does not result in a valid !-tensor when composed with e.g. $\term{[[\psi.-b.]B \xi.<+b]B.]A}$. C3 ensures that edges into !-boxes from the outside are decorated correctly by their edge terms. For instance, this is allowed: $\term{\psi.[+a>A.[\phi.-a.]A}$ but this is not: $\term{\psi.+a.[\phi.-a.]A}$. The freedom to pick $bs, es$ allows bound pairs of edges to share some common context, e.g.: $\term{[\psi.+a.\phi.-a.]A}$ (both nodes are inside $A$) or $\term{\psi.[+a>A.\phi.<-a]A.[]A}$ (only the edge is inside $A$). In the second example, $A$ occurs in an edge term, so C2 requires the presence of $\term{[\ldots]A}$ somewhere in the !-tensor, hence we append the `empty' !-box $\term{[]A}$.

In this paper when we write a composition $GH$, unless stated otherwise, we will assume this forms a well defined !-tensor.

Naturally, we say two !-tensor expressions are equivalent, written $G \equiv H$, if one can be obtained from the other by using the usual tensor equivalences from Definition~\ref{def:tensor-equiv} or by using the edgeterm equivalences from Definition~\ref{def:edgeterm-equiv}.

We call the graphical notation for !-tensors the \textit{non-commutative !-graph notation}, or simply (non-commutative) !-graphs.

\begin{theorem}\label{thm:bang-graph-notation}
  Any !-tensor can be represented unambiguously using non-commutative !-graph notation.
\end{theorem}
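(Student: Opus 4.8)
The plan is to show that the translation from !-tensor expressions to !-graphs is well-defined on $\equiv$-equivalence classes and injective up to the evident notion of !-graph isomorphism, so that a !-graph can be read back as a unique !-tensor. First I would make precise what the !-graph notation records: a node for each atomic subterm $\term{\phi.e.}$ (labelled by its type $\phi$), an edge for each bound pair $\ein a,\eout a$ and a half-edge (boundary wire) for each free directed edge, a !-box for each $\term{[\ldots]A}$ with its nesting given by subterm containment, and, for each group of edges entering a !-box from a node, an arrow (clockwise for $\term{[e>A}$, anticlockwise for $\term{<e]A}$) drawn at the appropriate nesting depth. The cyclic order of edges around a node is read off from the left-to-right order of the edgeterm $e$, anchored at the tick. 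I would then observe that this data is exactly a combinatorial structure — essentially a !-box decorated graph with cyclic edge orders and arrow-annotated edge groups — and that the map from expressions to this data is the content of the notation.

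Next I would check invariance under $\equiv$. The tensor-level equivalences (associativity and commutativity of juxtaposition, unit laws, and contraction along $\term{1.+a-b.}$) clearly do not change the underlying graph-with-boxes: juxtaposition order is irrelevant to which nodes/edges/boxes are present, and the identity-tensor rules merely rename or splice a wire. The edgeterm equivalences ($\epsilon$ as a unit, associativity of concatenation, and $\term{[\epsilon>A}\equiv\epsilon\equiv\term{<\epsilon]A}$) likewise only affect how an edgeterm is parenthesised or whether an empty edge group is recorded — and an empty edge group carries no arrow and attaches no edges, so it is invisible in the picture. Renaming bound edge names and (via F2-respecting $\alpha$-renaming) !-box names does not change the picture either, since nodes are unlabelled by name and boxes are identified by nesting. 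Hence the notation descends to a well-defined map on !-tensors.

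For unambiguity I need the converse: from a !-graph in this notation one recovers the $\equiv$-class. The key steps are (i) the cyclic order of edges at each node together with the tick determines the edgeterm up to associativity and up to insertion/removal of empty groups, because the nesting of the arrows (parent arrows drawn nearer the node, per the stated convention) dictates the bracket structure $\term{[\cdot>A}$/$\term{<\cdot]A}$ uniquely; (ii) the box nesting in the picture determines the $\term{[\cdot]A}$ bracketing of the whole expression up to the order in which sibling subterms are juxtaposed, which is exactly the commutativity equivalence $GH\equiv HG$; (iii) choosing names for the edges and boxes is exactly the bound-name freedom. Assembling these, any two !-tensor expressions with the same picture are related by a sequence of the moves in Definitions~\ref{def:tensor-equiv} and~\ref{def:edgeterm-equiv}, i.e. are $\equiv$-equivalent; and conversely the picture is visibly surjective onto all finite such decorated structures satisfying the graphical analogues of F1--F2, C1--C3.

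The main obstacle I expect is step (i)–(ii) of the converse: pinning down precisely how the arrows' drawn depths encode the \emph{interleaving} of edge-group brackets with each other and with the box brackets, and checking that the conditions C1--C3 are exactly what guarantee this reading-back is consistent (e.g. that C2 forces every box name appearing in an edgeterm to have a genuine box in the picture to attach its arrow to, and that C3 forces the two ends of a bound edge to sit at compatible depths so the edge can actually be drawn). I would handle this by a careful induction on the nesting structure of !-boxes, peeling off the outermost boxes and the outermost edge-group arrows simultaneously, using C1--C3 at each stage to see that the remaining data is again a well-formed !-tensor expression of smaller size, so that the reconstruction terminates and is forced at every step.
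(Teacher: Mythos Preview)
Your proposal is correct and establishes the same bijection as the paper, but you organise it from the opposite end and with more formal scaffolding. The paper's proof is brief and procedural: it gives an explicit reading recipe from a !-graph to a !-tensor expression (choose fresh names for the interior edges, write down the nested $\term{[\ldots]A}$ brackets as drawn, place each node inside its boxes, then transcribe each node's edgeterm clockwise from the tick, wrapping groups in $\eclock{\ldots}{A}$ or $\eanti{\ldots}{A}$ according to the arrows), observes that the only arbitrary choices made are bound edge names and the order in which tensors are juxtaposed --- both absorbed by $\equiv$ --- and then simply says the other direction is obtained by running the procedure in reverse. You instead define the drawing map from expressions to graphs first, verify invariance under each $\equiv$-generator explicitly, and then argue injectivity via an induction on box nesting. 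Your route buys an honest check of $\equiv$-invariance and a clearer account of why C1--C3 are precisely what make the reconstruction well-posed; the paper's route buys brevity, treating the statement as essentially a notational convention illustrated on a worked example rather than a theorem requiring structural induction.
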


\begin{proof}
  We show this by providing a general procedure for interpreting a !-graph as a !-tensor expression, and vice-versa. For the sake of clarity, we demonstrate each step on a worked example. Given a non-commutative !-graph, we wish to obtain a unique equivalence class of !-tensor expressions under $\equiv$. Begin by choosing fresh names to write on all the interior edges.
  \[ %
\beginpgfgraphicnamed{bb-interp1}
\begin{tikzpicture}
	\begin{pgfonlayer}{nodelayer}
		\node [style=arbi] (0) at (0, 1.5) {$\phi$};
		\node [style=arbi] (1) at (1.75, -1) {$\psi$};
		\node [style=bbox, label=B] (2) at (1, -0.25) {};
		\node [style=bbox, label=A] (3) at (0.5, 0.25) {};
		\node [style=none] (4) at (2.5, -0.25) {};
		\node [style=none] (5) at (2.5, -2.5) {};
		\node [style=none] (6) at (1, -2.5) {};
		\node [style=none] (7) at (0.5, -2.75) {};
		\node [style=none] (8) at (2.75, -2.75) {};
		\node [style=none] (9) at (2.75, 0.25) {};
		\node [style=none] (10) at (-1, -2.5) {};
		\node [style=arbi] (11) at (-1.75, -1) {$\psi$};
		\node [style=none] (12) at (-1, -0.25) {};
		\node [style=none] (13) at (-2.5, -2.5) {};
		\node [style=bbox, label=C] (14) at (-2.5, -0.25) {};
		\node [style=wire, label={right:a}] (15) at (0, 2.5) {};
		\node [style=wire, label={right:c}] (16) at (-1.75, -2) {};
		\node [style=wire, label={right:b}] (17) at (1.75, -2) {};
	\end{pgfonlayer}
	\begin{pgfonlayer}{edgelayer}
		\draw [style=boxedge] (3) to (2);
		\draw [style=boxedge] (2) to (4.center);
		\draw [style=boxedge] (4.center) to (5.center);
		\draw [style=boxedge] (5.center) to (6.center);
		\draw [style=boxedge] (6.center) to (2);
		\draw [style=boxedge] (3) to (9.center);
		\draw [style=boxedge] (9.center) to (8.center);
		\draw [style=boxedge] (8.center) to (7.center);
		\draw [style=boxedge] (7.center) to (3);
		\draw [style=directed, arcin={{B}{0}{-65}{4mm}{1}}] (1) to (0);
		\draw [draw=none, arcin={{A}{0}{75}{2mm}{1}}] (1) to (0);
		\draw [style=boxedge] (14) to (12.center);
		\draw [style=boxedge] (12.center) to (10.center);
		\draw [style=boxedge] (10.center) to (13.center);
		\draw [style=boxedge] (13.center) to (14);
		\draw [style=directed, arcin={{C}{0}{-30}{6mm}{1}}] (11) to (0);
		\draw [style=directed] (0) to (15);
		\draw [style=directed] (16) to (11);
		\draw [style=directed] (17) to (1);
	\end{pgfonlayer}
\end{tikzpicture}}
\endpgfgraphicnamed \quad \Longrightarrow \quad
\beginpgfgraphicnamed{bb-interp2}
\begin{tikzpicture}
	\begin{pgfonlayer}{nodelayer}
		\node [style=arbi] (0) at (0, 1.5) {$\phi$};
		\node [style=arbi] (1) at (1.75, -1) {$\psi$};
		\node [style=bbox, label=B] (2) at (1, -0.25) {};
		\node [style=bbox, label=A] (3) at (0.5, 0.25) {};
		\node [style=none] (4) at (2.5, -0.25) {};
		\node [style=none] (5) at (2.5, -2.5) {};
		\node [style=none] (6) at (1, -2.5) {};
		\node [style=none] (7) at (0.5, -2.75) {};
		\node [style=none] (8) at (2.75, -2.75) {};
		\node [style=none] (9) at (2.75, 0.25) {};
		\node [style=none] (10) at (-1, -2.5) {};
		\node [style=arbi] (11) at (-1.75, -1) {$\psi$};
		\node [style=none] (12) at (-1, -0.25) {};
		\node [style=none] (13) at (-2.5, -2.5) {};
		\node [style=bbox, label=C] (14) at (-2.5, -0.25) {};
		\node [style=wire, label={right:a}] (15) at (0, 2.5) {};
		\node [style=wire, label={right:c}] (16) at (-1.75, -2) {};
		\node [style=wire, label={right:b}] (17) at (1.75, -2) {};
		\node [style=none, gray] (18) at (-1.25, 0.25) {d};
		\node [style=none, gray] (19) at (1, 0.5) {e};
		\node [style=none] (20) at (-1.5, 0.25) {};
		\node [style=none] (21) at (-2.75, 0.75) {};
		\node [style=none] (22) at (1.25, 0.5) {};
		\node [style=none] (23) at (2.5, 1) {};
	\end{pgfonlayer}
	\begin{pgfonlayer}{edgelayer}
		\draw [style=boxedge] (3) to (2);
		\draw [style=boxedge] (2) to (4.center);
		\draw [style=boxedge] (4.center) to (5.center);
		\draw [style=boxedge] (5.center) to (6.center);
		\draw [style=boxedge] (6.center) to (2);
		\draw [style=boxedge] (3) to (9.center);
		\draw [style=boxedge] (9.center) to (8.center);
		\draw [style=boxedge] (8.center) to (7.center);
		\draw [style=boxedge] (7.center) to (3);
		\draw [style=directed, arcin={{A}{0}{75}{2mm}{1}}] (1) to (0);
		\draw [draw=none, arcin={{B}{0}{-65}{4mm}{1}}] (1) to (0);
		\draw [style=boxedge] (14) to (12.center);
		\draw [style=boxedge] (12.center) to (10.center);
		\draw [style=boxedge] (10.center) to (13.center);
		\draw [style=boxedge] (13.center) to (14);
		\draw [style=directed, arcin={{C}{0}{-30}{6mm}{1}}] (11) to (0);
		\draw [style=directed] (0) to (15);
		\draw [style=directed] (16) to (11);
		\draw [style=directed] (17) to (1);
		\draw [very thick, {gray!50!white}, ->] (21.center) to (20.center);
		\draw [very thick, {gray!50!white}, ->] (23.center) to (22.center);
	\end{pgfonlayer}
\end{tikzpicture}}
\endpgfgraphicnamed \]
  Then, write the !-boxes with nesting as depicted in the diagram:
  \[ \term{\ldots[\ldots]C[\ldots[\ldots]B]A} \]
  Write each node in the diagram on the location it occurs (w.r.t. !-boxes):
  \[ \term{\phi.\ldots.[\psi.\ldots.]C[[\psi.\ldots.]B]A} \]
  Finally, add the edges of each node, reading clockwise from the tick. Edges occurring under a clockwise arrow marked $A$ should be enclosed in $\eclock{\dots}{A}$, and edges under an anti-clockwise arrow should be enclosed in $\eanti{\dots}{A}$, where the outermost groups are the ones closest to the node in the picture.
  \[ \term{\phi.+a[<-e]B>A<-d]C.[\psi.+d-c.]C[[\psi.+e-b.]B]A} \]
  The only choices we made in this process were the choice of interior edge names and the order in which to write the individual tensors. However, up to $\equiv$, these are irrelevant. To show that any !-tensor can be represented this way, we simply run the above procedure in reverse.
\end{proof}

Because of this theorem, we use the terms !-tensor and !-graph interchangeably, depending on whether we wish to refer to the syntactic vs. graphical notation.

\section{Instantiating tensor expressions with !-boxes}\label{sec:instantiation}

The following diagram demonstrates two !-box operations we can apply to a graph: killing a !-box is the operation deleting the box $B$ and all contents (including edges to/from $B$), and expanding is the operation creating a new concrete instance of the subgraph inside $B$ (attached appropriately). We can represent the original graph in this diagram with the tensor expression $\term{[\phi.+a-c+b.\psi.+c-d.]B\xi.[-a>B.\zeta.<-b+d]B-e.}$.
\begin{center}
\beginpgfgraphicnamed{bb-ex1-kill}
\begin{tikzpicture}
      \begin{pgfonlayer}{nodelayer}
        \node [style=arbi] (18) at (0.5, 0.5) {$\xi$};
        \node [style=arbi] (19) at (1, -1) {$\zeta$};
        \node [style=wire, label=right:e] (38) at (1,-2.25) {};
      \end{pgfonlayer}
      \begin{pgfonlayer}{edgelayer}
        \draw [style=directed] (38) to (19);
      \end{pgfonlayer}
    \end{tikzpicture}}
\endpgfgraphicnamed
	\quad$\leftarrow \Kill_B -$
\beginpgfgraphicnamed{bb-ex1}
\begin{tikzpicture}
      \begin{pgfonlayer}{nodelayer}
        \node [style=arbi] (2) at (2, 0.5) {$\phi$};
        \node [style=arbi] (3) at (3.25, -0.25) {$\psi$};
        \node [style=arbi] (4) at (2, 2.25) {$\xi$};
        \node [style=arbi] (5) at (2.5, -2.25) {$\zeta$};
        \node [style=wire, label=right:e] (11) at (2.5, -3.5) {};
        \node [style=bbox, label=B] (13) at (1, 1.25) {};
        \node [style=none] (14) at (4, -1) {};
        \node [style=none] (15) at (4, 1.25) {};
        \node [style=none] (16) at (1, -1) {};
      \end{pgfonlayer}
      \begin{pgfonlayer}{edgelayer}
        \draw [style=directed, arcin={{}{0}{60}{1.5mm}{1}}] (2) to (4);
        \draw [style=directed, arcin={{}{15}{-90}{1.5mm}{1}}] (2) to (5);
        \draw [style=directed] (5) to (3);
        \draw [style=directed] (3) to (2);
        \draw [style=directed] (11) to (5);
        \draw [style=boxedge] (13) to (15.center);
        \draw [style=boxedge] (15.center) to (14.center);
        \draw [style=boxedge] (14.center) to (16.center);
        \draw [style=boxedge] (16.center) to (13);
      \end{pgfonlayer}
    \end{tikzpicture}}
\endpgfgraphicnamed
	\quad$-\, \Exp_B \rightarrow$\quad
\beginpgfgraphicnamed{bb-ex1-exp}
\begin{tikzpicture}
      \begin{pgfonlayer}{nodelayer}
        \node [style=arbi] (2) at (4.5, 0.5) {$\phi$};
        \node [style=arbi] (4) at (3, 2.25) {$\xi$};
        \node [style=arbi] (6) at (5.75, -0.25) {$\psi$};
        \node [style=arbi] (7) at (3.5, -2.25) {$\zeta$};
        \node [style=arbi] (9) at (1.75, 0.5) {$\phi$};
        \node [style=arbi] (11) at (3, -0.25) {$\psi$};
        \node [style=none] (16) at (6.25, -0.75) {};
        \node [style=bbox, label=B] (15) at (3.75, 1) {};
        \node [style=wire, label=right:e] (17) at (3.5, -3.5) {};
        \node [style=none] (19) at (6.25, 1) {};
        \node [style=none] (20) at (3.75, -0.75) {};
      \end{pgfonlayer}
      \begin{pgfonlayer}{edgelayer}
        \draw [style=directed, arcin={{}{0}{45}{2mm}{1}}] (2) to (4);
        \draw [style=directed, arcin={{}{15}{-60}{3.5mm}{1}}] (2) to (7);
        \draw [style=directed] (7) to (6);
        \draw [style=directed] (6) to (2);
        \draw [style=directed] (11) to (9);
        \draw [style=directed] (9) to (4);
        \draw [style=directed] (9) to (7);
        \draw [style=directed] (7) to (11);
        \draw [style=directed] (17) to (7);
        \draw [style=boxedge] (15) to (19.center);
        \draw [style=boxedge] (19.center) to (16.center);
        \draw [style=boxedge] (16.center) to (20.center);
        \draw [style=boxedge] (20.center) to (15);
      \end{pgfonlayer}
    \end{tikzpicture}}
\endpgfgraphicnamed
\end{center}

We can define both of these operations formally. Since expansion involves copying various edge/!-box names, we need a means of obtaining fresh names. Let $\Edges(G) \subset \edgenames$ and $\Boxes(G) \subset \boxnames$ be the edge names and !-box names occurring in a !-tensor $G$, respectively.

\begin{definition}
	A \textit{freshness function} for a !-tensor $G$ is a pair of bijections $\fr: \edgenames \to \edgenames$ and $\fr: \boxnames \to \boxnames$ such that
  \[ \Edges(G) \cap \fr(\Edges(G)) = \varnothing \quad \textrm{and} \quad
     \Boxes(G) \cap \fr(\Boxes(G)) = \varnothing \]
\end{definition}

For !-tensor expressions $G$ or edgeterms $e$, we will write $\fr(G)$ or $\fr(e)$ to designate the new expression with names substituted according to the given bijections.

\begin{definition}
  We define $\Op_B \in \{ \Exp_B, \Kill_B \}$ recursively over !-tensor expressions. For most cases, both operations act trivially:
  \begin{align*}
    \Op_B(G H) & := \Op_B(G) \Op_B(H) &
    \Op_B(e f) &:= \Op_B(e) \Op_B(f) \\
    \Op_B(\term{[G]A}) &:= \term{[\Op_B(G)]A} &
    \Op_B(\term{[e>A}) &:= \term{[\Op_B(e)>A} \\
    \Op_B(\phi_e) &:= \phi_{\Op_B(e)} &
    \Op_B(\term{<e]A}) &:= \term{<\Op_B(e)]A} \\
    \Op_B(x) & : = x & &
    % %
    % \Op_B(1) &:= 1 &
    % \Op_B(\ein a) &:= \ein a \\
    % %
    % \Op_B(\epsilon) & := \epsilon &
    % \Op_B(\eout a) &:= \eout a
  \end{align*}
  where $A \neq B$ and $x \in \{ 1, \term{1.+a-b.}, \ein a, \eout a, \epsilon \}$. Then, for the final three cases:
  \begin{align*}
    \Exp_B(\term{[G]B})  &:= \term{[G]B\fr(G)} &
    \Kill_B(\term{[G]B}) &:= 1 \\
    \Exp_B(\term{[e>B})  &:= \term{[e>B\fr(e)} &
    \Kill_B(\term{[e>B}) &:= \epsilon \\
    \Exp_B(\term{<e]B})  &:= \term{\fr(e)<e]B} &
    \Kill_B(\term{<e]B}) &:= \epsilon
  \end{align*}
\end{definition}

Note that $\Exp_B(G)$ implicitly takes a freshness function as input. If we wish to make this explicit, we will write $\Exp_{B,\fr}$. The above operations can be lifted from !-tensor expressions to !-tensors, i.e. $\equiv$-classes of expressions, because of the following theorem.

\begin{theorem}
	Let $\fr$ be a freshness function for two !-tensor expressions $G, H$. Then $G\equiv H$ implies $\Exp_{B,\fr}(G) \equiv \Exp_{B,\fr}(H)$ and $\Kill_B(G) \equiv \Kill_B(H)$.
\end{theorem}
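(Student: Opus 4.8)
The plan is to induct on the structure of a derivation witnessing $G \equiv H$. Recall that $\equiv$ is the least congruence on !-tensor expressions closed under (a) the tensor identities of Definition~\ref{def:tensor-equiv}, (b) the edgeterm identities of Definition~\ref{def:edgeterm-equiv}, and (c) renaming of bound edge names. Accordingly, it suffices to show: (i) $\Op_B$ is compatible with each single generating identity applied at the top level; (ii) it is compatible with the congruence rules, i.e.\ with applying an identity deep inside a subterm or subedgeterm; and (iii) the result lifts along arbitrary zigzags, which uses only that $\equiv$ on the codomain is an equivalence relation together with the freshness bookkeeping discussed at the end.

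For (i), most cases are immediate: $\Op_B$ is defined to commute with composition $GH$ and with edgeterm concatenation $ef$, and to fix $1$, so the associativity, commutativity and unit laws transport verbatim; likewise $\term{[\epsilon>A} \equiv \epsilon \equiv \term{<\epsilon]A}$ survives since $\fr(\epsilon) = \epsilon$ and, whether or not $A = B$, $\Op_B(\term{[\epsilon>A})$ reduces to something $\equiv$-equal to $\epsilon = \Op_B(\epsilon)$. The only substantive top-level cases are the free-contraction identities $G\term{1.+b-a.} \equiv G[\ein b \mapsto \ein a]$ and its dual. Here I would isolate a small lemma: the well-formedness conditions C1--C3 force the free endpoint $\ein b$ (and, once contracted, its partner $\eout b$) to lie outside \emph{every} !-box of $G$, so $\Op_B$ acts trivially on it and commutes with the renaming, sending both sides to $\equiv$-equal expressions. (For $\Kill_B$ one also notes that the same argument rules out the renamed bound pair having an endpoint inside $B$.) The bound-renaming rule is handled by observing that $\Op_B$ commutes with an injective edge-renaming up to simultaneously renaming the corresponding $\fr$-images; since a renamed bound pair lying inside $B$ is copied by $\Exp_B$ to another bound pair, the two results differ only by a further bound renaming, hence are $\equiv$.

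For (ii), I would reduce the congruence rules to the top-level cases by recording two facts. First, $\Exp_B$ applied to a context either leaves the position of the hole alone (when $A \neq B$, or the hole lies outside box $B$) or duplicates it; in the latter case we additionally need that $L \equiv R$ implies $\fr(L) \equiv \fr(R)$, which holds because renaming commutes with every generating identity, and then the two results agree by congruence of $\equiv$. Second, $\Kill_B$ either leaves the hole alone or deletes it entirely, in which case both sides become literally equal. Combining these with the single-step cases of (i) gives compatibility with any one identity applied at arbitrary depth.

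The step I expect to be the main obstacle is (iii) together with the implicit freshness hypothesis: in a zigzag $G = G_0 \equiv G_1 \equiv \cdots \equiv G_n = H$, the fixed freshness function $\fr$ need not be fresh for the intermediate $G_i$, since a bound renaming could introduce a name in $\fr$'s image, and then $\Exp_{B,\fr}(G_i)$ need not even be well-formed. I would resolve this by first rerouting the derivation: only finitely many names occur in $G$ and $H$ and each step uses only finitely many fresh names, so we may assume every intermediate bound renaming avoids the finite set $\fr(\Edges(G)\cup\Edges(H)) \cup \fr^{-1}(\Edges(G)\cup\Edges(H))$ (and similarly for box names). Using this, together with the hypothesis that $\fr$ is fresh for both $G$ and $H$ and the observation that the free names sitting inside box $B$ are never touched by any $\equiv$-step (hence lie in $\Edges(G)\cap\Edges(H)$), one checks that $\fr$ remains a valid freshness function for each $G_i$, so the single-step result applies at every link and transitivity of $\equiv$ closes the argument. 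The $\Kill_B$ half requires none of this freshness care and follows from the same induction, simplified throughout.
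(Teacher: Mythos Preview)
Your approach is correct and is essentially the same strategy the paper uses: check that each generating $\equiv$-identity is preserved by $\Op_B$, then close under congruence and transitivity. The paper's own argument (both the one-line sketch in the body and the case analysis in the appendix) does exactly your step (i)---verifying $\Op_B(1G)\equiv\Op_B(G)$, $\Op_B(GH)\equiv\Op_B(HG)$, the edgeterm identities, and so on---but leaves the congruence and zigzag steps implicit.

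Where you go beyond the paper is in your step (iii). The observation that a fixed $\fr$ need not be fresh for intermediate terms in a derivation $G\equiv G_1\equiv\cdots\equiv H$ (because a bound renaming could introduce a name in $\fr(\Edges(G))$) is a genuine technical point that the paper does not address; your rerouting argument---choosing intermediate bound names to avoid a fixed finite set---is the standard and correct fix. Likewise, your treatment of the contraction identity via the context lemma (that C3 forces $\ein b$ in $G\,\term{1.+b-a.}$ to have empty context at the top level) and your handling of congruence by a hole-preserved/hole-duplicated/hole-deleted case split are both sound and more explicit than anything in the paper. So: same approach, but you have filled in details the paper leaves to the reader, and in particular you have spotted and resolved the freshness-along-zigzags issue that the paper's ``use the same $\fr$'' remark only gestures at.
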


\begin{proof}
  (Sketch) This can be shown by induction over the structure of !-tensor expressions. It is crucial that we use the \textit{same} freshness function $\fr$ for the expansions of $G$ and $H$, otherwise $G$ and $H$ could end up with distinct free edges or !-boxes.
\end{proof}

These two !-box operations give us a means to define the set of all (concrete) tensors that a single !-tensor represents.

\begin{definition}
  A tensor $G'$ is a \textit{concrete instance} of a !-tensor $G$ if it is obtained from $G$ by repeatedly applying the two !-box operations $\Exp$ and $\Kill$ until $G'$ contains no !-boxes. This sequence of operations is called the \textit{instantiation} of $G'$. We write $\llbracket G \rrbracket$ for the set of all concrete instances of $G$.
\end{definition}

\noindent When we fix a model in some category $\mathcal C$, concrete tensors can then be interpreted as morphisms in $\mathcal C$, just as before. We therefore interpret each !-tensor expression as a family of morphisms in $\mathcal C$, namely the interpretations of each of its concrete instances.

\section{Reasoning with !-boxes}\label{sec:box-reasoning}

The real power of !-boxes comes from the ability to do equational reasoning using infinite families of rules. Just as it makes sense to instantiate a single !-tensor, it makes sense to instantiate an \textit{equation} $G = H$ between two !-tensors, provided they have compatible boundaries.

\begin{definition}
  A \textit{!-tensor equation} `$G = H$' consists of a pair of !-tensors $(G, H)$ that have \textit{compatible boundaries}. That is, they have identical free edge names and !-boxes, $A \prec_G B \Leftrightarrow A \prec_H B$ for all !-boxes in $G$ and $H$, and $\ctx_G(a) = \ctx_H(a)$ for all free edge names.
\end{definition}

Intuitively, we require that the LHS and RHS of a !-tensor equation have the same interface to attach to other graphs (same free variables and same box structure). These consistency conditions guarantee that (i) applying !-box operations to valid equations yields valid equations, and (ii) when $G$ occurs as a sub-expression of some other !-tensor $K$, it can be substituted for $H$ to yield another valid !-tensor $K'$.

\begin{theorem}
  Let $\fr$ be a freshness function for !-tensors $G, H$. Then, if $G = H$ is a !-tensor equation, then so too are:
  \begin{align*}
    \Kill_B(G = H) & \ \ :=\ \  (\Kill_B(G) = \Kill_B(H)) \\
    \Exp_B(G = H)  & \ \ :=\ \  (\Exp_{B,\fr}(G) = \Exp_{B,\fr}(H))
  \end{align*}
\end{theorem}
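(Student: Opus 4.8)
The plan is to prove this by verifying the two conditions that make up the definition of a \mbox{!-tensor} equation—namely well-formedness of each side as a \mbox{!-tensor} (conditions F1, F2, C1--C3), and compatibility of the two boundaries—are preserved under each of $\Kill_B$ and $\Exp_{B,\fr}$. I would handle $\Kill_B$ first, since it is the easier case: killing a \mbox{!-box} only deletes material, so F1, F2, C1 and C2 cannot be newly violated, and the argument reduces to checking that no bound edge pair $\ein a, \eout a$ survives on one side while its partner is deleted (which is ruled out because C3 for the original \mbox{!-tensor} forces both endpoints to share any \mbox{!-box} that sits in both node contexts). Then I would do $\Exp_{B,\fr}$, where the key point is that the freshly renamed copy $\fr(G)$ (resp.\ $\fr(e)$) is, by the definition of a freshness function, disjoint in both edge names and \mbox{!-box} names from the original, so F1 and F2 are maintained, and each edge in the copy has context obtained by applying $\fr$ to the original context, so C1--C3 transfer verbatim from the already-known validity of $G$.

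The crucial observation—and this is where I would spend most of the effort—is that the \emph{same} freshness function $\fr$ is used on both sides of the equation. For boundary compatibility we must check that after the operation the two sides still have identical free edge names, identical \mbox{!-boxes}, the same nesting relation $\prec$, and the same context for each free edge. Since $G$ and $H$ start with compatible boundaries, they contain the same \mbox{!-box} $B$ with the same surrounding nesting; $\Kill_B$ removes exactly $B$ (and its contents, which are bound, hence invisible at the boundary) from both, and $\Exp_{B,\fr}$ adds to both a copy of $B$'s contents together with one fresh \mbox{!-box} $\fr(B)$ in the same nesting position. Because $\fr$ is a bijection applied identically to both, the free names, the new \mbox{!-box} names, and all the contexts line up on the nose. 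I would make this precise by a short structural induction over \mbox{!-tensor} expressions mirroring the recursive definition of $\Op_B$, tracking at each step how $\Edges$, $\Boxes$, the free-name set, and $\ctx$ change; by the earlier theorem the statement is independent of the chosen representatives, so the induction is well-posed on $\equiv$-classes.

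The main obstacle I anticipate is the bookkeeping around condition C3 under expansion: when a \mbox{!-box} $B$ is expanded, a bound pair $\ein a,\eout a$ that was entirely inside $B$ spawns a fresh bound pair $\ein{\fr(a)},\eout{\fr(a)}$ inside the copy, but a bound pair with, say, only its edgeterm-side inside $B$ splits in a more delicate way—one endpoint's edge context loses the $B$-layer (it is now literally in the expanded copy region) while the renamed occurrences must still satisfy the $es, bs$ padding freedom in C3. Verifying that the $es, bs$ witnessing C3 for the original pair can be adjusted (by inserting or deleting the relevant $B$ or $\fr(B)$ entry) to witness C3 for both the original-but-now-truncated pair and the freshly created pair is the genuinely technical step; everything else is routine once the inductive hypothesis is set up correctly, and I would state it carefully rather than grind through every subcase here.
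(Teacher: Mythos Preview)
Your approach is essentially the paper's: verify directly that the defining conditions of a !-tensor equation survive each operation, using that the same $\fr$ is applied on both sides. The paper's sketch phrases the context check as a three-way case split on a free edge name $a$ according to whether $B$ lies in $\nctx(a)$, in $\ectx(a)$, or in neither, which is exactly the case analysis your structural induction would unwind to. One organisational difference: the paper factors the well-formedness of $\Kill_B(G)$ and $\Exp_B(G)$ as standalone !-tensors (your F1--C3 check) into separate preservation theorems, so that the present theorem is really only about boundary compatibility; your proposal folds both into one argument, which is fine but makes the C3 bookkeeping you flag look more central here than it is.

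One concrete slip to fix: you write that $\Exp_{B,\fr}$ ``adds to both a copy of $B$'s contents together with one fresh !-box $\fr(B)$ in the same nesting position''. It does not. By definition $\Exp_B(\term{[G]B}) = \term{[G]B}\,\fr(G)$, so the copy of the contents is placed \emph{outside} any new box; no !-box named $\fr(B)$ is created (that is what $\Copy_B$ does). The fresh !-box names that do appear are the $\fr$-images of boxes \emph{nested inside} $B$, and it is those whose nesting must agree on the two sides---which they do, again because the same $\fr$ is used. This does not break your argument, but your ``same nesting position for $\fr(B)$'' sentence would need to be rewritten accordingly.
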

\begin{proof} %Sketch
  (Sketch) It is straightforward to show that killing/expanding $B$ affects the free variables and the !-boxes in the same way on the LHS/RHS. To check the contexts, split a single free edge name into 3 cases, depending on whether the !-box $B$ occurs in the node-context of $a$, the edge context of $a$, or neither. In all cases, $a$ and/or $\fr(a)$ will have identical contexts on the LHS/RHS.
\end{proof}

As in the case of !-tensors, we can define $\llbracket G = H \rrbracket$ to be the set of all concrete rules derivable from $G = H$ using the !-box operations. A valid model of a graphical theory is then one where all of the equations in $\llbracket G = H \rrbracket$ hold for each equation $G = H$. Proving that a rule holds for \textit{all} of its instances could be a daunting task in general, however in many cases a technique called \textit{!-box induction}---which we will meet shortly---comes to the rescue.

%Rather than checking this directly, it is often more useful to prove the soundness of such rules using !-box induction, which we will meet shortly.

We obtain a notion of substitution of sub-expressions constructively, via inference rules. The first few should look familiar as congruence- and substitution-like rules for !-tensors.
\[
	\scalebox{1.0}{
  \AxiomC{$G = H$}
  \RightLabel{\scriptsize(Prod)}
  \UnaryInfC{$GK = HK$}
  \DisplayProof
\qquad
  \AxiomC{$G=H$}
  \RightLabel{\scriptsize(Box)}
  \UnaryInfC{$\term{[G]A} = \term{[H]A}$}
  \DisplayProof
\qquad
  \AxiomC{$G = H$}
  \RightLabel{\scriptsize(Rename)}
  \UnaryInfC{$G[a\to b] = H[a\to b]$}
  \DisplayProof
  }
\]
Where $G[a\to b]$ and $H[a\to b]$ are $G$ and $H$ with the free edge/!-box name $a$ replaced by $b$. We require that $K$ and $A$ are chosen such that $GK$, $HK$, $\term{[G]A}$, and $\term{[H]A}$ are well-defined. These rules provide the conditions under which some equation $G = H$ can be unified, given some context, with a bigger equation $G' = H'$. The final inference rule (Weaken) is less intuitive from the point of view of terms, and is best understood graphically. Consider the following embedding of !-graphs:
\[ \term{[\psi.-a+b.]A\phi.<-b]A.} :=
\beginpgfgraphicnamed{embed-ex}
\begin{tikzpicture}
	\begin{pgfonlayer}{nodelayer}
		\node [style=arbi] (0) at (-2, 2) {$\phi$};
		\node [style=none] (1) at (-1.25, -1.25) {};
		\node [style=arbi] (2) at (-2, 0.25) {$\psi$};
		\node [style=none] (3) at (-1.25, 1) {};
		\node [style=none] (4) at (-2.75, -1.25) {};
		\node [style=bbox, label=A] (5) at (-2.75, 1) {};
		\node [style=wire, label={right:a}] (6) at (-2, -0.75) {};
		\node [style=none] (7) at (0, 0) {$\hookrightarrow$};
		\node [style=none] (8) at (2.75, 1) {};
		\node [style=arbi,line width=0.4pt] (9) at (2, 0.25) {$\psi$};
		\node [style=arbi,line width=0.4pt] (10) at (2, 2) {$\phi$};
		\node [style=bbox, label=A] (11) at (1.25, 1) {};
		\node [style=none] (12) at (1.25, -2.25) {};
		\node [style=none] (13) at (2.75, -2.25) {};
		\node [style=arbi, {black!50!white}] (14) at (2, -1.5) {$\xi$};
		\node [style=none] (15) at (2, -0.75) {};
	\end{pgfonlayer}
	\begin{pgfonlayer}{edgelayer}
		\draw [style=boxedge] (5) to (3.center);
		\draw [style=boxedge] (3.center) to (1.center);
		\draw [style=boxedge] (1.center) to (4.center);
		\draw [style=boxedge] (4.center) to (5);
		\draw [style=directed, arcin={{}{0}{-45}{2mm}{1}}] (2) to (0);
		\draw [style=directed] (6) to (2);
		\draw [style=boxedge] (11) to (8.center);
		\draw [style=boxedge] (8.center) to (13.center);
		\draw [style=boxedge] (13.center) to (12.center);
		\draw [style=boxedge] (12.center) to (11);
		\draw [style=directed, arcin={{}{0}{-45}{2mm}{1}}] (9) to (10);
		\draw [style=directed] (15.center) to (9);
		\draw [style=undirected, {black!50!white}] (14) to (15.center);
	\end{pgfonlayer}
\end{tikzpicture}}
\endpgfgraphicnamed =:
   \term{[\psi.-a+b.\color{black!50!white}\xi.+a.\color{black}]A{\phi.<-b]A.}} \]
The LHS does not embed as a sub-term of the RHS, because the !-box $A$ contains more stuff on the RHS. However, semantically, this is perfectly fine, as all of the concrete instances of the LHS will have (uniquely-determined) embeddings into all of the concrete instances of the RHS. So, we also need a rule that allows us to `weaken' !-boxes by adding more nodes to them.
\[
	\scalebox{1.0}{
  \AxiomC{$G=G'$}
  \RightLabel{\scriptsize(Weaken)}
  \UnaryInfC{$\Weak_{A\mapsto K}(G)=\Weak_{A\mapsto K}(G')$}
  \DisplayProof
	}
\]
Where $\Weak_{A\mapsto K}(G)$ is defined recursively as:
\begin{align*}
  \Weak_{A\mapsto K}(\term{[G]A}) &:= \term{[GK]A} \\
  \Weak_{A\mapsto K}(\term{[G]B}) &:= \term{[\Weak_{A\mapsto K}(G)]B} & \text{if } A\not=B \\
  \Weak_{A\mapsto K}(G H) &:= \Weak_{A\mapsto K}(G) \Weak_{A\mapsto K}(H) \\
  \Weak_{A\mapsto K}(x) &:= x & x \in \{ 1, \term{1.+a-b.}, \phi_e \}
\end{align*}
These four rules give us everything we need to apply equations on !-tensors to obtain new equations. For example, the equation on the left below can be applied to delete all of the $\psi$-nodes occurring as input to a $\phi$. An example application of this rule is shown on the right.
\[ %
\beginpgfgraphicnamed{embed-rule-ex}
\begin{tikzpicture}
	\begin{pgfonlayer}{nodelayer}
		\node [style=arbi] (0) at (-1.5, 1.5) {$\phi$};
		\node [style=none] (1) at (-0.75, -1.75) {};
		\node [style=arbi] (2) at (-1.5, -0.25) {$\psi$};
		\node [style=none] (3) at (-0.75, 0.5) {};
		\node [style=none] (4) at (-2.25, -1.75) {};
		\node [style=bbox, label=A] (5) at (-2.25, 0.5) {};
		\node [style=wire, label={right:a}] (6) at (-1.5, -1.25) {};
		\node [style=none] (7) at (0, 0) {$=$};
		\node [style=none] (8) at (1, -1.5) {};
		\node [style=arbi] (9) at (1.75, 1) {$\phi$};
		\node [style=none] (10) at (2.5, -1.5) {};
		\node [style=none] (11) at (2.5, 0) {};
		\node [style=bbox, label=A] (12) at (1, 0) {};
		\node [style=wire, label={right:a}] (13) at (1.75, -0.75) {};
	\end{pgfonlayer}
	\begin{pgfonlayer}{edgelayer}
		\draw [style=boxedge] (5) to (3.center);
		\draw [style=boxedge] (3.center) to (1.center);
		\draw [style=boxedge] (1.center) to (4.center);
		\draw [style=boxedge] (4.center) to (5);
		\draw [style=directed, arcin={{}{0}{-45}{2mm}{1}}] (2) to (0);
		\draw [style=directed] (6) to (2);
		\draw [style=boxedge] (12) to (11.center);
		\draw [style=boxedge] (11.center) to (10.center);
		\draw [style=boxedge] (10.center) to (8.center);
		\draw [style=boxedge] (8.center) to (12);
		\draw [style=directed, arcin={{}{0}{-45}{2mm}{1}}] (13) to (9);
	\end{pgfonlayer}
\end{tikzpicture}}
\endpgfgraphicnamed \qquad \hookrightarrow \qquad
\beginpgfgraphicnamed{embed-big-rule-ex}
\begin{tikzpicture}
	\begin{pgfonlayer}{nodelayer}
		\node [style=none] (0) at (-0.75, 0.75) {};
		\node [style=arbi] (1) at (-1.5, 0) {$\psi$};
		\node [style=arbi] (2) at (-1.5, 1.75) {$\phi$};
		\node [style=bbox, label=A] (3) at (-2.25, 0.75) {};
		\node [style=none] (4) at (-2.25, -2.5) {};
		\node [style=none] (5) at (-0.75, -2.5) {};
		\node [style=arbi, {black!50!white}] (6) at (-1.5, -1.75) {$\xi$};
		\node [style=none] (7) at (-1.5, -1) {};
		\node [style=arbi, {black!50!white}] (8) at (1.75, -1) {$\xi$};
		\node [style=arbi] (9) at (1.75, 1.5) {$\phi$};
		\node [style=none] (10) at (2.5, 0.5) {};
		\node [style=none] (11) at (1.75, 0) {};
		\node [style=bbox, label=A] (12) at (1, 0.5) {};
		\node [style=none] (13) at (1, -1.75) {};
		\node [style=none] (14) at (2.5, -1.75) {};
		\node [style=none] (15) at (0, 0) {$=$};
	\end{pgfonlayer}
	\begin{pgfonlayer}{edgelayer}
		\draw [style=boxedge] (3) to (0.center);
		\draw [style=boxedge] (0.center) to (5.center);
		\draw [style=boxedge] (5.center) to (4.center);
		\draw [style=boxedge] (4.center) to (3);
		\draw [style=directed, arcin={{}{0}{-45}{2mm}{1}}] (1) to (2);
		\draw [style=directed] (7.center) to (1);
		\draw [style=undirected, {black!50!white}] (6) to (7.center);
		\draw [style=boxedge] (12) to (10.center);
		\draw [style=boxedge] (10.center) to (14.center);
		\draw [style=boxedge] (14.center) to (13.center);
		\draw [style=boxedge] (13.center) to (12);
		\draw [style=directed, arcin={{}{0}{-45}{2mm}{1}}] (11.center) to (9);
		\draw [style=undirected, {black!50!white}] (8) to (11.center);
	\end{pgfonlayer}
\end{tikzpicture}}
\endpgfgraphicnamed  \]
We can also add inference rules for each of our !-box operations i.e.
\[
	\scalebox{1.0}{
  \AxiomC{$G = H$}
  \RightLabel{\scriptsize(Exp)}
  \UnaryInfC{$\Exp_B(G = H)$}
  \DisplayProof
\qquad
  \AxiomC{$G=H$}
  \RightLabel{\scriptsize(Kill)}
  \UnaryInfC{$\Kill_B(G = H)$}
  \DisplayProof
  }
\]
Perhaps most interestingly, we can introduce new !-boxes, where previously there were none, via \textit{!-box induction}.
\[
	\scalebox{1.0}{
  \AxiomC{$\Kill_A(G=H)$}
  \AxiomC{$G=H\Rightarrow\Exp_A(G=H)$}
  \RightLabel{\scriptsize(Induction)}
  \BinaryInfC{$G=H$}
  \DisplayProof
  }
\]

As mentioned in Section~\ref{sec:intro}, non-commutative nodes give us the ability to make recursive definitions of variable-arity generators in terms of fixed-arity generators of our theory. This induction principle in turn gives us the means to lift rules about fixed arity generators up to more powerful !-tensor rules.

We conclude by showing a simple example. Suppose we take the theory of a monoid, i.e. the pair of generators \big(%
\beginpgfgraphicnamed{Induc-Unit}
\begin{tikzpicture}[small]
	\begin{pgfonlayer}{nodelayer}
		\node [style=comm] (0) at (0, -0.25) {};
		\node [style=wire] (1) at (0, 0.5) {};
	\end{pgfonlayer}
	\begin{pgfonlayer}{edgelayer}
		\draw [style=directed] (0) to (1);
	\end{pgfonlayer}
\end{tikzpicture}}
\endpgfgraphicnamed,%
\beginpgfgraphicnamed{Induc-Multiply}
\begin{tikzpicture}[small]
	\begin{pgfonlayer}{nodelayer}
		\node [style=wire] (0) at (0, 0.5) {};
		\node [style=wire] (1) at (0.25, -0.5) {};
		\node [style=comm] (2) at (0, 0) {};
		\node [style=wire] (3) at (-0.25, -0.5) {};
	\end{pgfonlayer}
	\begin{pgfonlayer}{edgelayer}
		\draw [style=directed] (1) to (2);
		\draw [style=directed] (3) to (2);
		\draw [style=directed] (2) to (0);
	\end{pgfonlayer}
\end{tikzpicture}}
\endpgfgraphicnamed\big) satisfying the commutativity and unit laws from (\ref{eqn:frob-laws}).Then we can recursively define, as a new generator, an $n$-fold tree of multiplications.
\begin{equation}
\beginpgfgraphicnamed{Induc-SpiderBase}
\begin{tikzpicture}[small]
	\begin{pgfonlayer}{nodelayer}
		\node [style=wire] (0) at (-0.75, 0.5) {};
		\node [style=arbi] (1) at (-0.75, -0.5) {};
		\node [style=written] (2) at (-0.125, 0) {:=};
		\node [style=comm] (3) at (0.5, -0.5) {};
		\node [style=wire] (4) at (0.5, 0.5) {};
	\end{pgfonlayer}
	\begin{pgfonlayer}{edgelayer}
		\draw [style=directed] (1) to (0);
		\draw [style=directed] (3) to (4);
	\end{pgfonlayer}
\end{tikzpicture}}
\endpgfgraphicnamed \qquad\qquad %
\beginpgfgraphicnamed{Induc-SpiderRecurs}
\begin{tikzpicture}[small]
	\begin{pgfonlayer}{nodelayer}
		\node [style=wire] (0) at (-1, 1) {};
		\node [style=wire] (1) at (-0.5, -1) {};
		\node [style=arbi] (2) at (-1, 0) {};
		\node [style=wire] (3) at (-1.5, -1) {};
		\node [style=written] (4) at (0, 0) {:=};
		\node [style=comm] (5) at (1.25, 0.5) {};
		\node [style=wire] (6) at (1.25, 1.25) {};
		\node [style=wire] (7) at (2.25, -1) {};
		\node [style=bbox] (8) at (-1.75, -0.75) {};
		\node [style=none] (9) at (-1.25, -0.75) {};
		\node [style=none] (10) at (-1.25, -1.25) {};
		\node [style=none] (11) at (-1.75, -1.25) {};
		\node [style=wire] (12) at (0.75, -1) {};
		\node [style=none] (13) at (0.5, -1.25) {};
		\node [style=none] (14) at (1, -1.25) {};
		\node [style=bbox] (15) at (0.5, -0.75) {};
		\node [style=none] (16) at (1, -0.75) {};
		\node [style=arbi] (17) at (0.75, -0.25) {};
	\end{pgfonlayer}
	\begin{pgfonlayer}{edgelayer}
		\draw [style=directed, bend right=15, looseness=1.00] (1) to (2);
		\draw [style=directed, bend left=15, looseness=1.00] (3) to (2);
		\draw [style=directed] (2) to (0);
		\draw [style=directed, bend right=15, looseness=1.00] (7) to (5);
		\draw [style=directed] (5) to (6);
		\draw [style=boxedge] (8) to (9.center);
		\draw [style=boxedge] (9.center) to (10.center);
		\draw [style=boxedge] (10.center) to (11.center);
		\draw [style=boxedge] (11.center) to (8);
		\draw [style=directed] (12) to (17);
		\draw [style=boxedge] (15) to (16.center);
		\draw [style=boxedge] (16.center) to (14.center);
		\draw [style=boxedge] (14.center) to (13.center);
		\draw [style=boxedge] (13.center) to (15);
		\draw [style=directed] (17) to (5);
	\end{pgfonlayer}
\end{tikzpicture}}
\endpgfgraphicnamed
\end{equation}

\begin{remark}
  Note how non-commutative !-boxes make such recursive definitions possible in the first place, without assuming \textit{a priori} that the family of graphs generated by the definition are symmetric on their inputs/outputs. This need not be true, even in the case where all of the concrete generators are commutative. This limitation in the case of commutative !-boxes was highlighted in~\cite{MerryThesis}, where only a partial proof of the spider theorem for commutative Frobenius algebras could be done using (commutative) !-box induction.
\end{remark}

The first property we would like to prove about such trees is that adjacent trees merge to form bigger trees. As a !-box rule, it looks like this:
\ctikzfig{Induc-MergeLemma}
We can then hit this rule with the induction on $B$ to break it into cases:
\begin{equation}\tag{base}
\beginpgfgraphicnamed{Induc-MergeBase-eq}
\begin{tikzpicture}[small]
	\begin{pgfonlayer}{nodelayer}
		\node [style=wire] (0) at (1, -0.75) {};
		\node [style=wire] (1) at (1.5, 1.25) {};
		\node [style=none] (2) at (1.25, -1) {};
		\node [style=none] (3) at (1.25, -0.5) {};
		\node [style=bbox, label=A] (4) at (0.75, -0.5) {};
		\node [style=arbi] (5) at (1.5, 0.25) {};
		\node [style=none] (6) at (0.75, -1) {};
		\node [style=arbi] (7) at (-1.5, 0.25) {};
		\node [style=wire] (8) at (-2, -0.75) {};
		\node [style=wire] (9) at (-1.5, 1.25) {};
		\node [style=arbi] (10) at (-0.75, -0.25) {};
		\node [style=none] (11) at (-1.75, -0.5) {};
		\node [style=bbox, label=A] (12) at (-2.25, -0.5) {};
		\node [style=none] (13) at (-1.75, -1) {};
		\node [style=none] (14) at (-2.25, -1) {};
		\node [style=written] (15) at (0, 0) {=};
	\end{pgfonlayer}
	\begin{pgfonlayer}{edgelayer}
		\draw [style=directed] (5) to (1);
		\draw [style=boxedge] (4) to (3.center);
		\draw [style=boxedge] (3.center) to (2.center);
		\draw [style=boxedge] (2.center) to (6.center);
		\draw [style=boxedge] (6.center) to (4);
		\draw [style=directed, arcin={{}{0}{-45}{2mm}{1}}] (0) to (5);
		\draw [style=directed] (10) to (7);
		\draw [style=directed] (7) to (9);
		\draw [style=boxedge] (12) to (11.center);
		\draw [style=boxedge] (11.center) to (13.center);
		\draw [style=boxedge] (13.center) to (14.center);
		\draw [style=boxedge] (14.center) to (12);
		\draw [style=directed, arcin={{}{0}{-45}{2mm}{1}}] (8) to (7);
	\end{pgfonlayer}
\end{tikzpicture}}
\endpgfgraphicnamed
\end{equation}
\begin{equation}\tag{step}
\beginpgfgraphicnamed{Induc-MergeLemma}
\begin{tikzpicture}[small]
	\begin{pgfonlayer}{nodelayer}
		\node [style=bbox, label=B] (0) at (2.25, -0.75) {};
		\node [style=wire] (1) at (1, -1) {};
		\node [style=none] (2) at (2.25, -1.25) {};
		\node [style=wire] (3) at (1.75, 1.25) {};
		\node [style=none] (4) at (1.25, -1.25) {};
		\node [style=none] (5) at (2.75, -1.25) {};
		\node [style=none] (6) at (2.75, -0.75) {};
		\node [style=none] (7) at (1.25, -0.75) {};
		\node [style=bbox, label=A] (8) at (0.75, -0.75) {};
		\node [style=arbi] (9) at (1.75, 0.25) {};
		\node [style=none] (10) at (0.75, -1.25) {};
		\node [style=wire] (11) at (2.5, -1) {};
		\node [style=none] (12) at (-0.75, -1) {};
		\node [style=arbi] (13) at (-1.75, 0.25) {};
		\node [style=wire] (14) at (-2.25, -0.75) {};
		\node [style=wire] (15) at (-1.75, 1.25) {};
		\node [style=arbi] (16) at (-1, -0.25) {};
		\node [style=none] (17) at (-2, -0.5) {};
		\node [style=bbox, label=B] (18) at (-1.25, -1) {};
		\node [style=bbox, label=A] (19) at (-2.5, -0.5) {};
		\node [style=none] (20) at (-1.25, -1.5) {};
		\node [style=none] (21) at (-2, -1) {};
		\node [style=none] (22) at (-2.5, -1) {};
		\node [style=none] (23) at (-0.75, -1.5) {};
		\node [style=wire] (24) at (-1, -1.25) {};
		\node [style=written] (25) at (0, 0) {=};
	\end{pgfonlayer}
	\begin{pgfonlayer}{edgelayer}
		\draw [style=directed] (9) to (3);
		\draw [style=boxedge] (8) to (7.center);
		\draw [style=boxedge] (7.center) to (4.center);
		\draw [style=boxedge] (4.center) to (10.center);
		\draw [style=boxedge] (10.center) to (8);
		\draw [style=boxedge] (0) to (6.center);
		\draw [style=boxedge] (6.center) to (5.center);
		\draw [style=boxedge] (5.center) to (2.center);
		\draw [style=boxedge] (2.center) to (0);
		\draw [style=directed, arcin={{}{0}{-30}{3mm}{1}}] (1) to (9);
		\draw [style=directed, arcin={{}{0}{-30}{3mm}{1}}] (11) to (9);
		\draw [style=directed] (16) to (13);
		\draw [style=directed] (13) to (15);
		\draw [style=boxedge] (19) to (17.center);
		\draw [style=boxedge] (17.center) to (21.center);
		\draw [style=boxedge] (21.center) to (22.center);
		\draw [style=boxedge] (22.center) to (19);
		\draw [style=boxedge] (18) to (12.center);
		\draw [style=boxedge] (12.center) to (23.center);
		\draw [style=boxedge] (23.center) to (20.center);
		\draw [style=boxedge] (20.center) to (18);
		\draw [style=directed, arcin={{}{0}{-45}{2mm}{1}}] (14) to (13);
		\draw [style=directed, arcin={{}{0}{-45}{2mm}{1}}] (24) to (16);
	\end{pgfonlayer}
\end{tikzpicture}}
\endpgfgraphicnamed \ \ \Rightarrow\ \ %
\beginpgfgraphicnamed{Induc-MergeStep-eq}
\begin{tikzpicture}[small]
	\begin{pgfonlayer}{nodelayer}
		\node [style=bbox, label=B] (0) at (2, -0.5) {};
		\node [style=wire] (1) at (1.25, -0.25) {};
		\node [style=none] (2) at (2, -1) {};
		\node [style=wire] (3) at (2.25, 1.75) {};
		\node [style=none] (4) at (1.5, -0.5) {};
		\node [style=none] (5) at (2.5, -1) {};
		\node [style=none] (6) at (2.5, -0.5) {};
		\node [style=none] (7) at (1.5, 0) {};
		\node [style=bbox, label=A] (8) at (1, 0) {};
		\node [style=arbi] (9) at (2.25, 0.75) {};
		\node [style=none] (10) at (1, -0.5) {};
		\node [style=wire] (11) at (2.25, -0.75) {};
		\node [style=none] (12) at (-1.5, -0.5) {};
		\node [style=arbi] (13) at (-1.75, 0.75) {};
		\node [style=wire] (14) at (-2.75, -0.25) {};
		\node [style=wire] (15) at (-1.75, 1.75) {};
		\node [style=arbi] (16) at (-1, 0) {};
		\node [style=none] (17) at (-2.5, 0) {};
		\node [style=bbox, label=B] (18) at (-2, -0.5) {};
		\node [style=bbox, label=A] (19) at (-3, 0) {};
		\node [style=none] (20) at (-2, -1) {};
		\node [style=none] (21) at (-2.5, -0.5) {};
		\node [style=none] (22) at (-3, -0.5) {};
		\node [style=none] (23) at (-1.5, -1) {};
		\node [style=wire] (24) at (-1.75, -0.75) {};
		\node [style=wire] (25) at (-0.25, -0.75) {};
		\node [style=written] (26) at (0, 0) {=};
		\node [style=wire] (27) at (3.25, -0.25) {};
	\end{pgfonlayer}
	\begin{pgfonlayer}{edgelayer}
		\draw [style=directed] (9) to (3);
		\draw [style=boxedge] (8) to (7.center);
		\draw [style=boxedge] (7.center) to (4.center);
		\draw [style=boxedge] (4.center) to (10.center);
		\draw [style=boxedge] (10.center) to (8);
		\draw [style=boxedge] (0) to (6.center);
		\draw [style=boxedge] (6.center) to (5.center);
		\draw [style=boxedge] (5.center) to (2.center);
		\draw [style=boxedge] (2.center) to (0);
		\draw [style=directed, arcin={{}{0}{-30}{3mm}{1}}] (1) to (9);
		\draw [style=directed, arcin={{}{0}{-25}{4mm}{1}}] (11) to (9);
		\draw [style=directed] (16) to (13);
		\draw [style=directed] (13) to (15);
		\draw [style=boxedge] (19) to (17.center);
		\draw [style=boxedge] (17.center) to (21.center);
		\draw [style=boxedge] (21.center) to (22.center);
		\draw [style=boxedge] (22.center) to (19);
		\draw [style=boxedge] (18) to (12.center);
		\draw [style=boxedge] (12.center) to (23.center);
		\draw [style=boxedge] (23.center) to (20.center);
		\draw [style=boxedge] (20.center) to (18);
		\draw [style=directed, arcin={{}{0}{-45}{2mm}{1}}] (14) to (13);
		\draw [style=directed, arcin={{}{0}{-45}{2mm}{1}}] (24) to (16);
		\draw [style=directed] (25) to (16);
		\draw [style=directed] (27) to (9);
	\end{pgfonlayer}
\end{tikzpicture}}
\endpgfgraphicnamed
\end{equation}
...each of which have simple rewriting proofs:
\ctikzfig{Induc-MergeBase}
\ctikzfig{Induc-MergeInduc}
One caveat is that when we apply the induction hypothesis in step 4, the !-box $B$ must be `fixed' (i.e. we're not allowed to do any instantiation of $B$ via $\Exp$, $\Kill$, etc.). This is because $B$ occurs free on both sides of the implication $G = H \Rightarrow \Exp_B(G = H)$. See~\cite{MerryThesis} for details.

This style of proof is the main workhorse of soundness proofs of rules like the merging rule (a.k.a. `spider rule') for commutative Frobenius algebras described in Section~\ref{sec:intro}, and can be extended to the non-commutative case, proving a similar rule for e.g. \textit{symmetric} Frobenius algebras, giving a purely diagrammatic characterisation of the normal forms described in~\cite{LaudaPfeiffer}.

\if\showproofs1

\appendix

\section{Other !-box operations}

  \begin{definition}
    A !-box operation for box $B$ is the combination of an operation $\Op_B:\graphterms\to\graphterms$ satisfying:
    \begin{align*}
      \Op_B(G_1 G_2) & := \Op_B(G_1) \Op_B(G_2) \\
      \Op_B(\term{[G]A}) &:= \term{[\Op_B(G)]A} \quad \text{if } A\not=B \\
      \Op_B(\phi_e) &:= \phi_{\Op_B(e)} \\
      \Op_B(1) &:= 1
    \end{align*}
    and an operation from $\Op_B:\edgeterms\to\edgeterms$ satisfying:
    \begin{align*}
      \Op_B(e_1 e_2) &:= \Op_B(e_1) \Op_B(e_2)\\
      \Op_B(\term{[e>A}) &:= \term{[\Op_B(e)>A} \quad \text{if } A\not=B \\
      \Op_B(\term{<e]A}) &:= \term{<\Op_B(e)]A} \quad \text{if } A\not=B \\
      \Op_B(a) &:= a \\
      \Op_B(\epsilon) &:= \epsilon
    \end{align*}
  \end{definition}

  We can now define each !-box operation (using appropriate freshness functions) simply by how it affects it's !-box and edge group.

  \begin{definition}
    $\Kill_B$ is the !-box operation satisfying:
    \begin{align*}
      \Kill_B(\term{[G]B}) &:= 1 \\
      \Kill_B(\term{[e>B}) &:= \epsilon \\
      \Kill_B(\term{<e]B}) &:= \epsilon
    \end{align*}
  \end{definition}

  \begin{definition}
    $\Drop_B$ is the !-box operation satisfying:
    \begin{align*}
      \Drop_B(\term{[G]B}) &:= G \\
      \Drop_B(\term{[e>B}) &:= e \\
      \Drop_B(\term{<e]B}) &:= e
    \end{align*}
  \end{definition}

  \begin{definition}
    $\Exp_B$ is the !-box operation satisfying:
    \begin{align*}
      \Exp_B(\term{[G]B}) &:= \term{[G]B\fr(G)} \\
      \Exp_B(\term{[e>B}) &:= \term{[e>B\fr(e)} \\
      \Exp_B(\term{<e]B}) &:= \term{\fr(e)<e]B}
    \end{align*}
  \end{definition}

  \begin{definition}
    $\Copy_B$ is the !-box operation satisfying:
    \begin{align*}
      \Copy_B(\term{[G]B}) &:= \term{[G]B[\fr(G)]{\fr(B)}} \\
      \Copy_B(\term{[e>B}) &:= \term{[e>B[\fr(e)>{\fr(B)}} \\
      \Copy_B(\term{<e]B}) &:= \term{<\fr(e)]{\fr(B)}<e]B}
    \end{align*}
  \end{definition}

  For appropriate freshness functions we can rewrite $\Exp_A$ as $\Drop_A\circ\Copy_A$.

\section{Proofs}

  \begin{lemma}
    \label{lem:contexts}
    If $\ectx_G(a)=[E_1,\ldots,E_n]$, $\nctx_G(a)=[N_1,\ldots,N_m]$ then the following shows contexts affected by operations (writing $B'$ for $\fr(B)$):
    \begin{table}[ht]
    \renewcommand{\arraystretch}{1.2}
    \begin{tabular} {|r|l|c|c|} \hline
                    &            & $\ectx$ & $\nctx$ \\ \hline
      $\Drop_{E_i}$ &        $a$ & $[E_1,\ldots,E_{i-1},E_{i+1},\ldots,E_n]$ &  \\ \hline
      $\Drop_{B_i}$ &        $a$ &  & $[N_1,\ldots,N_{i-1},N_{i+1},\ldots,N_m]$ \\ \hline
      $\Exp_{E_i}$  &   $\!\fr(a)\!$ & $[E_1',\ldots,E_{i-1}',E_{i+1},\ldots,E_n]$ &  \\ \hline
      $\Exp_{B_i}$  &   $\!\fr(a)\!$ & $[E_1',\ldots,E_n']$ & $[N_1',\ldots,N_{i-1}',N_{i+1},\ldots,N_m]$ \\ \hline
      $\Copy_{E_i}$ &   $\!\fr(a)\!$ & $[E_1',\ldots,E_i',E_{i+1},\ldots,E_n]$ & \\ \hline
      $\Copy_{B_i}$ &   $\!\fr(a)\!$ & $[E_1',\ldots,E_n']$ & $[N_1',\ldots,N_i',N_{i+1},\ldots,N_m]$ \\ \hline
    \end{tabular}
    \end{table}
  \end{lemma}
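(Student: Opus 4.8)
The plan is to establish each of the six rows by structural induction on the !-tensor expression $G$, carried out simultaneously with an induction on edgeterms. The one fact that drives the whole argument is that $\Drop_B$, $\Exp_B$ and $\Copy_B$ are each defined \emph{homomorphically} on every syntactic constructor except the ones naming the target box $B$: so the operation only rewrites $G$ at occurrences of $B$, namely at the node-box $\term{[\ldots]B}$ and at any edge-groups labelled $B$. Consequently, to read off how $\Op_B$ affects the context of a fixed directed edge $a$, I only have to look at the chain of constructors that enclose $a$ --- from the outside in: the node-boxes $N_m,\dots,N_1$, then the atomic tensor $\phi_e$ in which $a$ occurs, then the edge-groups $E_n,\dots,E_1$ around $a$ inside $e$ --- and see how that chain gets rewritten. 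Note that which edge-groups enclose $a$, and in which order, is exactly the data recorded by $\ectx_G(a)$, and likewise $\nctx_G(a)$ records the enclosing node-boxes.

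I would organise the case analysis by where $B$ appears in this chain. If it appears nowhere then $a$ is not contained in $B$, so $\Op_B$ leaves $a$ and all its enclosing boxes untouched and produces no new copy of $a$; that is why the table only lists the cases $B = E_i$ and $B = N_i$. Suppose $B = E_i$. For $\Drop_{E_i}$, the clauses $\term{[G]B}\mapsto G$ and $\term{[e>B}\mapsto e$ simply unwrap the node-box and the edge-group labelled $E_i$, so $E_i$ disappears from $\ectx_G(a)$ and nothing else moves. For $\Exp_{E_i}$ and $\Copy_{E_i}$ the original $a$ is left inside $E_i$ as before, and the edge whose context changes is the fresh copy $\fr(a)$: by the clauses $\term{[e>B}\mapsto\term{[e>B\fr(e)}$, $\term{[G]B}\mapsto\term{[G]B\fr(G)}$ (and the $\Copy$ analogues), that copy is emitted \emph{outside} the box $B$, so its enclosing chain is the fresh copies $\fr(E_1),\dots,\fr(E_{i-1})$ of the edge-groups that were nested inside $E_i$, followed by the untouched $E_{i+1},\dots,E_n$ and the untouched node-boxes; $\Copy_{E_i}$ additionally re-wraps this copy in a fresh $\fr(E_i)$, reinstating the box in its former position. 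The case $B = N_i$ is the same argument a level up: the whole edgeterm, hence every $E_j$, lies inside $N_i$ and so becomes $\fr(E_j)$, the node-boxes $N_1,\dots,N_{i-1}$ become $\fr(N_1),\dots,\fr(N_{i-1})$, and $N_{i+1},\dots,N_m$ stay put. Reading each rewritten chain back off through the definitions of $\ectx$ and $\nctx$ then gives exactly the corresponding line of the table.

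The main obstacle is simply keeping the bookkeeping straight, especially the ``off-by-one'' between $\Drop$ and $\Exp$/$\Copy$: $\Drop_B$ removes $B$ from the context of the \emph{original} $a$, whereas under $\Exp_B$ and $\Copy_B$ the original's context is unchanged and it is the \emph{new} copy whose context loses $B$ (for $\Exp$) or replaces $B$ by $\fr(B)$ (for $\Copy$), with every strictly deeper box renamed by $\fr$. A secondary point needing care is that each output is again a legal context --- that the renamed initial segment is still a $\prec$-chain and that every box named in an edge context still has its node-box present --- which follows from condition C2 together with the fact that $\fr$ is a bijection preserving the nesting it acts on; this has to be checked for each of the six rows, but presents no real difficulty. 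Beyond that, the proof is just a routine unwinding of the recursive definitions, so once the enclosing-chain picture is in place there is nothing deep left to do.
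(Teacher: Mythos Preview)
Your proposal is correct. The paper itself states this lemma without proof (it is used only as a bookkeeping tool in the subsequent theorems showing that $\Kill_A$, $\Drop_A$, $\Exp_A$, $\Copy_A$ preserve well-formedness), so there is no authorial argument to compare against; your structural-induction strategy, driven by the observation that each $\Op_B$ is homomorphic on every constructor except the ones labelled $B$ and hence only rewrites the enclosing chain of $a$ at the position of $B$, is exactly the natural way to verify the table and would suffice.

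One small remark: the secondary point you raise about checking that the resulting lists are again legal contexts (i.e.\ satisfy C2) is strictly more than the lemma claims---the lemma only records what the new $\ectx$ and $\nctx$ are, and the well-formedness of the result is established separately in the theorems that follow. You are welcome to fold that check in, but it is not needed for the lemma as stated.
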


  \begin{theorem}
    $G\in\graphterms \implies \Kill_A(G)\in\graphterms$.
  \end{theorem}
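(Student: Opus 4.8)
The plan is to prove this by structural induction on the !-pretensor expression $G$, using the recursive definition of $\Kill_A$, and showing at each step that the four well-formedness conditions (F1, F2, C1, C2, C3) are preserved. Since $\Kill_A$ never introduces new edge names or !-box names --- it only deletes --- the freshness conditions F1 and F2 are immediate: any name occurring in $\Kill_A(G)$ already occurred in $G$, and it occurs no more often. So the real content is in the context conditions C1--C3.

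First I would observe that $\Kill_A$ acts by replacing every occurrence of $\term{[G']A}$ by $1$, every edgeterm group $\term{[e>A}$ by $\epsilon$, and every $\term{<e]A}$ by $\epsilon$, leaving everything else structurally intact (modulo the recursive descent). The key bookkeeping fact is how this affects contexts: for an edge name $b$ that survives in $\Kill_A(G)$, its edge context and node context in $\Kill_A(G)$ are obtained from those in $G$ by simply deleting $A$ wherever it appears (it can appear at most once in each, by C1). Edges $b$ whose node lies inside an $A$-box, or whose edgeterm lies inside an $\term{[\cdot>A}$ or $\term{<\cdot]A}$ group, are deleted entirely along with that box/group, so they impose no conditions. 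This is essentially the $\Drop$/$\Kill$ row of Lemma~\ref{lem:contexts} specialised to deletion, and I would cite that lemma for the precise context calculation.

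With that in hand, the conditions check out as follows. For C1: $\ectx \cap \nctx = \varnothing$ in $G$, and deleting $A$ from both lists (if present) cannot create a new common element, so it still holds in $\Kill_A(G)$. For C2: if $\ectx_{\Kill_A(G)}(b) = [B_1, \ldots, B_k]$, this is a sublist of $\ectx_G(b)$ with $A$ removed; the $B_i$ still all occur in $\Boxes(\Kill_A(G))$ --- here I need that deleting the $A$-box does not delete any $B_i \neq A$, which holds because the $B_i$ are proper ancestors or the whole remaining structure of $b$'s edgeterm and none of them is $A$ --- and the nesting order $\prec$ on the surviving boxes is inherited from $\prec_G$, so the chain $B_1 \prec \cdots \prec B_k$ persists. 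For C3: given a bound pair $\ein b, \eout b$ surviving in $\Kill_A(G)$ with witnessing lists $es, bs$ for $G$, I would take $es' , bs'$ to be $es, bs$ with $A$ deleted, and check that the two equations $es'.\nctx(\ein b) = \ectx(\eout b).bs'$ (and symmetrically) still hold after deleting $A$ from every list in sight --- deletion of a fixed symbol commutes with list concatenation, so this is a routine verification once the surviving-edge context formula is established. One subtlety: if either $\ein b$ or $\eout b$ fails to survive (because its node or edgeterm group was inside the killed $A$-box), I must argue the \emph{other} one also fails to survive, so that no dangling bound name is left; this follows from the C3 equations in $G$, since if $\eout b$'s edgeterm sits under $\term{[\cdot>A}$ then $A \in \ectx_G(\eout b)$, forcing (via the C3 balancing equations) $A$ to appear in $\nctx_G(\ein b)$, i.e. $\ein b$'s node is inside the $A$-box too and is likewise deleted.

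The main obstacle I anticipate is precisely this last point --- tracking that bound edge pairs are deleted \emph{together}, never leaving a half-contracted edge --- and more generally getting the C3 bookkeeping right when $A$ appears in the middle of the witnessing lists $es$ or $bs$ rather than at an end. Everything else is a mechanical induction, but the C3 case requires a careful case split on where $A$ sits relative to the shared context of the bound pair, and it is worth doing that split explicitly rather than hand-waving it.
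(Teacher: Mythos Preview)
Your approach is correct and matches the paper's in outline---check that $\Kill_A(G)$ is still a !-pretensor, then verify F1--F2 (trivial, only deletions) and C1--C3 via an analysis of how contexts behave. The paper's proof, however, is considerably shorter because of one observation you stop just short of making: for any edge $b$ that \emph{survives} in $\Kill_A(G)$, neither $\ectx_G(b)$ nor $\nctx_G(b)$ contains $A$ at all. You yourself note that an edge whose node lies inside $[\cdot]^A$ or whose edgeterm lies under $[\cdot\rangle^A$ or $\langle\cdot]^A$ is deleted; but those are exactly the edges with $A$ in their node- or edge-context respectively. Hence the ``delete $A$ wherever it appears'' step is always a no-op, and the contexts of surviving edges are literally unchanged. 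With that, C1--C3 for $\Kill_A(G)$ are inherited verbatim from $G$, and the bound-pair-deletion subtlety and the case split on where $A$ sits in $es$, $bs$ that you flag as the main obstacle simply never arise. So your plan works, but you can collapse most of the anticipated case analysis once you sharpen that one sentence.
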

  \begin{proof} (Sketch)
    It is clear that $\Kill_A(G)$ is still constructed as a !-pretensor so we need only check the conditions F1-2, C1-3:
    \begin{itemize}
        \item[F1-2:] Trivial since we have only removed edges/boxes.
        \item[C1:] If $a\in\Edges(\Kill_A(G))$ then contexts were not affected by $\Kill_A$ so same condition holds.
        \item[C2:] If $a\in\Edges(\Kill_A(G))$ with edge context $[E_1,\ldots,E_n]\not=[]$ \\
          then $a\in \Edges(G)$ with edge context $[E_1,\ldots,E_n]$ hence $E_i\prec_G E_{i+1}$. \\
          $\therefore$ $E_i\prec_{\Kill_A(G)} E_{i+1}$
        \item[C3:] $\eout{a},\ein{a}\in\Edges(\Kill_A(G))$ then $\eout{a},\ein{a}\in\Edges(G)$ and $\ectx, \nctx$ were not affected by $\Kill_A$ so the condition still holds.
    \end{itemize}
  \end{proof}

  \begin{theorem}
    $G\in\graphterms \implies \Drop_A(G)\in\graphterms$.
  \end{theorem}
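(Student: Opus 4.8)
The plan is to follow the template of the preceding theorem for $\Kill_A$: first observe that $\Drop_A(G)$ is again assembled purely from the !-pretensor constructors, so it remains only to verify the well-formedness conditions F1, F2, C1, C2, C3. The essential difference from the $\Kill_A$ case is that $\Drop_A$ genuinely alters contexts --- it deletes $A$ from every edge context and node context in which it appears and, wherever $A$ sat between two boxes, merges the two into a single direct-nesting step --- so the context conditions C1--C3 need a little more than the trivial argument that sufficed for $\Kill_A$. Throughout I will lean on Lemma~\ref{lem:contexts}, whose rows for $\Drop$ record precisely how an edge's edge- and node-contexts change, together with the observations that $\Drop_A$ introduces no new edges or boxes, duplicates nothing, and satisfies $\Boxes(\Drop_A(G)) = \Boxes(G)\setminus\{A\}$ and $\Edges(\Drop_A(G)) = \Edges(G)$.

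F1 and F2 are immediate: no edge or box name is created or copied, and the sole occurrence of $\term{[\ldots]A}$ (if any) is deleted, so nothing can appear twice afterwards that did not before. For C1, fix a surviving edge $a$; by C1 for $G$ the box $A$ lies in at most one of $\ectx_G(a), \nctx_G(a)$, and Lemma~\ref{lem:contexts} shows $\Drop_A$ deletes that single occurrence from whichever list contains it while leaving the other untouched, so disjointness is preserved. For C2, write $\ectx_G(a) = [B_1,\ldots,B_n]$ with $B_1 \prec_G \cdots \prec_G B_n$; the new edge context is this list with $A$ deleted, its surviving entries all still lie in $\Boxes(\Drop_A(G))$, and the chain persists because $\prec_G$ is \emph{direct} nesting --- no $B_j$ is separated from $B_{j+1}$ by $A$ in $G$, whereas if $A = B_i$ then deleting $A$ makes $B_{i-1}$ directly nested in $B_{i+1}$, re-closing the chain.

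The real content, and the step I expect to be the main obstacle, is C3 --- not because it is deep, but because it is the only condition constraining two contexts simultaneously. Given a bound pair $\ein a, \eout a$ of $G$ (still a bound pair of $\Drop_A(G)$, since edges are never removed), choose lists $es, bs$ witnessing C3 for $G$, so that $es.\nctx_G(\ein a) = \ectx_G(\eout a).bs$ and $es.\nctx_G(\eout a) = \ectx_G(\ein a).bs$ hold as equalities of lists. Deleting \emph{every} occurrence of $A$ from both sides of each equation preserves the equality (the sides are literally equal lists) and deletion distributes over concatenation; so, putting $es', bs'$ for $es, bs$ with $A$ deleted, and using Lemma~\ref{lem:contexts} to recognise $\ectx_G(\cdot)$ and $\nctx_G(\cdot)$ with $A$ deleted as exactly the corresponding contexts in $\Drop_A(G)$, we obtain $es'.\nctx_{\Drop_A(G)}(\ein a) = \ectx_{\Drop_A(G)}(\eout a).bs'$ and its mirror image. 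Thus $es', bs'$ witness C3 for $\Drop_A(G)$, and with F1--F2 and C1--C3 all in hand we conclude $\Drop_A(G) \in \graphterms$. The one place demanding care is the bookkeeping around degenerate cases --- $A$ occurring inside $es$ or $bs$ rather than in one of the four contexts, or $A$ occurring in neither list --- which the ``delete all occurrences'' formulation handles uniformly.
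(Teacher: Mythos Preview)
Your proposal is correct and follows essentially the same approach as the paper: verify F1--F2 and C1--C3 in turn, with the key observation for C3 that deleting $A$ from the witness lists $es, bs$ (together with its deletion from the four contexts) yields new witnesses for $\Drop_A(G)$. Your treatment is more detailed than the paper's sketch---in particular your careful handling of the C2 chain-closure and the ``delete all occurrences'' argument for C3 degenerate cases---but the underlying argument is identical.
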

  \begin{proof} (Sketch)
    Again we need only check the conditions F1-2, C1-3:
    \begin{itemize}
        \item[F1-2:] Trivial since we have only removed boxes.
        \item[C1:] Again trivial since $\ectx$ and $\nctx$ have only lost boxes.
        \item[C2:] If $a\in\Edges(\Drop_A(G))$ has edge context $[E_1,\ldots,E_n]\not=[]$ \\
          then $a\in \Edges(G)$ would have edge context $[E_1,\ldots,A,\ldots,E_n]$ \\
          hence $E_1\prec_G\ldots\prec_G A \prec_G\ldots\prec_G E_n$. \\
          $\therefore$ $E_1\prec_{\Drop_A(G)}\ldots\prec_{\Drop_A(G)} E_n$
        \item[C3:] $\eout{a},\ein{a}\in\Edges(\Drop_A(G))$ then $\eout{a},\ein{a}\in\Edges(G)$ and $\ectx, \nctx$ only lost the box $A$ so the condition still holds by removing $A$ from $es$,$bs$.
    \end{itemize}
  \end{proof}

  \begin{theorem}
    $G\in\graphterms \implies \Exp_A(G)\in\graphterms$.
  \end{theorem}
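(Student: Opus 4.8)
The plan is to mirror the two preceding proofs. Since $\Exp_A(G)$ is produced by the very clauses that generate the !-pretensor grammar, it is automatically a !-pretensor, and it suffices to check the conditions F1--F2 and C1--C3. The freshness conditions are immediate from the hypotheses on the freshness function $\fr$: the three nontrivial clauses of $\Exp_B$ each place a renamed copy $\fr(G)$ (respectively $\fr(e)$) of a subexpression next to the retained original, and since $\Edges(G)\cap\fr(\Edges(G))=\varnothing$ and $\Boxes(G)\cap\fr(\Boxes(G))=\varnothing$ the copy shares no edge or box name with the remainder of $\Exp_A(G)$, while within the copy F1--F2 are inherited from $G$ because $\fr$ is a bijection. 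One small remark: $\Exp_A$ retains the declaration of box $A$ with its original name (in contrast to $\Copy_A$), and by F2 together with C1 the interior of box $A$ contains no occurrence of the name $A$, so $\fr(A)$ is never actually used and causes no clash.

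For C1--C3 I would appeal to Lemma~\ref{lem:contexts}, which records exactly how $\ectx$ and $\nctx$ of an edge and of its fresh copy are transformed by $\Exp$. Fix an edge name $a$ of $G$ and split into three cases according to the position of $A$: (i) $A\notin\ctx_G(a)$; (ii) $A\in\nctx_G(a)$; (iii) $A\in\ectx_G(a)$. In case (i) the edge $a$ survives $\Exp_A$ with unchanged contexts and no fresh copy, so C1--C3 for $a$ hold verbatim as in $G$. In cases (ii) and (iii) the original $a$ still keeps its contexts --- it remains inside the retained declaration of $A$, respectively inside the retained $A$-edge-group --- so one only has to verify the conditions for the new copy $\fr(a)$, whose contexts are the ones tabulated in Lemma~\ref{lem:contexts} (the edge-context boxes inside $A$ are freshened, the node-context boxes strictly inside $A$ are freshened, and $A$ itself is deleted from the copy's context). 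Then C1 for $\fr(a)$ follows from $\ectx_G(a)\cap\nctx_G(a)=\varnothing$, injectivity of $\fr$, and the fact that freshened names avoid every name of $G$; and C3 is handled by running the same three-way split on a bound pair $\ein a,\eout a$ \emph{simultaneously}, so that the witnesses $es,bs$ for $G$ transport to witnesses for $\Exp_A(G)$ after the same freshening and deletion of $A$.

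The point I expect to be the main obstacle is C2 for the copied edges: one has to be sure that the box declarations freshened by $\Exp_A$ really do appear in $\Exp_A(G)$ in the nesting order demanded by the copied edgeterms. In case (iii), where $\ectx_G(a)=[E_1,\ldots,E_n]$ with $E_i=A$, this follows cleanly: C2 applied to $G$ gives $E_1\prec_G\cdots\prec_G E_i=A$, which means the declarations of $E_1,\ldots,E_{i-1}$ are nested (in that order) inside the declaration of $A$, so $\Exp_A$ copies them and $\fr(E_1)\prec\cdots\prec\fr(E_{i-1})\prec E_{i+1}\prec\cdots\prec E_n$ holds in $\Exp_A(G)$, as required. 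Case (ii) is analogous but additionally uses that, in a well-formed $G$, every box occurring in the edge-context of an edge inside $A$ is itself declared inside $A$, so that its freshened declaration is again produced by $\Exp_A$. Once this is in place, the remaining verification is routine bookkeeping, exactly as in the $\Kill_A$ and $\Drop_A$ proofs.

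Finally, I would mention the alternative route via the factorisation $\Exp_A=\Drop_A\circ\Copy_A$ noted above: since the theorem for $\Drop_A$ is already proved, it would suffice to establish $G\in\graphterms\implies\Copy_A(G)\in\graphterms$, and for $\Copy_A$ the C2 concern essentially disappears, because $\Copy_A$ duplicates the declaration of box $A$ wholesale to $\term{[G]A[\fr(G)]{\fr(A)}}$ without ever moving an edge out of a box; only C3 would still require the case analysis above. I would present the direct argument as the main proof and record this factorisation as a remark.
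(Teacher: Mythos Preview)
Your proposal is correct and follows essentially the same route as the paper: verify that $\Exp_A(G)$ is a !-pretensor by construction, dispatch F1--F2 by freshness, and check C1--C3 for old and fresh edge names using the tabulated effect of $\Exp_A$ on $\ectx$ and $\nctx$ from Lemma~\ref{lem:contexts}. Your three-way split on the position of $A$ in $\ctx_G(a)$ matches the paper's case analysis, and for C3 your ``simultaneous split on a bound pair'' unfolds into exactly the four cases the paper writes out explicitly ($A$ in $\ectx$ or $\nctx$ of each of $\ein a,\eout a$). Your identification of C2 for the copied edges as the place requiring the most care is apt; the paper handles it with the same nesting argument you sketch. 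The factorisation $\Exp_A=\Drop_A\circ\Copy_A$ is indeed recorded in the paper, though the paper proves the $\Copy_A$ theorem separately by the same pattern rather than using it as a shortcut here, and its C2 case is not materially shorter than the one for $\Exp_A$.
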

  \begin{proof} (Sketch)
    Again we need only check the conditions F1-2, C1-3:
    \begin{itemize}
      \item[F1-2:]  Trivial since any new edges/boxes are fresh ones which can only have come from the distinct edges/boxes in $G$.
      \item[C1:] Edges in $\Exp_A(G)$ are either edges from $G$ or fresh names for edges in $G$. For the former $\ectx, \nctx$ have not been changed so the condition holds. For the later we know from Lemma \ref{lem:contexts} that boxes in $\ectx, \nctx$ have only been removed or replaced by fresh versions of themselves hence still distinct.
      \item[C2:] For $a\in\Edges(G)$ (writing $B'$ for $\fr(B)$):
        \begin{itemize}
          \item[$\bullet$] $\ectx_{\Exp_A(G)}(a) = [E_1,\ldots,E_n]=\ectx_G(a)$ and $E_i\prec_G E_{i+1} \implies E_i\prec_{\Exp_A(G)} E_{i+1}$ so the condition holds.
          \item[$\bullet$] $\ectx_{\Exp_A(G)}(\fr(a)) = [E_1',\ldots,E_i',E_{i+1},\ldots,E_n]$ for some $E_i$. \\
          $E_i'$ must have been created inside $E_{i+1}$ in $\Exp_A(G)$. \\
          For others it is clear that $j<i \implies E_j'\prec_{\Exp_A(G)} E_{j+1}'$, \\
          $j>i \implies E_j\prec_{\Exp_A(G)} E_{j+1}$.
        \end{itemize}
      \item[C3:] For $\eout{a},\ein{a}\in\Edges(\Exp_A(G))$ s.t $\eout{a},\ein{a}\in\Edges(G)$ then $\ectx, \nctx$ were not affected by $\Exp_A$ so the condition still holds. \\
      For $\eout{\fr(a)},\ein{\fr(a)}\in\Edges(\Exp_A(G))$ s.t $\eout{a},\ein{a}\in\Edges(G)$ \\
      then $\exists E_i,N_i \in \boxnames$ s.t \\
      $[E_1,\ldots,E_n].\nctx_G(\ein{a})=\ectx_G(\eout{a}).[N_1,\ldots,N_m]$\\
      $[E_1,\ldots,E_n].\nctx_G(\eout{a})=\ectx_G(\ein{a}).[N_1,\ldots,N_m]$\\
      We have four cases to check for how $\ectx,\nctx$ are affected by $\Exp_A$:
      \begin{itemize}
        \item[$\bullet$] $A\in \ectx_G(\ein{a})\cap \ectx_G(\eout{a})$ then $A=E_i$ for some $i$ and letting $es:=[E_1',\ldots,E_{i-1}',E_{i+1},\ldots,E_n]$ and $bs:=[N_1,\ldots,N_m]$ we have the condition holds.
        \item[$\bullet$] $A\in \nctx_G(\ein{a})\cap \nctx_G(\eout{a})$ then $A=N_i$ for some $i$ and letting $es:=[E_1',\ldots,E_m']$ and $bs:=[N_1',\ldots,N_{i-1}',N_{i+1},\ldots,N_n]$ we have the condition holds.
        \item[$\bullet$] $A\in \ectx_G(\ein{a})\cap \nctx_G(\eout{a})$ then letting $es:=[E_1',\ldots,E_m']$ and $bs:=[N_1,\ldots,N_m]$ we have the condition holds.
        \item[$\bullet$] $A\in \nctx_G(\ein{a})\cap \ectx_G(\eout{a})$ similar to previous case.
      \end{itemize}
    \end{itemize}
  \end{proof}

  \begin{theorem}
    $G\in\graphterms \implies \Copy_A(G)\in\graphterms$.
  \end{theorem}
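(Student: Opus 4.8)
The plan is to follow the template of the three preceding theorems. First I would observe that $\Copy_A(G)$ is again built purely by the !-pretensor grammar: the recursive clauses for $\Copy_A$ only insert sub-expressions of the shapes $\term{[\fr(G)]{\fr(A)}}$, $\term{[\fr(e)>{\fr(A)}}$ and $\term{<\fr(e)]{\fr(A)}}$, all of which are legal pretensor/edgeterm constructors. So the entire content of the theorem is the verification of F1--F2 and C1--C3 for $\Copy_A(G)$. The one extra tool I need is the $\Copy_{E_i}$ and $\Copy_{B_i}$ rows of Lemma~\ref{lem:contexts}, which say precisely how $\ectx$ and $\nctx$ of a freshly-copied edge $\fr(a)$ arise from those of $a$ (the boxes lying inside and including $A$ are copied and fresh-renamed, the strictly-outer ones are left alone). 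One might hope to reduce to the already-proven $\Exp_A(G)\in\graphterms$ via the decomposition of $\Exp_A$ through $\Copy_A$ noted above, but that only certifies the result of post-composing with a $\Drop$, and says nothing about the conditions of $\Copy_A(G)$ that mention the new box $\fr(A)$ itself, so I would argue directly.

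F1--F2 are immediate: every fresh name appearing in $\Copy_A(G)$ is the $\fr$-image of a distinct name of $G$ and $\fr$ is a bijection disjoint on $\Edges(G)$ and $\Boxes(G)$, so nothing is duplicated and nothing clashes with an existing name. For C1 I would split the edges of $\Copy_A(G)$ into those inherited from $G$ unchanged (C1 inherited) and the fresh ones $\fr(a)$; for the latter, Lemma~\ref{lem:contexts} shows that each box occurring in $\ectx_{\Copy_A(G)}(\fr(a))$ or $\nctx_{\Copy_A(G)}(\fr(a))$ is either a box of $G$ that lies in the corresponding context of $a$ or the $\fr$-image of such a box, so disjointness of the two contexts is inherited from $G$. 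For C2, inherited edges are fine because $\Copy_A$ deletes no boxes and destroys no nesting among the survivors; for a fresh edge $\fr(a)$ there are the two subcases of Lemma~\ref{lem:contexts} --- if $A$ lay in the node-context of $a$ (or in neither context) then $\ectx_{\Copy_A(G)}(\fr(a))$ is just the fresh image of $\ectx_G(a)$ and the $\prec$-chain is transported verbatim, and if $A=E_i$ lay in the edge-context then the chain splits as $E_1'\prec\ldots\prec E_i'$ (the fresh copy of the contents of $A$, created inside $E_{i+1}$) followed by the unchanged tail $E_{i+1}\prec\ldots\prec E_n$, each piece inherited from $G$.

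The delicate part, exactly as in the $\Exp_A$ proof, is C3. Given a bound pair $\ein a,\eout a$ of $G$ with C3-witnesses $es,bs$, the original pair always survives in $\Copy_A(G)$ with unchanged contexts (the original box $A$ and its contents are retained, and the sibling box $\fr(A)$ contains no original nodes), so its C3 obligation is inherited. The fresh pair $\ein{\fr a},\eout{\fr a}$ appears only when both ends lie in the scope of $A$, i.e. $A$ occurs in $\ectx$ or $\nctx$ of each of $\ein a$ and $\eout a$; I would then run the same four-case split as the $\Exp_A$ proof --- $A$ in both edge-contexts, both node-contexts, edge-then-node, node-then-edge --- reading off the new contexts from Lemma~\ref{lem:contexts} and obtaining new witnesses by fresh-renaming the appropriate prefixes of $es$ and $bs$ in each case. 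I expect this C3 bookkeeping to be the main obstacle, though it should be marginally cleaner than for $\Exp_A$: because $\Copy_A$ keeps the duplicated contents wrapped in the fresh box $\fr(A)$ rather than flattening them, no box ever vanishes from, or migrates between, the edge- and node-contexts, so each new context is literally the $\fr$-image of the old one with a possibly-unchanged outer tail and the witness lists transform accordingly.
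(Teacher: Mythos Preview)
Your proposal is correct and follows essentially the same approach as the paper: verify F1--F2 and C1--C3 directly, splitting edges into originals (contexts unchanged) and fresh copies (contexts read off from Lemma~\ref{lem:contexts}), and handling C3 for fresh bound pairs by the same four-way case split on whether $A$ lies in the edge- or node-context of each end. Your closing remark that $\Copy_A$ is slightly cleaner than $\Exp_A$ because the fresh box $\fr(A)$ is retained (so no entry is dropped from the context lists) is exactly what shows up in the paper's choice of witnesses, where the $\Copy$ case keeps $E_i'$ or $N_i'$ in $es$/$bs$ whereas the $\Exp$ case omits it.
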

  \begin{proof} (Sketch)
    Again we need only check the conditions F1-2, C1-3:
    \begin{itemize}
      \item[F1-2:]  Trivial since any new edges/boxes are fresh ones which can only have come from the distinct edges/boxes in $G$.
      \item[C1:] Edges in $\Copy_A(G)$ are either edges from $G$ or fresh names for edges in $G$. For the former $\ectx, \nctx$ have not been changed so the condition holds. For the later we know from Lemma \ref{lem:contexts} that boxes in $\ectx, \nctx$ have only been replaced by fresh versions of themselves hence still distinct.
      \item[C2:] For $a\in\Edges(G)$ (writing $B'$ for $\fr(B)$):
        \begin{itemize}
          \item[$\bullet$] $\ectx_{\Copy_A(G)}(a) = [E_1,\ldots,E_n]=\ectx_G(a)$ and $E_i\prec_G E_{i+1} \implies E_i\prec_{\Copy_A(G)} E_{i+1}$ so the condition holds.
          \item[$\bullet$] $\ectx_{\Copy_A(G)}(\fr(a)) = [E_1',\ldots,E_i',E_{i+1},\ldots,E_n]$ for some $E_i$. \\
          $E_i'$ must have been created inside $E_{i+1}$ in $\Copy_A(G)$. \\
          For others it is clear that $j<i \implies E_j'\prec_{\Copy_A(G)} E_{j+1}'$, \\
          $j>i \implies E_j\prec_{\Copy_A(G)} E_{j+1}$.
        \end{itemize}
      \item[C3:] For $\eout{a},\ein{a}\in\Edges(\Copy_A(G))$ s.t $\eout{a},\ein{a}\in\Edges(G)$ then $\ectx, \nctx$ were not affected by $\Copy_A$ so the condition still holds. \\
      For $\eout{\fr(a)},\ein{\fr(a)}\in\Edges(\Copy_A(G))$ s.t $\eout{a},\ein{a}\in\Edges(G)$ $\exists E_i,N_i \in \boxnames$ s.t \\
      $[E_1,\ldots,E_n].\nctx_G(\ein{a})=\ectx_G(\eout{a}).[N_1,\ldots,N_m]$\\
      $[E_1,\ldots,E_n].\nctx_G(\eout{a})=\ectx_G(\ein{a}).[N_1,\ldots,N_m]$\\
      We have four cases to check for how $\ectx,\nctx$ are affected by $\Copy_A$:
      \begin{itemize}
        \item[$\bullet$] $A\in \ectx_G(\ein{a})\cap \ectx_G(\eout{a})$ then $A=E_i$ for some $i$ and letting $es:=[E_1',\ldots,E_i',E_{i+1},\ldots,E_n]$ and $bs:=[N_1,\ldots,N_m]$ we have the condition holds.
        \item[$\bullet$] $A\in \nctx_G(\ein{a})\cap \nctx_G(\eout{a})$ then $A=N_i$ for some $i$ and letting $es:=[E_1',\ldots,E_m']$ and $bs:=[N_1',\ldots,N_i',N_{i+1},\ldots,N_n]$ we have the condition holds.
        \item[$\bullet$] $A\in \ectx_G(\ein{a})\cap \nctx_G(\eout{a})$ then letting $es:=[E_1',\ldots,E_m']$ and $bs:=[N_1,\ldots,N_m]$ we have the condition holds.
        \item[$\bullet$] $A\in \nctx_G(\ein{a})\cap \ectx_G(\eout{a})$ similar to previous case.
      \end{itemize}
    \end{itemize}
  \end{proof}
  \begin{theorem}
    $G\in\graphterms \implies \Op_A(G)\in\graphterms$ \qquad for $\Op$ any one of our four defined !-box operations.
  \end{theorem}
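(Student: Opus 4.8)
The plan is to recognise that this statement packages the four preceding theorems — one for each of $\Kill_A$, $\Drop_A$, $\Exp_A$, and $\Copy_A$ — into a single closure property, so the task is really to organise the argument so the genuine work happens only once. First I would cut down the number of independent cases: by the observation that $\Exp_A = \Drop_A \circ \Copy_A$ for a suitable freshness function, closure of $\graphterms$ under $\Exp_A$ is immediate from closure under $\Drop_A$ and $\Copy_A$, since operations compose. So it suffices to handle $\Kill_A$, $\Drop_A$, and $\Copy_A$.

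For each of these three the strategy is uniform. Step one: note that $\Op_A$ sends products to products, boxes to boxes, atomic tensors $\phi_e$ to atomic tensors, and edgeterms to edgeterms, so $\Op_A(G)$ is built by the same grammar as a !-pretensor expression and hence lies in $\graphterms'$; only the well-formedness conditions F1--F2 and C1--C3 require checking. Step two: F1--F2 hold because $\Op_A$ only deletes names or introduces \emph{fresh} copies of them via $\fr$, so no edge or box name is ever duplicated. Step three: C1 and C2 follow from Lemma~\ref{lem:contexts}, which records exactly how the edge and node contexts of each surviving or freshly-named edge are transformed — boxes occurring in a context are only removed or replaced by fresh versions of themselves, so disjointness ($\ectx \cap \nctx = \varnothing$) and the nesting chain $B_1 \prec B_2 \prec \cdots$ are both inherited.

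The step I expect to be the main obstacle is C3. For a bound pair $\ein a, \eout a$ one must exhibit lists $es, bs$ of box names witnessing $es.\nctx(\ein a) = \ectx(\eout a).bs$ and the symmetric identity. For edges untouched by $\Op_A$ the old witnesses still work; the real content is the freshly-named pair $\ein{\fr(a)}, \eout{\fr(a)}$ produced by $\Copy_A$ (and hence by $\Exp_A$). Here I would case-split on whether the copied box $A$ lies in $\ectx_G(\ein a)$ or in $\nctx_G(\ein a)$, and independently on whether it lies in $\ectx_G(\eout a)$ or $\nctx_G(\eout a)$ — four subcases — and in each build the new witnesses from the old ones by inserting the fresh box $\fr(A)$, or a fresh prefix of boxes, at the position dictated by Lemma~\ref{lem:contexts}. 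Discharging these four subcases completes the $\Copy_A$ case; collecting the three (equivalently four) cases then gives the theorem.
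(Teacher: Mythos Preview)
Your proposal is correct and follows essentially the same line as the paper: verify that $\Op_A(G)$ is a !-pretensor by construction, dispose of F1--F2 by freshness, and handle C1--C3 via Lemma~\ref{lem:contexts}, with the four-way case split on where $A$ sits in the contexts of $\ein a$ and $\eout a$ being the substantive part of C3 for $\Copy_A$ (and $\Exp_A$).

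The one organisational difference is that the paper proves the four cases $\Kill_A$, $\Drop_A$, $\Exp_A$, $\Copy_A$ independently as separate theorems and then simply collects them, whereas you use the decomposition $\Exp_A = \Drop_A \circ \Copy_A$ (which the paper records but does not exploit) to eliminate the $\Exp_A$ case. This is a genuine economy --- the paper's $\Exp_A$ and $\Copy_A$ proofs are near-duplicates --- but the underlying argument is the same.
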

  \begin{proof} %Sketch
    (Sketch) Shown by checking that the conditions F1-2, C1-3 still hold. F1-2 are trivial since new edges/boxes have new names. C1-3 can each be checked using Lemma \ref{lem:contexts}.
  \end{proof}

  \begin{theorem}
    If in graph $G$, box $B$ is nested inside box $A$ ($B\prec_G\ldots\prec_G A$) then we can find a freshness function $\fr'$ s.t. the following equations hold on $G$:
    \begin{align*}
      \Kill_{A} \circ \Op_{B,\fr_B} &= \Kill_{A} \\
      \Drop_{A} \circ \Op_{B,\fr_B} &= \Op_{B,\fr_B} \circ \Drop_{A} \\
      \Exp_{A,\fr_A} \circ \Op_{B,\fr_B} &= \Op_{B',\fr'} \circ \Op_{B,\fr_B} \circ \Exp_{A,\fr_A} \\
      \Copy_{A,\fr_A} \circ \Op_{B,\fr_B} &= \Op_{B',\fr'} \circ \Op_{B,\fr_B} \circ \Copy_{A,\fr_A}
    \end{align*}
  \end{theorem}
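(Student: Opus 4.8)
The plan is to prove all four identities simultaneously by a single structural induction on the !-tensor expression $G$, with a mutual induction on edgeterms. The key observation is that $\Op_B$ and each of the four $A$-operations ($\Kill_A$, $\Drop_A$, $\Exp_{A,\fr_A}$, $\Copy_{A,\fr_A}$) are defined by the same recursion that pushes through every constructor, so they commute automatically on the parts of $G$ where neither does anything of substance. First I would record the structural consequence of the hypothesis $B\prec_G\ldots\prec_G A$: together with F1--F2 and C1--C3 it confines the unique box $\term{[\ldots]B}$ and every edge group $\term{[e>B}$, $\term{<e]B}$ to the interior of the unique occurrence of $\term{[\ldots]A}$. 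Hence $\Op_B$ acts as the identity on the portion of $G$ lying outside $\term{[\ldots]A}$, and all interaction between $\Op_B$ and the $A$-operation is confined to the single step $G=\term{[P]A}$ (and, handled identically but more simply, the edge-group steps $\term{[e>A}$, $\term{<e]A}$).

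Before attacking that step I would isolate a \emph{renaming lemma}: if $\fr$ is a bijection of edge/box names whose fresh range is disjoint from $G$ and from the range of $\fr_B$, then $\fr\bigl(\Op_{B,\fr_B}(P)\bigr)\equiv\Op_{\fr(B),\,\fr\circ\fr_B\circ\fr^{-1}}\bigl(\fr(P)\bigr)$ for every subexpression $P$. This is a routine induction on $P$: a bijection commutes with every constructor, so only the three clauses $\Op_B$ of $\term{[Q]B}$, $\term{[e>B}$, $\term{<e]B}$ need checking, directly against the definitions of $\Kill_B$, $\Drop_B$, $\Exp_B$, $\Copy_B$. Now set $B':=\fr_A(B)$ and $\fr':=\fr_A\circ\fr_B\circ\fr_A^{-1}$. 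In the crux step $G=\term{[P]A}$ I would argue case by case on the $A$-operation. For $\Kill_A$ both sides collapse to $1$, since $\Kill_A\bigl(\term{[\Op_B(P)]A}\bigr)=1$. For $\Drop_A$, $\Drop_A\bigl(\term{[\Op_B(P)]A}\bigr)=\Op_B(P)=\Op_B\bigl(\Drop_A(\term{[P]A})\bigr)$, since $\Op_B$ pushes through the $A$-box, which $\Drop_A$ then discards. For $\Exp_A$, the left-hand side is the product of $\term{[\Op_B(P)]A}$ with the fresh copy $\fr_A(\Op_B(P))$ of its contents; on the right, $B$ does not occur in $\fr_A(P)$ and (by compatibility of $\fr_A$ with $\fr_B$) $B'$ does not occur in $\Op_B(P)$, so the right-hand side reduces to the product of $\term{[\Op_B(P)]A}$ with $\Op_{B'}(\fr_A(P))$, and the renaming lemma supplies $\fr_A(\Op_B(P))\equiv\Op_{B'}(\fr_A(P))$. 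The $\Copy_A$ line is verbatim the same, except that the fresh second factor is additionally enclosed in a new $A$-box with box name $\fr_A(A)$. The remaining cases---$G\in\{1,\term{1.+a-b.}\}$, $G=\phi_e$ (recurse into $e$), $G=G_1G_2$ (apply the induction hypothesis to both factors, using that all four operations distribute over products), $G=\term{[P]C}$ with $C\notin\{A,B\}$, and the parallel edgeterm cases---follow mechanically from the recursive definitions. When $\Op_B\in\{\Kill_B,\Drop_B\}$ no fresh names are introduced at all, so $\fr'$ is vacuous and the $\Exp_A$/$\Copy_A$ identities degenerate to their renaming-free forms.

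The main obstacle is the freshness bookkeeping in the $\Exp_A$ and $\Copy_A$ lines: one has to choose $\fr'$ so that the copy of the already-$\Op_B$-modified $B$-structure produced by $\fr_A$ on the left agrees name for name with the $\Op_{B'}$-image on the right, which pins down $\fr'=\fr_A\circ\fr_B\circ\fr_A^{-1}$ (extended to a full bijection arbitrarily, subject to freshness) and requires that $\fr_A$ and $\fr_B$ be chosen with disjoint fresh ranges so that no intermediate substitution clashes. Checking that this $\fr'$ is genuinely a freshness function for $\Op_{B,\fr_B}\bigl(\Exp_{A,\fr_A}(G)\bigr)$ (and likewise for $\Copy$) is the one place that needs real care; everything else is bookkeeping of the same flavour as the preceding closure theorems of this appendix, discharged with the help of Lemma~\ref{lem:contexts}.
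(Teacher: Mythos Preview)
Your proposal is correct and takes essentially the same approach as the paper's sketch: structural induction on $\graphterms$ and $\edgeterms$ with the only substantive cases being the $A$- and $B$-labelled box and edge-group constructors, and your choice $\fr'=\fr_A\circ\fr_B\circ\fr_A^{-1}$ is precisely the paper's requirement that $\fr'$ send $\fr_A(x)\mapsto\fr_A(\fr_B(x))$. One minor imprecision worth tightening: $B$-edge-groups need not all lie inside the subexpression $\term{[\ldots]A}$---they may also occur nested inside $A$-edge-groups on nodes outside box $A$---but since you already flag the $\term{[e>A}$, $\term{<e]A}$ cases for separate treatment this does not affect the argument.
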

  \begin{proof} %Sketch
    (Sketch) Induction on $\edgeterms$ and $\graphterms$. The interesting cases are $\term{[e>C}$, $\term{<e]C}$ and $\term{[G]C}$ where $C=A,B$. These can be checked explicitly. The freshness function $\fr'$ only requires that $\fr_A(x)\mapsto\fr_A(\fr_B(x))$.
  \end{proof}

	We also expect that any of our defined !-box operations preserves equality. For $\Op\in\{\Kill,\Drop,\Exp,\Copy\}$:

  \begin{prooftree}
    \AxiomC{$G=H$}
    \RightLabel{\scriptsize($\Op_A$)}
    \UnaryInfC{$\Op_A(G=H)$}
  \end{prooftree}

  We can use these rules to produce more rules such as a second product rule:
  \begin{theorem}\qquad
    \begin{prooftree}
      \AxiomC{$G=G'$}
      \RightLabel{\scriptsize(prod2)}
      \UnaryInfC{$HG=HG'$}
    \end{prooftree}
  \end{theorem}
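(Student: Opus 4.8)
The plan is to derive the (prod2) rule from the already-available (Prod) rule, using the commutativity axiom $GH \equiv HG$ for !-tensor expressions from Definition~\ref{def:tensor-equiv} together with the standing convention that $G \equiv H$ implies $G = H$. In other words, (prod2) is just (Prod) conjugated by the symmetry of the tensor product.

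First I would check the side-condition of (Prod). Whenever we write a composition $HG$ (resp.\ $HG'$) we assume it forms a well-defined !-tensor; I would observe that the well-formedness conditions F1--F2 and C1--C3 are insensitive to the order in which the two factors are written, since the axiom $GH \equiv HG$ identifies $HG$ with $GH$ as !-pretensor expressions and none of F1--F2, C1--C3 distinguishes them. Hence $GH, G'H \in \graphterms$ as well, so the instance of (Prod) with ``$K$'' taken to be $H$ applies to the hypothesis $G = G'$ and yields $GH = G'H$.

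Finally, the commutativity axiom gives $HG \equiv GH$ and $G'H \equiv HG'$, hence $HG = GH$ and $G'H = HG'$; chaining these with $GH = G'H$ by transitivity of $=$ gives
\[ HG \;=\; GH \;=\; G'H \;=\; HG', \]
which is the conclusion of (prod2). I do not expect a genuine obstacle here: the only step needing a sentence of justification is that well-definedness of a composition is symmetric in its two factors, so that the premise of (Prod) is met; everything else is an instance of (Prod) and of $\equiv\ \subseteq\ =$.
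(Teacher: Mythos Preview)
Your proof is correct and matches the paper's own argument: apply (Prod) with $K := H$ to obtain $GH = G'H$, then invoke the commutativity equivalence $GH \equiv HG$ (and $\equiv\,\subseteq\,=$) to conclude $HG = HG'$. The paper presents this as a two-line derivation labeled (prod) then (equivalence); your additional remark that well-formedness of a composition is symmetric in its factors is a reasonable elaboration of the implicit side-condition.
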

  \begin{proof}
    \begin{prooftree}
      \AxiomC{$G=G'$}
      \RightLabel{\scriptsize(prod)}
      \UnaryInfC{$GH=G'H$}
      \RightLabel{\scriptsize(equivalence)}
      \UnaryInfC{$HG=HG'$}
    \end{prooftree}
  \end{proof}

  We can also derive a stronger weakening rule:
  \begin{theorem}\qquad
    \begin{prooftree}
      \AxiomC{$G=G'$}
      \AxiomC{$K=K'$}
      \RightLabel{\scriptsize(Weak')}
      \BinaryInfC{$\Weak_{A\to K}(G)=\Weak_{A\to K'}(G')$}
    \end{prooftree}
  \end{theorem}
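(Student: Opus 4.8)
(Proof plan.) The plan is to derive \textbf{(Weak')} from the already-established \textbf{(Weaken)} rule together with a short structural induction that pushes the second hypothesis $K=K'$ through a weakening. First, applying \textbf{(Weaken)} to $G=G'$ gives $\Weak_{A\mapsto K}(G)=\Weak_{A\mapsto K}(G')$. By transitivity of $=$ it then suffices to prove the auxiliary claim
\[
K=K' \ \Longrightarrow\ \Weak_{A\mapsto K}(F)=\Weak_{A\mapsto K'}(F)
\]
for every !-tensor expression $F$ on which both weakenings are defined, and then to instantiate it at $F=G'$ and chain the two equalities.

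I would prove the auxiliary claim by induction on the structure of $F$, following the recursive clauses defining $\Weak$. In the base cases ($F$ the trivial tensor, an identity tensor, or an atomic tensor $\phi_e$) both weakenings act as the identity, so the two sides are literally the same expression and $F\equiv F$ gives $F=F$. For $F=F_1F_2$ we have $\Weak_{A\mapsto K}(F)=\Weak_{A\mapsto K}(F_1)\,\Weak_{A\mapsto K}(F_2)$; applying \textbf{(Prod)} to the induction hypothesis for $F_1$ and then \textbf{(prod2)} to the induction hypothesis for $F_2$ rewrites this into $\Weak_{A\mapsto K'}(F_1)\,\Weak_{A\mapsto K'}(F_2)=\Weak_{A\mapsto K'}(F)$. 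For $F=\term{[H]B}$ with $B\neq A$, $\Weak$ commutes with the box, so \textbf{(Box)} applied to the induction hypothesis for $H$ gives the claim. Finally, for $F=\term{[H]A}$ the two weakenings differ only in that the box contents are $HK$ versus $HK'$; from $K=K'$ the rule \textbf{(prod2)} yields $HK=HK'$, and \textbf{(Box)} lifts this to the desired equation between the boxed terms.

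The one point that needs genuine care --- and the step I expect to be the main obstacle --- is the well-formedness of the intermediate !-tensor expressions, so that the side-conditions on \textbf{(Prod)}, \textbf{(prod2)} and \textbf{(Box)} are actually met; the touchiest instance is the `mixed' expression $\Weak_{A\mapsto K'}(F_1)\,\Weak_{A\mapsto K}(F_2)$ arising in the product case. This is legitimate precisely because $K=K'$, being a valid !-tensor equation, forces $K$ and $K'$ to have compatible boundaries, so inserting either one into the box $A$ produces terms with identical free edge names, contexts and nesting; hence each such composition is still a well-defined !-tensor, and the same bookkeeping that justifies applying \textbf{(Weaken)} and the product rules at all covers these intermediate terms. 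Granting this, the auxiliary claim at $F=G'$ combined with $\Weak_{A\mapsto K}(G)=\Weak_{A\mapsto K}(G')$ gives, by transitivity, $\Weak_{A\mapsto K}(G)=\Weak_{A\mapsto K'}(G')$, which is exactly \textbf{(Weak')}.
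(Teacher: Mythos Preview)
Your proposal is correct and follows essentially the same route as the paper: both apply \textbf{(Weaken)} to $G=G'$ and then prove the auxiliary claim $K=K'\Rightarrow\Weak_{A\mapsto K}(F)=\Weak_{A\mapsto K'}(F)$ by structural induction on $F$, with the same case split and the same use of \textbf{(Prod)}/\textbf{(prod2)} and \textbf{(Box)}, concluding by transitivity. If anything you are a bit more careful than the paper in the product case (the paper writes a single binary \textbf{(prod)} step where your decomposition into \textbf{(Prod)}+\textbf{(prod2)}+transitivity is what is actually meant) and in flagging the well-formedness side-conditions.
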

  \begin{proof}
    Proof by induction on the complexity of $G'$ that
    \begin{prooftree}
      \AxiomC{$K=K'$}
      \RightLabel{\scriptsize}
      \UnaryInfC{$\Weak_{A\to K}(G')=\Weak_{A\to K'}(G')$}
    \end{prooftree} then transitivity gives the desired result.
    \begin{itemize}
      \item $G'=\phi_e,1$ trivial
      \item $G'=\gbox{H}{B}$ with $B\not=A$
      \begin{prooftree}
        \AxiomC{$K=K'$}
        \RightLabel{\scriptsize(IH)}
        \UnaryInfC{$\Weak_{A\to K}(H)=\Weak_{A\to K'}(H)$}
        \RightLabel{\scriptsize(Box)}
        \UnaryInfC{$\gbox{\Weak_{A\to K}(H)}{B}=\gbox{\Weak_{A\to K'}(H)}{B}$}
        \RightLabel{\scriptsize($\equiv$)}
        \UnaryInfC{$\Weak_{A\to K}(\gbox{H}{B})=\Weak_{A\to K'}(\gbox{H}{B})$}
      \end{prooftree}
      \item $G'=\term{[H]A}$
      \begin{prooftree}
        \AxiomC{$K=K'$}
        \RightLabel{\scriptsize(prod)}
        \UnaryInfC{$HK=HK'$}
        \RightLabel{\scriptsize(Box)}
        \UnaryInfC{$\gbox{HK}{A}=\gbox{HK'}{A}$}
        \RightLabel{\scriptsize($\equiv$)}
        \UnaryInfC{$\Weak_{A\to K}(\term{[H]A})=\Weak_{A\to K'}(\term{[H]A})$}
      \end{prooftree}
      \item $G'=H_1H_2$ \\
      Writing $W_K$ for $\Weak_{A\to K}$:
      \begin{prooftree}
        \AxiomC{$K=K'$}
        \RightLabel{\scriptsize(IH)}
        \UnaryInfC{$W_K(H_1)=W_{K'}(H_1)$}
        \AxiomC{$K=K'$}
        \RightLabel{\scriptsize(IH)}
        \UnaryInfC{$W_K(H_2)=W_{K'}(H_2)$}
        \RightLabel{\scriptsize(prod)}
        \BinaryInfC{$W_K(H_1H_2)=W_{K'}(H_1H_2)$}
      \end{prooftree}
    \end{itemize}
  \end{proof}

	\begin{theorem}
		Let $\fr$ be a freshness function for two !-tensor expressions $G, H$. Then $G\equiv H$ implies $\Op_{B,\fr}(G) \equiv \Op_{B,\fr}(H)$.
	\end{theorem}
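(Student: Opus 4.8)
The plan is to prove the claim by induction on the derivation of the equivalence $G \equiv H$---equivalently, by verifying that $\Op_{B,\fr}$ (together with the action of $\Op_B$ on edgeterms) sends each generating equivalence of $\equiv$ to an equivalence, inside an arbitrary context. The corresponding statement for $\Op \in \{\Exp, \Kill\}$ has already been established, and since neither $\Kill$ nor $\Drop$ introduces fresh names while $\Copy$ behaves just like $\Exp$ (each appends a single $\fr$-renamed copy of a $B$-labelled box's contents, $\Copy$ under a fresh box name), it suffices---and is cleanest---to run the induction uniformly, treating $\Op_B$ abstractly as a !-box operation in the sense already defined. Then only the three $B$-labelled cases $\term{[P]B}$, $\term{[e>B}$, $\term{<e]B}$ require per-operation attention.

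First I would record a preliminary lemma: $\equiv$ is equivariant under bijective renaming, i.e.\ for any pair of bijections $\sigma$ on $\edgenames$ and $\boxnames$, $G \equiv H$ implies $\sigma(G) \equiv \sigma(H)$, and likewise on edgeterms. This is immediate, since each generating identity of $\equiv$ is stable under $\sigma$---for instance $\sigma$ applied to an identity-tensor contraction $P\term{1.+b-a.} \equiv P[\ein{b} \mapsto \ein{a}]$ is again an instance of that rule, because $\sigma$ commutes with edge substitution---and renaming commutes with congruence closure. In particular $P \equiv Q$ gives $\fr(P) \equiv \fr(Q)$.

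The induction then has two kinds of cases. For a congruence step---an equivalence applied inside a subterm---I use that $\Op_B$ commutes with products $GH$, with atomic tensors (acting on their edgeterm), with concatenation $ef$ of edgeterms, and with $\term{[P]A}$, $\term{[e>A}$, $\term{<e]A}$ for $A \neq B$; in each of these the induction hypothesis and the congruence property of $\equiv$ close the case, and a freshness function for $G, H$ is automatically one for the relevant subterms. The three remaining $B$-labelled congruence cases are trivial for $\Kill$ and $\Drop$ (the output is a fixed expression or the bare box contents); for $\Exp$ and $\Copy$ the output of $\term{[P]B}$ is $\term{[P]B\fr(P)}$ or $\term{[P]B[\fr(P)]{\fr(B)}}$, so that $P \equiv Q$, the congruence property of $\equiv$, and the preliminary lemma ($\fr(P) \equiv \fr(Q)$) together make the outputs $\equiv$. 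For the axiom cases: the product laws, and the edgeterm laws for $\epsilon$ and concatenation, are immediate from $\Op_B$ being a homomorphism for those constructors; the laws $\term{[\epsilon>A} \equiv \epsilon \equiv \term{<\epsilon]A}$ reduce to a one-line computation for each operation, including the $A = B$ subcase; and the identity-tensor contractions use well-formedness---conditions C1--C3 force the contracted free half-edge $\ein{b}$ to occur at top level, outside every !-box and in particular outside $B$, so $\Op_B$ fixes $\term{1.+b-a.}$ and commutes with $\ein{b} \mapsto \ein{a}$, giving $\Op_B(P\term{1.+b-a.}) = \Op_B(P)\term{1.+b-a.} \equiv \Op_B(P)[\ein{b} \mapsto \ein{a}] = \Op_B(P[\ein{b} \mapsto \ein{a}])$. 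Bound-name renaming steps are likewise fine: well-formedness and Lemma~\ref{lem:contexts} ensure that $\Op_B$ sends a bound edge pair to a bound edge pair (both the original and, for $\Exp$ and $\Copy$, its fresh copy), so renaming before or after $\Op_{B,\fr}$ differs only by a further bound renaming.

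The step I expect to be the main obstacle is not any individual case but keeping the freshness data coherent across the whole derivation: a transitivity step $G \equiv K \equiv H$ needs the induction hypothesis at the intermediate $K$, hence needs $\fr$ to be a freshness function for $K$, and one cannot merely re-pick $\fr$, because for the copying operations $\Exp$ and $\Copy$ the value $\Op_{B,\fr}(-)$ genuinely depends on $\fr$ (distinct choices can produce expressions with distinct free edge names, hence non-$\equiv$ results). The resolution is to use exactly the hypothesis built into the statement---that $\fr$ is a freshness function for $G$ and $H$ \emph{simultaneously}, the very point stressed in the $\Exp$/$\Kill$ version---together with the freedom to choose the bound names introduced along the derivation fresh for the finitely many relevant names, so that one may fix a witnessing derivation all of whose intermediate expressions are legal !-tensor expressions for which $\fr$ remains a freshness function; the induction above then goes through verbatim.
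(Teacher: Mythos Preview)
Your proposal is correct and follows essentially the same strategy as the paper's proof: verify that each generating $\equiv$-identity (unit, associativity/commutativity of products, the edgeterm identities, the identity-tensor contraction, and bound renaming) is preserved by $\Op_B$, relying on $\Op_B$ being a homomorphism for the remaining constructors. Your treatment is in fact more thorough than the paper's sketch---you make the congruence and transitivity steps explicit, supply the equivariance lemma $P \equiv Q \Rightarrow \fr(P) \equiv \fr(Q)$, and give the C1--C3 argument forcing the contracted half-edge in $G\,\term{1.+b-a.}$ to lie at top level, points the paper either elides or (in the bound-renaming bullet) leaves blank.
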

  \begin{proof}
    We need to check our enforced equivalences still hold after $\Op_B$:
    \begin{itemize}
      \item $\Op_B(1G)\equiv\Op_B(1)\Op_B(G)\equiv1\Op_B(G)\equiv\Op_B(G)\equiv\ldots\equiv\Op_B(G1)$
      \item $\Op_B(\epsilon e)\equiv\Op_B(\epsilon)\Op_B(e)\equiv\epsilon\Op_B(e)\equiv\Op_B(e)\equiv\ldots\equiv\Op_B(e\epsilon)$
      \item $\term{\Op_B([\epsilon>A)}\equiv\term{[\Op_B(\epsilon)>A}\equiv\term{[\epsilon>A}\equiv\ldots\equiv\term{\Op_B(<\epsilon]A)}$ \quad if $A\not\equiv B$ \\
      When $A\equiv B$ we need to check each operation individually but they are all easy to show
      \item For products:
        \begin{align*}
          \Op_A(G H) &\equiv \Op_A(G)\Op_A(H) \\
          &\equiv \Op_A(H)\Op_A(G) \\
          &\equiv \Op_A(H G)
        \end{align*}
      \item Given $a\in\Bound(G)$ and $b\notin\Edges(G)$
      	%Todo
        % \begin{align*}
        %   \Op(G[a\mapsto b]) &\equiv \Op(G)[a\mapsto b,\fr(a)\mapsto \fr(b)] \\
        %   &\equiv \Op(G) \\
        %   &\equiv \Op(G)
        % \end{align*}
    \end{itemize}
  \end{proof}

  \begin{theorem}
    If $A\not=B$ are not nested inside each other then any !-box operations $\Op_A, \Op'_B$ commute.
  \end{theorem}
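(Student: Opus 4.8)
(Plan.) The plan is to prove $\Op_A(\Op'_B(G)) \equiv \Op'_B(\Op_A(G))$ by mutual structural induction on !-tensor expressions $G \in \graphterms$ and on edgeterms, for fixed but suitably-chosen freshness functions (see below). All of the \textit{pass-through} constructors are routine: for $G = G_1 G_2$, for an atomic tensor $\phi_e$, for $1$ and $\term{1.+a-b.}$, and for $\term{[H]C}$, $\term{[e>C}$, $\term{<e]C}$ with $C \notin \{A,B\}$ (and likewise $\epsilon$, $\ein a$, $\eout a$, $e_1 e_2$ on edgeterms), both $\Op_A$ and $\Op'_B$ commute with the constructor, so the two sides are assembled from the recursive results and the inductive hypothesis closes the case immediately.

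The real work is in the cases where the head constructor carries box name $A$ (those carrying $B$ are symmetric): $\term{[H]A}$, $\term{[e>A}$, $\term{<e]A}$. Here the hypothesis that $A$ and $B$ are not nested is precisely what is needed. First I would establish the small lemma that, in a well-formed !-tensor, none of $\term{[G]B}$, $\term{[e>B}$, $\term{<e]B}$ can occur inside the box $A$: an occurrence of $\term{[G]B}$ inside $A$ would mean $B$ is nested in $A$ by definition, while an edge-group wrapper $\term{[e>B}$ or $\term{<e]B}$ inside $A$ would place $B$ in the edge-context of an edge whose node-context (or edge-context) contains $A$, and then C1 together with the linear ordering forced by C2 yields $B \prec_G \ldots \prec_G A$, contradicting non-nesting. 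Consequently $\Op'_B$ acts as the identity on $H$ (resp.\ on $e$) by an easy sub-induction using only the pass-through clauses, and also on the fresh copy $\fr(H)$ created by $\Op_A$: a freshness function $\fr$ for the ambient !-tensor $G$ satisfies $\Boxes(G) \cap \fr(\Boxes(G)) = \varnothing$, so (assuming $B \in \Boxes(G)$, else there is nothing to prove) $\fr^{-1}(B) \notin \Boxes(H)$ and $\fr(A) \neq B$. Granting this, each of the four possibilities for $\Op_A$ is dispatched by a one-line computation: for $\Kill_A$ both sides equal $1$; for $\Drop_A$ both equal $\Op'_B(H) = H$; for $\Exp_A$ we get $\Op_A(\Op'_B(\term{[H]A})) = \Op_A(\term{[H]A}) = \term{[H]A\fr(H)}$ and $\Op'_B(\Op_A(\term{[H]A})) = \Op'_B(\term{[H]A\fr(H)}) = \term{[H]A\fr(H)}$; and $\Copy_A$ is the same, using $\fr(A) \neq B$. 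The edgeterm cases $\term{[e>A}$ and $\term{<e]A}$ are identical with $e$ in place of $H$.

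The one genuinely delicate point, and what I expect to be the main obstacle, is the bookkeeping of freshness functions: each of $\Op_A$ and $\Op'_B$ carries its own, and we must ensure that the names introduced by one neither collide with those introduced by the other nor recreate an occurrence of $B$ inside a freshly-copied $A$-box (or vice versa). As in the preceding theorem on nested boxes, I would not claim commutation for arbitrary freshness functions but would choose them compatibly, e.g.\ with mutually disjoint images, both disjoint from $\Boxes(G) \cup \Edges(G)$, after which the displayed equalities hold literally and hence descend to $\equiv$-classes of !-tensors via the earlier result that $\Op_{B,\fr}$ respects $\equiv$. Everything else is bookkeeping; the only step requiring genuine care is the small lemma above, where conditions C1 and C2 do the work.
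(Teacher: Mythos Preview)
Your proposal is essentially a fully worked-out version of the paper's own proof, which is just the one-line sketch ``Each operation only affects a graph locally at their respective boxes which has no effect on the other box.'' The structural induction on $\graphterms$ and $\edgeterms$ that you describe is exactly how one makes that locality claim precise, and your treatment of the freshness functions (choosing them with mutually disjoint images so that neither operation manufactures an occurrence of the other's box name) is the right bookkeeping move and is in the same spirit as the paper's handling of freshness in the preceding nested-box theorem.

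One point worth tightening: your small lemma that no $B$-labelled constructor can occur inside $\term{[H]A}$ is argued from C1 and C2, but as stated those conditions only force the edge-context to be a $\prec_G$-chain disjoint from the node-context; they do not on their own yield $B\prec_G\cdots\prec_G A$ when $B$ sits in $\ectx_G(a)$ and $A$ in $\nctx_G(a)$ for the same edge $a$. The paper is no more careful here --- it simply asserts locality --- so you are not diverging from it, but if you want the induction to be airtight you should either invoke C3 for bound edges or argue separately (from the intended graphical semantics) that a free edge cannot have two unrelated boxes in its full context.
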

  \begin{proof}
    (Sketch) Each operation only affects a graph locally at their respective boxes which has no effect on the other box.
  \end{proof}

  \begin{corollary}
    Any string of operations on a graph $G$ can be written in a normal form where all top level box operations (ordered by !-box name) are applied first then all operations on first level children (ordered by !-box name) and so on.
  \end{corollary}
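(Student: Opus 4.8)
The plan is to normalise an arbitrary composite of !-box operations by an \emph{exchange argument}, using only the two rewrite moves supplied just above: operations on boxes that are not nested inside one another commute (the commutation theorem), and for nested boxes $B\prec_G\cdots\prec_G A$ the four identities of the nesting theorem pull the outer operation $\Op_A$ to be applied earlier, at the cost of either deleting the inner operation (the $\Kill_A$ identity), leaving it alone (the $\Drop_A$ identity), or spawning one extra copy of it on a fresh box (the $\Exp_A$ and $\Copy_A$ identities). I would attach to each operation $\Op_C$ in the string the key $(\ell(C),C)$, where $\ell(C)$ is the nesting level of $C$, adopting the convention that a box arising during rewriting as a copy $\fr(D)$ of a box $D$ is assigned $D$'s level --- this convention is forced by the geometry, since a copy of $D$ produced by expanding an \emph{ancestor} of $D$ is physically one level shallower in the resulting graph but plays $D$'s structural role. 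The target normal form is then exactly the string whose keys, read from ``applied first'' to ``applied last'', are non-decreasing for the lexicographic order on $(\text{level},\text{name})$, with repeated operations on the same box left in their original relative order; this coincides with the form in the statement (all top-level operations first, in name order, then the first-level ones, and so on).

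The core is a single exchange step. Suppose $\Op_D$ is applied immediately before $\Op_A$ but $(\ell(A),A)<(\ell(D),D)$. If $\ell(A)=\ell(D)$ then $A$ and $D$ are necessarily parallel (equal nesting level forbids containment), so the commutation theorem swaps the pair with no change to the multiset of operations. If $\ell(A)<\ell(D)$ then either $A$ and $D$ are parallel, and again commutation applies, or $A$ is a strict ancestor of $D$, and then exactly one of the four nesting identities applies, moving $\Op_A$ earlier and, in the $\Exp_A$ and $\Copy_A$ cases, inserting a new operation $\Op_{D'}$ on the fresh box $D'=\fr_A(D)$ immediately after $\Op_D$. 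Each such step is applied to the relevant intermediate graph and, by the two theorems, preserves the value of the composite; a routine ``bubble sort'' check shows that a string with no out-of-order adjacent pair is globally in normal form. So everything reduces to showing the rewriting terminates.

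Termination is the main obstacle, and the difficulty is precisely the duplication in the $\Exp_A$ and $\Copy_A$ cases. The facts I would exploit are: (i) the spawned box $D'=\fr_A(D)$ lands as a \emph{sibling} of $A$, hence is parallel to $A$ and never again obstructs $\Op_A$; (ii) for $\Exp_A$ the copy $D'$ ends up one level shallower than $D$, and $\Op_{D'}$ is produced immediately after $\Op_D$, so it enters the string essentially inside its target block --- whereas for $\Copy_A$ the copy stays at the same level, which is exactly why the termination measure must be chosen with care; (iii) $\Kill_A$ strictly shortens the string; and (iv) for a fixed ancestor box $A$, no rewrite ever increases the number of operations on strict descendants of $A$ that precede $\Op_A$, since by (i) the spawned copies are not descendants of $A$. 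From (iv), bubbling any single operation to its place terminates, since the number of preceding operations whose keys violate its own strictly decreases at each step. For global termination I would organise the exchange outermost-level-first: bring all level-$0$ operations to the front in name order (finite by (iv), even though it may create finitely many fresh shallow level-$0$ operations, kept in order by choosing the freshness functions so that copied names sort after the originals), then recurse inside each top-level box. The genuinely delicate point, which I expect to absorb most of the real work, is assembling (i)--(iv) into a single well-founded measure handling all nesting-identity applications at once; I anticipate that a lexicographic measure scanning levels upward from $0$ (counting, at each level, the out-of-order operations) will work, but the care lies in verifying that the spawned operations never increase it. Finally, soundness of every step is guaranteed by the commutation and nesting theorems, so the terminal string is a normal form computing the same graph as the original.
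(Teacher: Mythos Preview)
Your proposal is correct and ultimately lands on the same argument as the paper: pull all operations on top-level boxes to the front using the commutation and nesting theorems (noting that the spawned operations $\Op_{D'}$ never sit inside $A$ and so do not obstruct further movement of $\Op_A$), then recurse on the remaining string. The paper's sketch simply begins there, without the bubble-sort/key framework or the search for a global well-founded measure; the termination worry you flag is dissolved by exactly the observation you record as (iv), which is why the paper is content to say ``this only creates finitely many extra operations'' and recurse level-by-level.
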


  \begin{proof}
    (Sketch) Take the first top level box, $A$ in $G$ then use the previous theorems to move operations on $A$ to be first. This only creates finitely many extra operations on boxes nested in $A$. Applying the same to the rest of the top level boxes we get them all first and in order so we can ignore them and recursively normalise the order on what is left.
  \end{proof}

\fi

%\newpage

\vspace{2cm}

{\small
\bibliographystyle{eptcs}
\bibliography{bibfile}

\begin{thebibliography}{10}
\providecommand{\bibitemdeclare}[2]{}
\providecommand{\surnamestart}{}
\providecommand{\surnameend}{}
\providecommand{\urlprefix}{Available at }
\providecommand{\url}[1]{\texttt{#1}}
\providecommand{\href}[2]{\texttt{#2}}
\providecommand{\urlalt}[2]{\href{#1}{#2}}
\providecommand{\doi}[1]{doi:\urlalt{http://dx.doi.org/#1}{#1}}
\providecommand{\bibinfo}[2]{#2}

\bibitemdeclare{article}{DixonDuncan2009}
\bibitem{DixonDuncan2009}
\bibinfo{author}{Lucas \surnamestart Dixon\surnameend} \& \bibinfo{author}{Ross
  \surnamestart Duncan\surnameend} (\bibinfo{year}{2009}):
  \emph{\bibinfo{title}{{Graphical Reasoning in Compact Closed Categories for
  Quantum Computation}}}.
\newblock {\sl \bibinfo{journal}{AMAI}}
  \bibinfo{volume}{56}(\bibinfo{number}{1}), p.~\bibinfo{pages}{20},
  \doi{10.1017/S0305004100074338}.

\bibitemdeclare{article}{DixonKissinger2010}
\bibitem{DixonKissinger2010}
\bibinfo{author}{Lucas \surnamestart Dixon\surnameend} \&
  \bibinfo{author}{Aleks \surnamestart Kissinger\surnameend}
  (\bibinfo{year}{2013}): \emph{\bibinfo{title}{Open-graphs and monoidal
  theories}}.
\newblock {\sl \bibinfo{journal}{Mathematical Structures in Computer Science}}
  \bibinfo{volume}{23}, pp. \bibinfo{pages}{308--359},
  \doi{10.1017/S0960129512000138}.
\newblock \bibinfo{note}{{a}rXiv:1007.3794v1~[cs.LO]}.

\bibitemdeclare{article}{JS}
\bibitem{JS}
\bibinfo{author}{Andre \surnamestart Joyal\surnameend} \& \bibinfo{author}{Ross
  \surnamestart Street\surnameend} (\bibinfo{year}{1991}):
  \emph{\bibinfo{title}{{The geometry of tensor calculus {I}}}}.
\newblock {\sl \bibinfo{journal}{Advances in Mathematics}}
  \bibinfo{volume}{88}, pp. \bibinfo{pages}{55--113},
  \doi{10.1016/0001-8708(91)90003-P}.

\bibitemdeclare{article}{JSV}
\bibitem{JSV}
\bibinfo{author}{Andre \surnamestart Joyal\surnameend}, \bibinfo{author}{Ross
  \surnamestart Street\surnameend} \& \bibinfo{author}{Dominic \surnamestart
  Verity\surnameend} (\bibinfo{year}{1996}): \emph{\bibinfo{title}{{Traced
  Monoidal Categories}}}.
\newblock {\sl \bibinfo{journal}{Math. Proc. Camb. Phil. Soc.}}
  \bibinfo{volume}{119}(\bibinfo{number}{3}), pp. \bibinfo{pages}{447--468},
  \doi{10.1017/S0305004100074338}.

\bibitemdeclare{phdthesis}{KissingerThesis}
\bibitem{KissingerThesis}
\bibinfo{author}{Aleks \surnamestart Kissinger\surnameend}
  (\bibinfo{year}{2011}): \emph{\bibinfo{title}{{Pictures of Processes:
  Automated Graph Rewriting for Monoidal Categories and Applications to Quantum
  Computing}}}.
\newblock Ph.D. thesis, \bibinfo{school}{University of Oxford}.
\newblock \bibinfo{note}{{a}rXiv:1203.0202 [math.CT]}.

\bibitemdeclare{incollection}{KissingerATS}
\bibitem{KissingerATS}
\bibinfo{author}{Aleks \surnamestart Kissinger\surnameend}
  (\bibinfo{year}{2014}): \emph{\bibinfo{title}{Abstract Tensor Systems as
  Monoidal Categories}}.
\newblock In \bibinfo{editor}{C~\surnamestart Casadio\surnameend},
  \bibinfo{editor}{B~\surnamestart Coecke\surnameend},
  \bibinfo{editor}{M~\surnamestart Moortgat\surnameend} \&
  \bibinfo{editor}{P~\surnamestart Scott\surnameend}, editors: {\sl
  \bibinfo{booktitle}{Categories and Types in Logic, Language, and Physics:
  Festschrift on the occasion of Jim Lambek's 90th birthday}}, {\sl
  \bibinfo{series}{Lecture Notes in Computer Science}} \bibinfo{volume}{8222},
  \bibinfo{publisher}{Springer}, \doi{10.1007/978-3-642-54789-8\_13}.
\newblock \bibinfo{note}{{arXiv:1308.3586 [math.CT]}}.

\bibitemdeclare{inproceedings}{PatternGraph}
\bibitem{PatternGraph}
\bibinfo{author}{Aleks \surnamestart Kissinger\surnameend},
  \bibinfo{author}{Alex \surnamestart Merry\surnameend} \&
  \bibinfo{author}{Matvey \surnamestart Soloviev\surnameend}
  (\bibinfo{year}{2012}): \emph{\bibinfo{title}{Pattern Graph Rewrite
  Systems}}.
\newblock In: {\sl \bibinfo{booktitle}{Proceedings of DCM 2012}}, {\sl
  \bibinfo{series}{EPTCS}} \bibinfo{volume}{143}, \doi{10.4204/EPTCS.143.5}.
\newblock \bibinfo{note}{{a}rXiv:1204.6695 [math.CT]}.

\bibitemdeclare{article}{LackAdh2005}
\bibitem{LackAdh2005}
\bibinfo{author}{Stephen \surnamestart Lack\surnameend} \&
  \bibinfo{author}{Pawel \surnamestart Sobocinski\surnameend}
  (\bibinfo{year}{2005}): \emph{\bibinfo{title}{{Adhesive and quasiadhesive
  categories}}}.
\newblock {\sl \bibinfo{journal}{Theoretical Informatics and Applications}}
  \bibinfo{volume}{39}(\bibinfo{number}{2}), pp. \bibinfo{pages}{522--546},
  \doi{10.1051/ita:2005028}.

\bibitemdeclare{article}{LaudaPfeiffer}
\bibitem{LaudaPfeiffer}
\bibinfo{author}{Aaron~D. \surnamestart Lauda\surnameend} \&
  \bibinfo{author}{Hendryk \surnamestart Pfeiffer\surnameend}
  (\bibinfo{year}{2008}): \emph{\bibinfo{title}{Open-closed strings:
  Two-dimensional extended TQFTs and Frobenius algebras}}.
\newblock {\sl \bibinfo{journal}{Topology Appl.}}
  \bibinfo{volume}{155}(\bibinfo{number}{7}), pp. \bibinfo{pages}{623--666},
  \doi{10.1016/j.topol.2007.11.005}.

\bibitemdeclare{phdthesis}{MerryThesis}
\bibitem{MerryThesis}
\bibinfo{author}{Alexander \surnamestart Merry\surnameend}
  (\bibinfo{year}{2014}): \emph{\bibinfo{title}{Reasoning with !-Graphs}}.
\newblock Ph.D. thesis, \bibinfo{school}{University of Oxford}.

\bibitemdeclare{incollection}{Penrose1971}
\bibitem{Penrose1971}
\bibinfo{author}{R.~\surnamestart Penrose\surnameend} (\bibinfo{year}{1971}):
  \emph{\bibinfo{title}{Applications of negative dimensional tensors}}.
\newblock In: {\sl \bibinfo{booktitle}{Combinatorial Mathematics and its
  Applications}}, \bibinfo{publisher}{Academic Press}, pp.
  \bibinfo{pages}{221--244}.

\end{thebibliography}
}

\end{document}